\newtheoremstyle{TheoremStyle}
{3pt}
{3pt}
{\slshape}
{}
{\sc}
{:}
{.5em}
{}
\theoremstyle{TheoremStyle}
\newtheorem{theorem}{\bf Theorem}
\newtheorem{corollary}[theorem]{\bf Corollary}
\newtheorem{proposition}[theorem]{\bf Proposition}
\newtheorem{lemma}[theorem]{\bf Lemma}
\newtheorem{definition}[theorem]{\bf Definition}
\newtheorem{remark}[theorem]{\bf Remark}
\newcommand{\bR}{{\mathbb{R}}}
\newcommand{\bC}{{\mathbb{C}}}
\newcommand{\bN}{{\mathbb{N}}}
\title{Feynman path integrals on compact Lie groups with bi-invariant Riemannian metrics}
\author[a,b]{N. Drago\thanks{\href{mailto:nicolo.drago@unige.it}{nicolo.drago@unige.it}}}
\author[a]{S. Mazzucchi\thanks{\href{mailto:sonia.mazzucchi@unitn.it}{sonia.mazzucchi@unitn.it}}}
\author[a]{V. Moretti\thanks{\href{mailto:valter.moretti@unitn.it}{valter.moretti@unitn.it}}}
\affil[a]{Dipartimento di Matematica, Universit\`a di Trento and INFN-TIFPA\\Via Sommarive 14, I-38123 Povo (Trento), Italy}
\affil[b]{Dipartimento di Matematica, Universit\`{a} di Genova and INdAM, Via Dodecaneso 35, I-16146 Genova, Italy}
\begin{document}
\maketitle

\begin{abstract}
\noindent
In this work we  consider a suitable generalization of the Feynman path integral on a specific class of Riemannian manifolds consisting of compact Lie groups with bi-invariant Riemannian metrics. The main tools we use are the Cartan development map, the notion of oscillatory integral and the Chernoff approximation theorem. We prove that, for a  class of functions of a dense subspace of the relevant Hilbert space, the Feynman map produces the solution of the Schr\"odinger equation, where the Laplace-Beltrami operator coincides with the second order Casimir operator of the group. 
\end{abstract}
\tableofcontents

\section{Introduction}

Since their introduction, Feynman path integrals have always been both a powerful quantization  tool and a source of challenging mathematical problems. They appeared for the first time in \cite{FeyT,Fey}, where an alternative Lagrangian formulation of time evolution in quantum mechanics was introduced. According to Feynman's proposal, the solution of the
Schr\"odinger equation
\begin{equation} \label{Schroedinger}
\begin{dcases}
    i\hbar\frac{\partial}{\partial t}\psi(t,x)
    =-\frac{\hbar^2}{2m}\Delta \psi (t,x)+V(x)\psi(x)
    \\
    \psi (0,x)=\psi _0(x), \qquad \psi_0\in C^\infty_0(\bR^{d})
    \end{dcases}\,,
\end{equation}
should be given  by an heuristic integral of the following form:
\begin{equation} \label{feynman}
\psi (t,x)=\quad {}^{``}\quad C^{-1}\int _{ \Gamma
}e^{\frac{i}{\hbar}S (\gamma)}\psi _0(\gamma (0))\mathrm{d}\gamma\quad{}^{"}
 \end{equation}
where $\Gamma$ denotes a set of paths $\gamma :[0,t]\to \bR^{d}$ with fixed end point $\gamma (t)=x$, $\mathrm{d}\gamma $ stands for a Lebesgue-type measure on $\Gamma$ while the function $S:\Gamma \to \bR$ denotes the classical action functional of the system, namely $$S (\gamma)=S^0(\gamma)-\int _0^tV(\gamma(s))\mathrm{d}s, \qquad S^0(\gamma)=\frac{m}{2}\int _0^t\vert\dot\gamma(s)\vert^2\mathrm{d}s. $$
Finally, the symbol  $C$ in \eqref{feynman} plays the role of a normalization constant.
In what follows we will set $m=1$ without loss of generality.

Feynman's formula \eqref{feynman}, as it stands, lacks of a sound mathematical meaning. Indeed the Lebesgue-type measure $\mathrm{d}\gamma $ on the infinite dimensional path space $\Gamma$ as well  the - actually infinite - normalization constant $C$ cannot be properly defined (see e.g. \cite{Ma-book} for a detailed discussion of these and related issues). The common interpretation of \eqref{feynman} is in terms of the limit of a suitable approximating sequence. By considering an equally spaced partition of the interval $[0,t]$ into $n$ subintervals $[jt/n,(j+1)t/n)]$, $j=1,\dots ,n-1$, and by restricting to the space of piecewise-linear paths 
with constant velocity along the partition subintervals, then the (heuristic infinite-dimensional) integral on the right hand side of \eqref{feynman} can be approximated by a finite-dimensional integral over the endpoints $x_j\equiv \gamma (jt/n)$ of the piecewise linear paths:
\begin{equation}\label{approx-fey0}
(2\pi i \hbar t/n)^{-nd/2}\int_{\bR^{nd}} e^{\frac{i}{\hbar}\sum_{j=1}^{n}\Big(\frac{(x_j-x_{j-1})^2}{2(t/n)^2}-V(x_j)\Big)\frac{t}{n}}\psi_0(x_0)\mathrm{d}x_0\dots \mathrm{d}x_{n-1}\,.
\end{equation}
According to Feynman intuition, when $n\to\infty $ the  sequence \eqref{approx-fey0}  converges to the solution $\psi (t,x)$ of Equation \eqref{Schroedinger}. This result can be proved (see e.g. \cite{Fuj,NicTra}) under suitable assumptions on the potential $V$, yet leaving open  the issue of the rigorous construction of path integrals \eqref{feynman} within Lebesgue integration theory. This problem
soon attracted the attention of the mathematical community, triggering the study  of the relation between  partial differential equations, stochastic processes and probability measures on path spaces. In particular, Feynman's idea inspired the proof of the {\it Feynman-Kac formula} \cite{Kac49,Kac51}, i.e. a representation for the solution of the heat equation in terms of an integral with respect to the Wiener probability measure over the space of continuous paths. The early attempts to extend Feynman-Kac formula to the Schr\"odinger equation and to realize an elusive "Feynman measure" $\mu_F$ in terms of a Wiener measure with complex covariance ended in  1960 with an important no-go result \cite{Cam}, showing that, unlike heat equation and Wiener measure, Feynman heuristic formula \eqref{feynman} cannot be rigorously defined in terms of a Lebesgue integral on the (infinite-dimensional) space of paths $(\bR^d)^{[0,t]}$. Indeed, denoting $K_t(x,y):=(2\pi i \hbar t )^{-d/2}e^{\frac{i}{2\hbar t}(x-y)^2}$ the fundamental solution of the Schr\"odinger equation \eqref{Schroedinger} with $V=0$, a generalization of Kolmogorov existence theorem to the case of complex  measures \cite{Tho} allows to prove that the finite-additive complex measure $\mu$ defined on the algebra $ {\mathcal A}$ of cylinder sets in $(\bR^d)^{[0,t]}$ of the form
$$E_{t_1,\ldots, t_n; B_1,\ldots,B_n}:=\{\gamma\in (\bR^d)^{[0,t]}\colon \gamma (t_1)\in B_1, \ldots, \gamma (t_n)\in B_n\}, $$
for some $n\geq 1$, $t_1,\ldots, t_n\in [0,t]$, $B_1,\ldots, B_n $ Borel sets in $\bR^d$, 
as
\begin{equation}\label{cyl-mes}
    \mu(E_{t_1,\ldots, t_n; B_1,\ldots,B_n})=\int_{B_1}\dots\int_{B_n}K_{t_n-t_{n-1}}(x_{n-1},x_n)\dots K_{t_2-t_1}(x_1,x_2) K_{t_1}(x,x_1)\mathrm{d}x_1\dots \mathrm{d}x_n\,,
\end{equation}
cannot be extended to a $\sigma$-additive measure on  the $\sigma$-algebra $\sigma({\mathcal A})$ generated by $\mathcal{A}$. 
In order to deal with the lack of an underlying measure, different approaches to the definition of formula \eqref{feynman} have been proposed \cite{AlHKMa,DeFaPoStre,Fuj,JoLa,Klauder,Ma-book,Muldowney2012}. A common feature of most of them is the replacement of the concept of Lebesgue-type integral with respect to a $\sigma$-additive measure with the more general concept of a linear functional $L: D(L)\to \bC $ on a domain $D(L)$ of "integrable functions" (see \cite{AlMa16} for an detailed discussion of this topic). In order to reproduce all the properties suggested by formulae \eqref{feynman}  and \eqref{cyl-mes}, the domain $D(L)$ should contain  the {\it cylinder functions}, i.e. those  functions $f:(\bR^d)^{[0,t]}\to\bC$ of the form 
\begin{equation}\label{cil-f}f(\gamma):=g(\gamma(t_1),\ldots,\gamma(t_n)), \qquad \gamma \in (\bR^d)^{[0,t]}\end{equation} for some $n\in \bN$, $t_1,\ldots,t_n\in [0,t]$ and suitable classes of Borel function $g: \bR ^d\times \dots \times \bR ^d\to \bC$, where the restrictions on $g$ depend on the particular construction procedure of the functional.
More importantly 
the action of the functional $L$ on the function \eqref{cil-f} must be given by a (finite-dimensional) integral of the form: 
\begin{equation}\label{funz-cyl}L(f)=\int_{\bR^d\times\dots \times \bR^d}g(x_1,\ldots,x_n)
K_{t_n-t_{n-1}}(x_{n-1},x_n)\dots K_{t_2-t_1}(x_1,x_2) K_{t_1}(x,x_1)\mathrm{d}x_1\ldots \mathrm{d}x_n.\end{equation}

While formula \eqref{feynman}  is  extensively studied in the case where the configuration space is the Euclidean space $\bR^d$, its generalizations to a $d$-dimensional Riemannian manifold $M$ with a metric $\boldsymbol{g}$ remains essentially an open problem. In this case the Schr\"odinger equation becomes
\begin{align}\label{SchroedingerM}  
    \begin{dcases}
        i\hbar\frac{\partial\psi}{\partial t}
        =-\frac{\hbar^2}{2}\Delta_{\boldsymbol{g}} \psi +V\psi
        \\
        \psi (0,x)=\psi _0(x)
    \end{dcases} 
\end{align}
where $\Delta_{\boldsymbol{g}}$  denotes the Laplace-Beltrami operator associated to the metric $\boldsymbol{g}$. Remarkably, in the physics literature \cite{Schu} the heuristic Feynman's formula \eqref{feynman} is replaced by the following
\begin{equation} \label{feynmanM}
\psi (t,x)=\quad {}^{``}\quad C^{-1}\int _{ \Gamma
}e^{\frac{i}{\hbar}S (\gamma)}e^{i\hbar k\int_0^t R(\gamma(s))\mathrm{d}s}\psi _0(\gamma (0))\mathrm{d}\gamma\quad{}^{"}
 \end{equation}
containing an additional term of the form $\hbar^2 k\int_0^t R(\gamma(s))\mathrm{d}s $ that has to be added to the classical action, where $R$ is the scalar curvature of the manifold and $k$ is a numerical constant whose value  actually depends on the approximation scheme (usually $k=\frac{1}{12}$ or $k=\frac{1}{6}$ \cite{DeWitt,Schu}).
In the case of the heat equation on a Riemannian manifold $M$ and the corresponding Feynman-Kac formula, there exist several interesting results addressing the problem of the construction of the path integral and the interpretation of the scalar curvature correction term \cite{AnDri,Bar,BarPfa}.
In particular, according to Ref. \cite{AnDri}, its appearance seems to be linked to the geometry the Hilbert manifold $\Gamma$, and the term $\rho(\gamma)\equiv e^{-\frac{1}{6}\int_0^tR(\gamma(s)ds}$ is interpreted as a Jacobian factor between two volume measures on  (the finite-dimensional approximations of) $\Gamma$ associated to different metrics. This interpretation doesn't seem to be generalizable to  the case of Feynman's formula where the oscillatory term $\rho(\gamma)=e^{i\hbar k\int_0^t R(\gamma(s))\mathrm{d}s}$ is a complex valued function, which  cannot be understood as a Jacobian term. In particular, 
 when dealing with the Schr\"odinger equation and its corresponding Feynman formula, only few {\em rigorous }mathematical results have been obtained \cite{Fuk,Tha1,Tha2}.
More specifically, in \cite{Fuk} the author proves that for compact manifolds $M$ the time-slicing approximation of formula \eqref{feynman} converges to the solution of Equation \eqref{SchroedingerM} with $V$ replaced by $V+\frac{\hbar^2}{12}R$.
In \cite{Tha1,Tha2} no scalar curvature correction term appears and the author realizes representation \eqref{feynman} for the solution of Equation \eqref{SchroedingerM} in terms of a Feynman-Kac formula constructed out of a particular stochastic process with values in the complexification of $M$.
Due to the particular techniques used, those results are restricted to the case where $M$ is a compact connected semisimple Lie group or a symmetric space.

In the present paper, we  study the rigorous mathematical construction of representation formula \eqref{feynman} for the solution of  the Schr\"odinger equation on manifolds by means of  the {\it infinite dimensional oscillatory integral approach} \cite{AlBr,ELT,Ma-book}, which relies on a generalization of the definition and the main properties of classical oscillatory integrals on $\bR^n$ \cite{Hor1} to the case where the integration domain is an infinite dimensional real separable Hilbert space. In the case of $M=\mathbb{R}^d$ this approach allows to define Feynman integral \eqref{feynman} in terms of a well defined continuous functional on a Banach algebra of functions on a suitable Hilbert space of paths $\gamma:[0,t]\to\mathbb{R}^d$, by preserving at the same time Feynman's original sequential construction.
In particular, it allows the implementation of an infinite dimensional version of the classical stationary phase method \cite{AlHK77} and the corresponding application  to the study of the semiclassical asymptotic behaviour of the solution of Schr\"odinger equation \eqref{Schroedinger} in the limit where the reduced Planck constant $\hbar$ is regarded as a small parameter, thus creating a direct link between classical and quantum description. In addition, infinite dimensional oscillatory integrals have proven to be particularly flexible in the Euclidean case, allowing to provide a rigorous mathematical definition of the heuristic Feynman formula \eqref{feynman} for a large class of potentials $V$ \cite{AlCaMa,AlMa05,Ma-book}. However, the generalization of this results to the case where $\mathbb{R}^d$ is replaced by a Riemannian manifold $M$ is up to now an open problem, since only very preliminary results and conjectures can be found in the literature (see Ref. \cite{ELT81} and the discussion in Section \ref{SECCONJ}). Indeed, in this case the construction should rely upon the {\em Cartan development map}, which on the one hand provides an elegant technique for transferring  the theory from paths in $\mathbb{R}^d$ to the case of curved spaces, but on the other hand results to be rather difficult to handle, as it produces rather implicit formulae that do not allow for simple explicit computations.

In the present work we shall focus on the construction of the representation formula \eqref{feynman} via the {\it Feynman maps} \cite{ELT,ELT81},  a particular infinite dimensional oscillatory integral that is closer to Feynman's original construction and that turned out to work nicely even in rather tricky cases \cite{AlCaMa}.
Our construction is inspired by the Euclidean case \cite{AlHK77,AlHKMa}, which benefits from techniques of harmonic analysis.
For this reason we will limit our analysis to the case of a compact Lie group $G$ endowed with a bi-invariant metric $\boldsymbol{g}$.
In particular,  if $G$ is  a connected compact semi-simple Lie group, $\boldsymbol{g}$ must coincide to the   Killing form of $G$ up to a positive constant factor. 

Informally, our main results may be summarised as follows:\\

\noindent {\bf Theorem}:
\textit{Let $G$ be a compact Lie group with bi-invariant metric $\boldsymbol{g}$. 
    Let $\mathcal{H}_{x,t}(G)$ be the space of absolutely continuous curves $\gamma\colon [0,t]\to G$ such that $\gamma(0)=x$ and $\|\gamma\|_{\mathcal{H}_{x,t}^2(G)}:=\int_0^t\boldsymbol{g}(\dot{\gamma}(s),\dot{\gamma}(s))\mathrm{d}s<+\infty$.
    Then there exists a linear map 
    \begin{align*}
      f\mapsto\mathcal{F}_{\mathcal{H}_{(x,t)}(G)}(f)
      =\widetilde{\int}_{\mathcal{H}_{x,t}(G)}
      e^{\frac{i}{2\hbar}\|\dot{\gamma}\|_{\mathcal{H}_{x,t}(G)}^2}
      f(\gamma)\mathrm{d}\gamma\,,
    \end{align*}
    defined on the algebra of functions $f\colon\mathcal{H}_{(x,t)}(G)\to\mathbb{C}$ for which
    \begin{align*}
        f(\gamma)=\phi_1(\gamma(t_1))\cdots\phi_k(\gamma(t_k))\,,
    \end{align*}
    where $k\in\mathbb{N}$, $0\leq t_1\leq \ldots\leq t_k\leq t$ and $\phi_1,\ldots,\phi_k\colon G\to\mathbb{C}$ are finite energy functions, \textit{cf.} Definition \ref{Def: finite energy functions}.
    The resulting map, which is constructed through the Cartan map ---\textit{cf.} Definition \ref{Def: Cartan map} ---, is a limit of oscillatory integrals over spaces of increasing finite dimension.
    Moreover, for $f$ as above, it holds
    \begin{align*}
        \mathcal{F}_{\mathcal{H}_{x,t}(G)}(f)
        =\bigg[U(t_1)\phi_1U(t_2-t_1)\phi_2\cdots U(t_{k-2}-t_{k-1})\phi_{k-1}U(t_k-t_{k-1})\phi_k\bigg](x)
    \end{align*}
    where $U(t):=e^{\frac{i\hbar t}{2}\overline{\Delta_{\boldsymbol{g}}}}$ is the unitary group generated by the closure of the Laplace-Beltrami operator $-\Delta_{\boldsymbol{g}}$ associated to the metric $\boldsymbol{g}$.
    In particular, the map $\mathcal{F}_{\mathcal{H}_{x,t}(G)}$ provides a representation of the unitary $U(t)$.\\
    Finally, { the map $\mathcal{F}_{\mathcal{H}_{x,t}(G)}$ can be extended to a class of relevant non-cylinder functions. In particular, }if $\psi_0,V\colon G\to\mathbb{C}$ are finite energy functions and $m\in \mathbb{N}$, the map $\mathcal{F}_{\mathcal{H}_{x,t}(G)}$ can be applied to the function $f\colon\mathcal{H}_{x,t}(G)\to\mathbb{C}$ defined by
    \begin{align*}
        f(\gamma)
        :=\psi_0(\gamma(t))\left(\int_0^tV(\gamma(s))\mathrm{d}s\right)^m\,,
    \end{align*}
    and $\mathcal{F}_{\mathcal{H}_{x,t}(G)}(f)$ provides the $m$-th term of the convergent Dyson perturbative series for the  solution to the Schr\"odinger equation \eqref{SchroedingerM} with potential $V$ and initial data $\psi_0$.} \\

We refer to Theorems \ref{FinalTheoremSchroedinger1}-\ref{teo-cyl-funct}-\ref{Thm: Feynman map with potential} for a more precise discussion of these results.
We point out that the finite energy assumption restricting the class of admissible cylinder functions $f\colon\mathcal{H}_{x,t}(G)\to\mathbb{C}$ is a natural generalization of the hypothesis considered in the Euclidean setting, \textit{cf.} Remark \ref{Rmk: finite energy assumption}.
Our result provides the first rigorous construction on non-Euclidean structures of infinite dimensional oscillatory integrals within the theory developed in \cite{AlBr,AlHK77,AlHK77,ELT} and further developed in \cite{AlMa05,AlCaMa}, paving the way for further applications, such as, e.g., the study of the semiclassical asymptotics of the solution of Eq. \eqref{SchroedingerM} in the limit $\hbar\downarrow 0$ via  the  infinite dimensional version of the stationary phase method developed in \cite{AlHK77,AlBr}. Furthermore, it  creates a link among  different approaches to the problem, such as the analytic continuation of Wiener integrals \cite{Tha1,Tha2} and the time slicing construction \cite{Fuk},  cf. remark \ref{Rmk: comparison with literature}. As proposed in \cite{ELT81}, the definition of the functional $\mathcal{F}_{\mathcal{H}_{x,t}(G)}$ relies on two basic ideas such as the theory of oscillatory integrals on infinite dimensional Hilbert spaces and the Cartan development map. In particular in  our case, the restriction to Lie groups with bi-invariant metrics allows, on the one hand, the derivation of explicit and tractable formulae for the action of the Cartan maps on the space of paths under consideration and, on the other hand, the exploitation of the non-commutative harmonic analysis on the Lie group.   It is worth noting that our approach to the mathematical definition of heuristic Feynman's formula \eqref{feynman} for the solution of Equation \eqref{SchroedingerM} provides an intrinsic construction that does not require the introduction of any {\em explicit }scalar curvature correction term,  thanks to the exploitation of the Cartan map, which encodes the geometry of the underlying manifold. Besides, we also show (see Remark \ref{remark-scalar-curvature}) how the scalar curvature still comes into play in our context when alternative measures on the finite dimensional approximations of the path space are chosen. In particular,   the  complex correction term $e^{i\hbar k\int_0^t R(\gamma(s))\mathrm{d}s} $  in Eq. \eqref{feynmanM} can still result from a real Jacobian factor between two different reference measures.

The paper is organized as follows. In Section \ref{Sec: preliminary definitions and results} we set the notation and recall some results on Lie groups and invariant metrics. In Section \ref{sez-Fey-M} we provide the definition of Feynman map on the Euclidean space $\bR^d$ and its generalization to  a Riemannian manifold $M$. In Section \ref{sez-fey-G} we restrict ourselves to the case where $M$ is a compact Lie group $G$ endowed with a bi-invariant metric   and provide some explicit formulas for the corresponding Feynman map, proving that   it can be regarded as a linear functional  satisfying condition \eqref{funz-cyl} on a suitable class of cylinder functions. 
Finally in Section \ref{sez-Schr-V} we consider Equation \eqref{SchroedingerM} with $V\neq 0$ and provide a perturbative solution.

\section{Some notions and results of Lie-group theory}
\label{Sec: preliminary definitions and results}
In the rest of the paper, if $H$ is a Hilbert (or Banach) space, $\mathfrak{B}(H)$ denotes the Banach algebra of bounded operators $H\to H$.

$G$ henceforth denotes a  real $d$-dimensional Lie group \cite{Warner} with identity $e$.   We explicitly assume $d < +\infty$.  The differentiable structure of  $G$ is assumed to be the unique smooth ($C^\infty$) structure and all geometric structures on $G$ are supposed to be smooth accordingly. $\mathfrak{g}$ will denote the tangent space at the unit element $T_eG$, while 
$\Gamma(TG)$ denotes the module of smooth vector fields on $G$. 

We recall here a few very well known facts on Lie groups and associated Lie algebras \cite{Warner,Serre}  to establish some relevant definitions and notations used throughout.

For all $x\in G$ we will denote by
\begin{align*}
    L_x\colon
    G\ni y
    \mapsto L_x(y):=xy\in G\,,
    \qquad
    R_x\colon
    G\ni y
    \mapsto R_x(y):=yx\in G\,,
\end{align*}
 the \textbf{left-translation} and \textbf{right-translation} respectively. These notions are defined for general groups $G$, but we are interested in the case of a Lie group. In that case, the  maps $G\ni x\mapsto L_x\in\operatorname{Diff}(G)$ and $G\ni x\mapsto R_x\in\operatorname{Diff}(G)$ are, respectively, a group representation of $G$ and a  group representation of\footnote{As is known, the {\bf opposite group} $(G^{\scriptsize \mbox{op}}, \circ^{\scriptsize \mbox{op}})$ of a group $(G, \circ)$ is the unique group structure constructed on the \textit{set} $G$ with the product $x  \circ^{\scriptsize \mbox{op}} y :=  y\circ x$ for $x,y\in G$.} $G^{\scriptsize \mbox{op}}$ in terms of smooth diffeomorphisms of $G$. For a given pair $x,y\in G$, we shall denote  by
\begin{align*}
    (\mathrm{d}L_x)_y\colon T_yG\to T_{xy}G\:,
    \qquad
    (\mathrm{d}R_x)_y\colon T_yG\to T_{yx}G
\end{align*}
the differentials of the left-action and  the right-action.
These maps are vector space isomorphisms by construction.

A vector field $X\in\Gamma(TG)$ is \textbf{left-invariant}, respectively \textbf{right-invariant}   if
$$X(xy)=(\mathrm{d}L_x)_yX(y)\quad \mbox{or}\quad X(xy)=(\mathrm{d}R_y)_xX(x)\:,\quad \forall x,y \in G\:. $$ 
With the symbol
  $\mathfrak{g}^L \subset \Gamma(TG)$ (resp. $\mathfrak{g}^R \subset \Gamma(TG)$) we will  denote the subspace of left-invariant (resp. right-invariant) smooth vector fields.
 The space $\mathfrak{g}^L$  ($\mathfrak{g}^R$) is   isomorphic to $\mathfrak{g}= T_eG$ since the value of a left-invariant (respectively, right-invariant) vector field at $e$ uniquely defines it.
 For later convenience we will denote by
 \begin{align*}\mathfrak{g}\ni X \mapsto \widetilde{X}\in \mathfrak{g}^L \subset \Gamma(TG)\:, \quad \mbox{where $\widetilde{X}(x):= (dL_x)_eX$.}\label{ISOTILDE}\end{align*}
 the canonical isomorphism between $\mathfrak{g}^L$ and $\mathfrak{g}$.
Observe that $\widetilde{X}(e)=X$ and 
 $\mathfrak{g}\ni X \mapsto \widetilde{X}(x) \in T_xG$ is also a vector space isomorphism for every given $x\in G$,  thus the left-invariant vector fields provide a basis of the tangent space at every point of $G$.

It is easy to prove that the Lie commutator of a pair of left-invariant (right-invariant) vector fields is left-invariant (respectively, right-invariant).
As a consequence,  the real $d$-dimensional  
vector space $\mathfrak{g}^L$ (respectively, $\mathfrak{g}^R$) equipped with the standard  Lie commutator of vector fields  $[\cdot,\cdot]: \Gamma(TG)\times \Gamma(TG) \to \Gamma(TG)$ is a Lie algebra.
 The canonical isomorphism between $\mathfrak{g}$ and $\mathfrak{g}^L$ introduced above induces  a Lie commutator  $[\cdot, \cdot] : \mathfrak{g}\times  \mathfrak{g}\to  \mathfrak{g}$, which 
 is defined by
\begin{align}
    [X,Y]:=[\widetilde{X},\widetilde{Y}]_e
    \qquad\forall X,Y\in\mathfrak{g}\,.
\end{align}
As it is well known,  the Lie algebra 
$(\mathfrak{g}, [\cdot,\cdot])$ is called the {\bf Lie algebra of $G$}.

 Since the Lie commutator of $\mathfrak{g}$ uniquely defines a tensor of order $(1,2)$, called the {\bf structure tensor}, we can profitably use the tensor technology to describe its action.
Let $X_1, \ldots, X_{d} \in \mathfrak{g}$ be a basis of $\mathfrak{g}$
and let us introduce the notation $X^{*k}$ for the elements of the dual basis.
Every element $X\in \mathfrak{g}$ can be therefore written in components $X= \sum_{k=1}^{d} x^k X_k$ and the action of the Lie commutator can be written as
$[X,Y]^k = {c_{ij}}^k x^iy^j$ referring to that basis,
where we adopted, and we henceforth do, Einstein's convention of summation over repeated indices from $1$ to $d$. The components of the structure tensor sometimes known as the {\bf structure constants} of $G$ are
\begin{align}
{c_{ij}}^k = \langle [X_i,X_j], X^{*k} \rangle\:.
\end{align}
By definition, the structure tensor is anti-symmetric in the lower indices 
${c_{ij}}^k = -{c_{ji}}^k$.

A Riemannian metric $\boldsymbol{g}$ on a Lie group is said to be  {\bf left-invariant} or 
{\bf right-invariant}
if, respectively, 
\[\boldsymbol{g}_y(X_y,Y_y)= \boldsymbol{g}_{xy}((dL_x)_yX_y, (dL_x)_yY_y) \:, \quad \forall X_y,Y_y \in T_yG\:, \forall x,y \in G\:,\]
or
\[\boldsymbol{g}_y(X_y,Y_y)=  \boldsymbol{g}_{yx}((dR_x)_yX_y, (dR_x)_yY_y)\:, \quad \forall X_y,Y_y \in T_yG\:, \forall x,y \in G\:.\]
Every diffeomorphism $L_x$ or, respectively $R_x$, is in that case also an \textit{isometry}.
The metric is {\bf bi-invariant} if it is both left- and right-invariant. 
The following result provides a complete characterization of the Lie groups that admit a bi-invariant metric.
\begin{proposition}
A connected Lie group admits a bi-invariant Riemannian metric if and only if it is isomorphic to the product of a compact Lie group and $\mathbb{R}^n$ with standard Lie group product structure.
\end{proposition}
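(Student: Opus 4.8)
The plan is to treat the two implications separately, reducing the structural claim to the Lie algebra and then transporting it back to the group via the universal cover. The "if" direction is short: a compact Lie group $K$ always admits a bi-invariant Riemannian metric (average any inner product on $T_eK$ over $K$ with respect to the normalized Haar measure to obtain an $\mathrm{Ad}(K)$-invariant inner product, then transport it by left translations; $\mathrm{Ad}(K)$-invariance at $e$ gives right-invariance as well), the Euclidean metric on $\mathbb{R}^n$ is bi-invariant because $\mathbb{R}^n$ is abelian, and the product of these two metrics on $K\times\mathbb{R}^n$ is bi-invariant; one then pulls it back along the given isomorphism.

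For the "only if" direction, suppose $G$ is connected with a bi-invariant Riemannian metric $\boldsymbol{g}$, and let $\langle\cdot,\cdot\rangle$ be the associated $\mathrm{Ad}(G)$-invariant positive-definite inner product on $\mathfrak{g}=T_eG$; equivalently $\mathrm{ad}(X)$ is skew-symmetric with respect to $\langle\cdot,\cdot\rangle$ for every $X\in\mathfrak{g}$. I would first record two elementary consequences: the $\langle\cdot,\cdot\rangle$-orthogonal complement of an ideal is again an ideal (from $\langle\mathrm{ad}(X)Y,Z\rangle=-\langle Y,\mathrm{ad}(X)Z\rangle$), and $B(X,X)=\mathrm{tr}(\mathrm{ad}(X)^2)\leq 0$ for all $X$ since $\mathrm{ad}(X)^2$ is negative semidefinite. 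Applying the first fact to the centre $\mathfrak{z}=\mathfrak{z}(\mathfrak{g})$ gives an orthogonal splitting into ideals $\mathfrak{g}=\mathfrak{z}\oplus\mathfrak{s}$ with $\mathfrak{z}$ abelian; the same argument applied to abelian ideals contained in $\mathfrak{s}$ shows that $\mathfrak{s}$ has trivial centre and no nonzero abelian ideal, hence no nonzero solvable ideal, so $\mathfrak{s}$ is semisimple and $\mathfrak{s}=[\mathfrak{g},\mathfrak{g}]$. Then the second fact forces the Killing form of $\mathfrak{s}$ to be negative definite, i.e.\ $\mathfrak{s}$ is of compact type.

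Next I would pass to the universal cover $\widetilde{G}$, to which the metric lifts while staying bi-invariant. From $\mathfrak{g}=\mathfrak{z}\oplus\mathfrak{s}$ one gets $\widetilde{G}\cong\mathbb{R}^n\times\widetilde{S}$ with $n=\dim\mathfrak{z}$ and $\widetilde{S}$ the simply connected group with Lie algebra $\mathfrak{s}$; since $\mathfrak{s}$ is semisimple of compact type, $\widetilde{S}$ carries a bi-invariant metric with positive-definite Ricci tensor and is therefore compact by Myers' theorem (this is Weyl's theorem on the finiteness of $\pi_1$), so $F:=Z(\widetilde{S})$ is finite. Writing $G=\widetilde{G}/\Gamma$ with $\Gamma$ a discrete subgroup of $Z(\widetilde{G})=\mathbb{R}^n\times F$, the key point is that the projection $\Lambda$ of $\Gamma$ onto $\mathbb{R}^n$ is discrete: otherwise, choosing distinct $\lambda_j\in\Lambda\setminus\{0\}$ with $\lambda_j\to 0$, lifting to $\gamma_j=(\lambda_j,f_j)\in\Gamma$ and passing to a subsequence with $f_j$ constant (possible because $F$ is finite), the differences $(\lambda_j-\lambda_k,0)\in\Gamma$ give infinitely many nonzero elements of $\Gamma$ accumulating at the identity, contradicting discreteness. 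Hence $\Lambda$ is a lattice of some rank $k$ spanning a subspace $W\cong\mathbb{R}^k\subseteq\mathbb{R}^n$ and $\Gamma\subseteq W\times F$; fixing a complement $\mathbb{R}^n=W\oplus W'$ and writing $\widetilde{G}=(W\times\widetilde{S})\times W'$, the factor $W'$ splits off, so $G\cong\big[(W\times\widetilde{S})/\Gamma\big]\times W'$. Finally $K:=(W\times\widetilde{S})/\Gamma$ is compact: $\Gamma_0:=\Gamma\cap(W\times\{e\})$ maps isomorphically onto a finite-index subgroup of $\Lambda$, hence is a full-rank lattice in $W$, so $(W\times\widetilde{S})/\Gamma_0$ is a torus times $\widetilde{S}$, which is compact, and $K$ is its quotient by the finite group $\Gamma/\Gamma_0$. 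This yields $G\cong K\times\mathbb{R}^{n-k}$ with $K$ compact, as required (cf.\ \cite{Milnor-76}).

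I expect the Lie algebra decomposition to be routine; the main obstacle is the group-level analysis of the discrete central subgroup $\Gamma$ — proving that its $\mathbb{R}^n$-projection is discrete and that $\Gamma_0$ has full rank in its span — since this is exactly what rules out "tilted" quotients and pins down the genuine product structure $K\times\mathbb{R}^{n-k}$.
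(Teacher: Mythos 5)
Your argument is correct and complete. Note, however, that the paper does not actually prove this proposition: its ``proof'' consists of the single citation to Lemma~7.5 of Milnor's \emph{Curvatures of left invariant metrics on Lie groups}, so there is no in-paper argument to compare against. What you have written is essentially a self-contained reconstruction of Milnor's result along the standard lines: skew-symmetry of $\mathrm{ad}(X)$ for an $\mathrm{Ad}$-invariant inner product, the orthogonal splitting $\mathfrak{g}=\mathfrak{z}\oplus[\mathfrak{g},\mathfrak{g}]$ with the derived algebra semisimple of compact type (your phrase ``the same argument'' is doing real work there, but it does go through: if $\mathfrak{a}\subseteq\mathfrak{s}$ is an abelian ideal, then $\mathfrak{a}^\perp$ is also an ideal, so $[\mathfrak{a},\mathfrak{a}^\perp]\subseteq\mathfrak{a}\cap\mathfrak{a}^\perp=0$ and $\mathfrak{a}$ is central in $\mathfrak{s}$, hence zero), Weyl's finiteness theorem for the simply connected factor, and the analysis of the discrete central subgroup $\Gamma$ of the universal cover. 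That last step --- discreteness of the projection $\Lambda$ of $\Gamma$ to $\mathbb{R}^n$ and the full-rank property of $\Gamma_0$ in its span --- is exactly the delicate point you identify, and your handling of it (finiteness of $F$ to extract the difference elements $(\lambda_j-\lambda_k,0)$, and the finite-index argument for $\Gamma_0\subseteq\Gamma$) is sound. The only cost relative to the paper's one-line citation is length; the benefit is that the reader sees precisely where bi-invariance and positive-definiteness enter.
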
 

\begin{proof} Lemma 7.5 in \cite{Milnor-76}.\end{proof}

Once $G$ is endowed with a Riemannian structure, there are two different  notions of exponential map around $e$.
\begin{itemize}
\item One is the usual $\exp : \mathfrak{g}\ni X  \mapsto \exp(X) \in G$  \cite{Warner}, whith $\exp(X):= \gamma_X(1)$ where 
$\gamma : \mathbb{R} \to G$ is the unique (immersed) {\bf one-parameter  subgroup} of $G$ with tangent vector $X$ at $e$.
By definition $t\mapsto\gamma_X(t)$ is the maximal integral curve of $\widetilde{X}$ passing through  $e$ at $t=0$. These integral curves are always complete.

\item The other is the standard metric exponential map \cite{KN} $\exp_e^{\boldsymbol{g}} : U^{\boldsymbol{g}} \to G$ defined in an open  star-shaped neighborhood  $U^{ \boldsymbol{g}}$ of the origin of $T_eG$ and taking values on $G$. 
 By definition, $\exp_e^{\boldsymbol{g}}(X):= \gamma^{\boldsymbol{g}}_X(1)$ where 
$\gamma^{\boldsymbol{g}}_X : \mathbb{R} \to G$ is the unique geodesic with initial vector $X$ at $\gamma^{\boldsymbol{g}}_X(0) = e$ and maximal domain $I\ni 0$
which also includes $t=1$. This latter condition imposes restrictions on the possible $X$, namely, on the domain $U^{ \boldsymbol{g}}$ which may not coincide with the whole tangent space.
\end{itemize}
These two exponential maps coincide only when the metric $\boldsymbol{g}$ is bi-invariant, \textit{cf.} Appendix \ref{APPENDIXPROOFS}. Moreover, the following proposition describes some interesting properties of Lie groups endowed with bi-invariant metrics.

\begin{proposition}\label{PROPO6}  Let $G$ be a Lie group equipped with a bi-invariant  metric $\boldsymbol{g}$.
Then:
\begin{itemize}
\item[(a)] A smooth curve $\gamma : I \to G$, where $I\ni 0$ is an open interval and  $\gamma(0)= x\in G$, is  a $\boldsymbol{g}$-geodesic with maximal domain $I$ if and only if  it is complete ($I=\mathbb{R}$) and can be written in the form
\begin{align}
\gamma(t) = L_x\exp(tX)\quad \mbox{for every $t\in \mathbb{R}$ and some $X\in \mathfrak{g}$\,.}
\end{align}
In particular, $\exp_e^{\boldsymbol{g}}=\exp$, which is thus defined on the whole tangent space $T_eG$.
\item[(b)] The family of $\boldsymbol{g}$-geodesics with maximal domain coincides  with the family of the maximal  integral curves  of the left invariant vector fields of $G$.  Therefore  every such $\boldsymbol{g}$-geodesic $\gamma$ can be written in the form 
\begin{align}
    \mathbb{R} \ni t\mapsto\Phi^{\widetilde{X}}_t(x) \in G
    \qquad \widetilde{X}\in \mathfrak{g}^L\,,\,
    x\in G\,,
\end{align}
 where $\Phi^Z$ denotes the flow of a vector field $Z$.
The same geodesic  can  be written also as 
\begin{align}\mathbb{R} \ni t\mapsto R_{\exp(tX)}x\in G \:.\end{align}
\end{itemize}
If ${\boldsymbol{g}}$ is left-invariant but not right-invariant, then there is at least one integral curve of a left-invariant vector field which is not a ${\boldsymbol{g}}$-geodesic and thus properties (a),(b) are not valid.
\end{proposition}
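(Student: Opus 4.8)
\emph{Proof strategy.} The statement reduces to two facts. \emph{(A) The Levi-Civita connection on left-invariant fields.} Specialising the Koszul formula to left-invariant fields $\widetilde{X},\widetilde{Y},\widetilde{Z}$ with $X,Y,Z\in\mathfrak{g}$ and using that left-invariance of $\boldsymbol{g}$ makes each function $p\mapsto\boldsymbol{g}_p(\widetilde{X}(p),\widetilde{Y}(p))$ constant (equal to $\boldsymbol{g}_e(X,Y)$), the three derivative terms drop out, leaving
\begin{align*}
2\,\boldsymbol{g}\big(\nabla_{\widetilde{X}}\widetilde{Y},\widetilde{Z}\big)
=\boldsymbol{g}\big(\widetilde{[X,Y]},\widetilde{Z}\big)
-\boldsymbol{g}\big(\widetilde{[X,Z]},\widetilde{Y}\big)
-\boldsymbol{g}\big(\widetilde{[Y,Z]},\widetilde{X}\big)\,.
\end{align*}
\emph{(B) Bi-invariance $\iff$ skew-symmetry of $ad$.} For $G$ connected I would establish the chain: $\boldsymbol{g}$ bi-invariant $\iff$ every $R_x$ is an isometry $\iff$ every conjugation $C_h(y):=hyh^{-1}$ is an isometry (because $R_x=L_x\circ C_{x^{-1}}$ and each $L_x$ is already an isometry) $\iff$ $\mathrm{Ad}(h):=(dC_h)_e$ is a $\boldsymbol{g}_e$-isometry of $\mathfrak{g}$ for every $h$ (an automorphism of $G$ being an isometry of a left-invariant metric exactly when its differential at $e$ is) $\iff$ every $ad(X)$, $X\in\mathfrak{g}$, is $\boldsymbol{g}_e$-skew-symmetric (via $\mathrm{Ad}(\exp(tX))=e^{t\,ad(X)}$: differentiate at $t=0$ one way; for the converse, $e^{t\,ad(X)}$ preserves $\boldsymbol{g}_e$ once $ad(X)$ is skew, and $\exp(\mathfrak{g})$ generates $G$).

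Granting bi-invariance, skew-symmetry of $ad$ turns $-\boldsymbol{g}(\widetilde{[X,Z]},\widetilde{Y})$ into $\boldsymbol{g}(\widetilde{[X,Y]},\widetilde{Z})$ and $-\boldsymbol{g}(\widetilde{[Y,Z]},\widetilde{X})$ into $-\boldsymbol{g}(\widetilde{[X,Y]},\widetilde{Z})$, so (A) collapses to $\boldsymbol{g}(\nabla_{\widetilde{X}}\widetilde{Y},\widetilde{Z})=\tfrac12\boldsymbol{g}(\widetilde{[X,Y]},\widetilde{Z})$ for all $Z$, whence $\nabla_{\widetilde{X}}\widetilde{Y}=\tfrac12\widetilde{[X,Y]}$ and, in particular, $\nabla_{\widetilde{X}}\widetilde{X}=0$. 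The integral curve of $\widetilde{X}$ through $x$ is $t\mapsto L_x\exp(tX)$: its velocity at time $t$ equals $(dL_x)_{\exp(tX)}\widetilde{X}(\exp(tX))=\widetilde{X}(x\exp(tX))$ by left-invariance of $\widetilde{X}$, and $L_x\exp(tX)=R_{\exp(tX)}x=\Phi^{\widetilde{X}}_t(x)$; it is complete since one-parameter subgroups are. Because $\nabla_{\widetilde{X}}\widetilde{X}=0$, this curve solves the geodesic equation, hence is a maximal $\boldsymbol{g}$-geodesic. Conversely, any $\boldsymbol{g}$-geodesic $\gamma$ with $\gamma(0)=x$, $\dot\gamma(0)=v$ shares its initial data with $t\mapsto L_x\exp(tX)$ for $X:=(dL_x)_e^{-1}v\in\mathfrak{g}$; uniqueness of geodesics makes the two agree on their common domain, and as the latter is all of $\mathbb{R}$ the maximal domain of $\gamma$ is $\mathbb{R}$. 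This gives (a); (b) is the same statement rephrased through flows and right translations; and (c) follows from $\exp_e^{\boldsymbol{g}}(X)=\gamma^{\boldsymbol{g}}_X(1)=L_e\exp(X)=\exp(X)$, which by geodesic completeness is defined on all of $T_eG$.

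For the last assertion I would argue by contraposition. Assume every integral curve of every left-invariant vector field is a $\boldsymbol{g}$-geodesic. Since such curves realise $\dot{c}=\widetilde{X}(c)$ through every point of $G$, the geodesic equation yields $\nabla_{\widetilde{X}}\widetilde{X}\equiv0$ for all $X\in\mathfrak{g}$; polarising this and combining with the torsion-free identity $\nabla_{\widetilde{X}}\widetilde{Y}-\nabla_{\widetilde{Y}}\widetilde{X}=\widetilde{[X,Y]}$ gives $\nabla_{\widetilde{X}}\widetilde{Y}=\tfrac12\widetilde{[X,Y]}$, and substituting this into (A) forces $\boldsymbol{g}(\widetilde{[X,Z]},\widetilde{Y})+\boldsymbol{g}(\widetilde{[Y,Z]},\widetilde{X})=0$ for all $X,Y,Z$, i.e. every $ad(Z)$ is $\boldsymbol{g}_e$-skew-symmetric; by (B) this makes $\boldsymbol{g}$ bi-invariant, against the hypothesis. (For a disconnected $G$ the argument only gives invariance under the identity component $G_0$, which is enough here since each curve $t\mapsto L_x\exp(tX)$ lies in the coset $xG_0$ and the geodesic condition is local, so one may assume $G$ connected.) Finally, each of (a), (b), (c) entails that all such integral curves are geodesics, so all three break down once one of them is not.

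The step I expect to need real care is (B), and in particular the implication "$ad(X)$ skew for every $X$ $\Rightarrow$ $\mathrm{Ad}(g)$ an isometry for every $g$": it rests on $\mathrm{Ad}\circ\exp=\exp\circ\,ad$, on the exponential of a $\boldsymbol{g}_e$-skew operator preserving the possibly indefinite bilinear form $\boldsymbol{g}_e$, and on connectedness of $G$. The remaining ingredients --- the Koszul specialisation (A), the identification of the integral curves of left-invariant fields, existence and uniqueness of geodesics, and the polarisation step --- are routine, the single caveat for an indefinite $\boldsymbol{g}$ being that throughout "isometry" must be read as "preserves $\boldsymbol{g}$".
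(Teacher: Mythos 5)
Your proof is correct and follows essentially the same route as the paper's: the identity $\nabla_{\widetilde{X}}\widetilde{Y}=\tfrac12\widetilde{[X,Y]}$ under bi-invariance, the identification of the integral curve of $\widetilde{X}$ through $x$ with $t\mapsto x\exp(tX)=R_{\exp(tX)}x$, uniqueness of geodesics for the converse inclusion, and polarisation of $\nabla_{\widetilde{X}}\widetilde{X}=0$ for the final statement. The only divergence is in how bi-invariance is converted into an algebraic condition --- you route through conjugations, $\mathrm{Ad}$ and skew-symmetry of $\mathrm{ad}$, whereas the paper (Corollary \ref{COROLLARIO4}, together with Proposition \ref{PROPGEOD}) works with $\mathcal{L}_{\widetilde{Z}}\boldsymbol{g}=0$ and the observation that the flow of $\widetilde{Z}\in\mathfrak{g}^L$ is $R_{\exp(tZ)}$ --- and both versions rely on connectedness of $G$ at the same point of the converse direction, a caveat you flag explicitly and the paper leaves implicit.
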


\begin{proof} See Appendix \ref{APPENDIXPROOFS}.
\end{proof}

\begin{remark}
According to Proposition \ref{PROPO6}, if $\boldsymbol{g}$ is bi-invariant,
then for all $x\in G$ and $X_x\in T_xG$, the $\boldsymbol{g}$-geodesic $\gamma_{x,X_x}$ starting at $x$ with velocity $X_x$ and maximal domain is given by
\begin{align}
    \gamma_{x,X_x}(t)
    =x\exp\left[t(\mathrm{d}L_{x})_e^{-1}X_x\right]  = \Phi^{\widetilde{(dL_x)_e^{-1}X_x}}_t(x) = R_{\exp\left[t(\mathrm{d}L_{e})_x^{-1}X_x\right]} x\:,\quad \forall t\in \mathbb{R}\:.
\end{align}
\end{remark}

 As is well known (see, e.g., \cite{Cohn}), if $G$ is a topological locally compact group, a unique left-invariant positive $\sigma$-additive regular Borel measure which is finite on compact sets exists thereon  up to constant positive factors. The same fact holds for the right-invariant  measure. These are the left and the right {\bf Haar measures}. If the Lie group $G$ admits a left (right) invariant metric, the induced volume form must coincide (up to positive constant factors) with the left (resp. right) Haar measure, since the volume form satisfies the requirements above.  In the general case of  topological groups, these two measures coincide  
if and only if the group is \textit{unimodular}. In particular compact topological (Lie in particular) groups and \textit{semi-simple Lie groups} are unimodular as is well known \cite{Cohn}.
We have the following more general result due to Milnor.

\begin{proposition} \label{PROP13} If the Lie group $G$ is connected, the  unimodularity condition of $G$ 
is equivalent to the condition
$$tr(ad(X))=0 \quad \forall X \in \mathfrak{g}\, ,$$
which, in terms of structure constants, reads
${c_{ik}}^k=0$.
\end{proposition}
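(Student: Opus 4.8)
The plan is to express the failure of unimodularity through the adjoint representation and then linearise. Recall from the discussion preceding the statement that, $G$ carrying a left-invariant volume form $\omega$ (whose associated measure is a left Haar measure), unimodularity of $G$ amounts to the fact that $|\omega|$ is also right-invariant, i.e.\ that $|R_x^*\omega|=|\omega|$ for every $x\in G$; equivalently one may argue with the modular function $\Delta$ and the condition $\Delta\equiv 1$.

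First I would compute $R_x^*\omega$. Since left and right translations commute, $L_y^*(R_x^*\omega)=R_x^*(L_y^*\omega)=R_x^*\omega$, so $R_x^*\omega$ is again a left-invariant top form, hence $R_x^*\omega=c(x)\,\omega$ for a scalar $c(x)$. To identify $c(x)$ it suffices to evaluate at $e$. Writing $\phi_x:=(\mathrm{d}L_x)_e^{-1}=(\mathrm{d}L_{x^{-1}})_x\colon T_xG\to T_eG$, left-invariance of $\omega$ reads $\omega_x(w_1,\dots,w_d)=\omega_e(\phi_xw_1,\dots,\phi_xw_d)$, so that, for $v_1,\dots,v_d\in T_eG$,
\[
(R_x^*\omega)_e(v_1,\dots,v_d)=\omega_x\big((\mathrm{d}R_x)_ev_1,\dots\big)=\omega_e\big(\phi_x(\mathrm{d}R_x)_ev_1,\dots\big).
\]
Now $\phi_x\circ(\mathrm{d}R_x)_e=\mathrm{d}(L_{x^{-1}}\circ R_x)_e=\operatorname{Ad}(x^{-1})$, because $L_{x^{-1}}\circ R_x$ is the inner automorphism $y\mapsto x^{-1}yx$; hence $c(x)=\det\operatorname{Ad}(x^{-1})=(\det\operatorname{Ad}(x))^{-1}$, and unimodularity of $G$ becomes the condition $|\det\operatorname{Ad}(x)|=1$ for all $x\in G$.

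Next I would linearise. Being connected, $G$ is generated by any neighbourhood of $e$, in particular by $\exp(\mathfrak{g})$, and $x\mapsto|\det\operatorname{Ad}(x)|$ is a continuous homomorphism into $(0,+\infty)$; so it is identically $1$ iff $|\det\operatorname{Ad}(\exp tX)|=1$ for all $t\in\mathbb{R}$, $X\in\mathfrak{g}$. By the standard identities $\operatorname{Ad}(\exp tX)=e^{t\,\operatorname{ad}X}$ and $\det e^{A}=e^{\operatorname{tr}A}$ this reads $e^{t\,\operatorname{tr}(\operatorname{ad}X)}=1$ for all $t$, i.e.\ $\operatorname{tr}(\operatorname{ad}X)=0$ for all $X$, which is the first asserted equivalence. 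The passage to structure constants is then immediate: in the basis $X_1,\dots,X_d$ the endomorphism $\operatorname{ad}(X_i)$ sends $X_j$ to $[X_i,X_j]={c_{ij}}^kX_k$, so its matrix has $(k,j)$-entry ${c_{ij}}^k$ and thus trace ${c_{ik}}^k$; since $X\mapsto\operatorname{tr}(\operatorname{ad}X)$ is linear it vanishes identically iff ${c_{ik}}^k=0$ for each $i$.

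The only genuinely delicate step is the computation $R_x^*\omega=\det\operatorname{Ad}(x^{-1})\,\omega$: one must keep track of where the inverses sit, use that $R_x^*\omega$ is still left-invariant so that a single evaluation at $e$ pins it down globally, and observe that the absolute value is harmless since $\det\operatorname{Ad}(\exp tX)=e^{t\operatorname{tr}\operatorname{ad}X}>0$. The remaining ingredients — a connected Lie group is generated by $\exp(\mathfrak{g})$, $\operatorname{Ad}\circ\exp=\exp\circ\operatorname{ad}$, and $\det\circ\exp=\exp\circ\operatorname{tr}$ — are classical, and alternatively the whole statement may simply be cited from Milnor.
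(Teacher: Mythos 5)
Your proof is correct and complete, but it takes a genuinely different route from the paper: the paper disposes of the main equivalence by citing Milnor's Lemma 6.3 outright (and the remaining lines of its proof, concerning $c_{sjk}+c_{skj}=0$ in the presence of a bi-invariant metric, really pertain to the subsequent corollary rather than to this proposition), whereas you reconstruct Milnor's argument from scratch. Your chain --- $R_x^*\omega=\det\operatorname{Ad}(x^{-1})\,\omega$ for a left-invariant volume form $\omega$, so that unimodularity becomes $|\det\operatorname{Ad}|\equiv 1$; connectedness reducing this to the one-parameter subgroups; and $\operatorname{Ad}\circ\exp=\exp\circ\operatorname{ad}$ together with $\det\circ\exp=\exp\circ\operatorname{tr}$ linearising it to $\operatorname{tr}(\operatorname{ad}X)=0$ --- is the standard proof of that lemma, and each step checks out, including the identification $\phi_x\circ(\mathrm{d}R_x)_e=\operatorname{Ad}(x^{-1})$ (since $L_{x^{-1}}\circ R_x$ is conjugation by $x^{-1}$) and the remark that the determinant is automatically positive on $\exp(\mathfrak{g})$, so the absolute value plays no role. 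You also make explicit the computation $\operatorname{tr}(\operatorname{ad}X_i)={c_{ik}}^k$, which the paper leaves implicit. What your version buys is self-containedness; what the paper's buys is brevity. Either is acceptable, and you yourself note at the end that one could simply cite Milnor, which is exactly what the paper does.
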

\begin{proof}
    See Appendix \ref{APPENDIXPROOFS}.
\end{proof}
\begin{corollary} \label{remHAAR} If $G$ admits a bi-invariant Riemannian metric
then it is unimodular  and thus the volume form $\mu_{\boldsymbol{g}}$ induced by the metric coincide, up to positive constant
factors, with the (bi-invariant) Haar measure $\mu_G$ on $G$.
\end{corollary}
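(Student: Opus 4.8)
The plan is to deduce unimodularity from an elementary linear-algebra fact together with Proposition \ref{PROP13}, and then to read off the statement about the volume form from the general principle, recalled just above, that a left-invariant metric induces the left Haar measure up to a positive constant.

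The key observation is that the value $B_0:=\boldsymbol{g}_e$ of a bi-invariant metric at the identity is $\operatorname{Ad}$-invariant. Indeed, for each $x\in G$ the conjugation $C_x:=L_x\circ R_{x^{-1}}$ is a composition of isometries --- $L_x$ is an isometry because $\boldsymbol{g}$ is left-invariant, $R_{x^{-1}}$ because $\boldsymbol{g}$ is right-invariant --- hence an isometry of $(G,\boldsymbol{g})$; moreover $C_x(e)=e$, so its differential at $e$, which is by definition $\operatorname{Ad}(x)\colon\mathfrak{g}\to\mathfrak{g}$, is a linear isometry of $(\mathfrak{g},B_0)$, i.e. $B_0(\operatorname{Ad}(x)X,\operatorname{Ad}(x)Y)=B_0(X,Y)$ for all $X,Y\in\mathfrak{g}$. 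Replacing $x$ by $\exp(tZ)$ and differentiating at $t=0$ (using $\operatorname{Ad}(\exp tZ)=e^{t\operatorname{ad}(Z)}$) gives $B_0(\operatorname{ad}(Z)X,Y)+B_0(X,\operatorname{ad}(Z)Y)=0$ for all $X,Y,Z\in\mathfrak{g}$, that is, $\operatorname{ad}(Z)$ is skew-symmetric with respect to $B_0$. Since $B_0$ is non-degenerate (it is a metric, possibly indefinite), a $B_0$-skew-symmetric endomorphism of $\mathfrak{g}$ has vanishing trace: writing $\eta$ for the invertible Gram matrix of $B_0$ and $a$ for that of $\operatorname{ad}(Z)$ in a basis, skew-symmetry reads $a^{\mathsf T}=-\eta a\eta^{-1}$, whence $\operatorname{tr}a=\operatorname{tr}a^{\mathsf T}=-\operatorname{tr}a$. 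Thus $\operatorname{tr}(\operatorname{ad}(Z))=0$ for every $Z\in\mathfrak{g}$, equivalently ${c_{ik}}^k=0$, and Proposition \ref{PROP13} yields that (the identity component of) $G$ is unimodular. In fact the same computation gives this with no connectedness assumption: from $\operatorname{Ad}(x)\in O(B_0)$ one gets $(\det\operatorname{Ad}(x))^2=1$, so the modular function $\Delta_G=|\det\operatorname{Ad}(\cdot)|^{\pm1}$ is identically $1$ on all of $G$.

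It then remains to identify $\mu_{\boldsymbol{g}}$. Because $\boldsymbol{g}$ is left-invariant, each $L_x$ is an isometry and therefore preserves the (pseudo-)Riemannian volume density $\mu_{\boldsymbol{g}}$; hence $\mu_{\boldsymbol{g}}$ is a positive, left-invariant, regular Borel measure which is finite on compact sets, so by the uniqueness of Haar measure it coincides with the left Haar measure up to a positive constant --- exactly the fact recalled before the statement. Since $G$ is unimodular, its left Haar measure is also right-invariant, i.e. it is the bi-invariant Haar measure $\mu_G$; therefore $\mu_{\boldsymbol{g}}=c\,\mu_G$ for some constant $c>0$.

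I do not anticipate a genuine obstacle: the two ingredients --- that conjugations are isometries fixing $e$, and that a skew endomorphism for a non-degenerate bilinear form is traceless --- are both immediate. The only points deserving care are that, to obtain the statement for possibly disconnected $G$, one should invoke the determinant argument above rather than Proposition \ref{PROP13} verbatim, and that one should phrase $\mu_{\boldsymbol{g}}$ as a density/measure (equivalently, orient $G$ by a left-invariant frame, with respect to which every $L_x$ is orientation-preserving) so that ``left-invariant volume form'' holds on the nose and not merely up to sign.
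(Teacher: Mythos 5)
Your proof is correct and follows essentially the same route as the paper: both reduce unimodularity to $\operatorname{tr}(\operatorname{ad}(X))=0$ via the skew-symmetry of $\operatorname{ad}$ with respect to $\boldsymbol{g}_e$ and then invoke Proposition \ref{PROP13} together with the uniqueness of Haar measure. The only difference is that you rederive the skew-symmetry (the paper's identity (\ref{USEFUL}), Proposition \ref{PROPGEOD}) from the $\operatorname{Ad}$-invariance of $\boldsymbol{g}_e$ rather than citing it, and you carry out the trace computation invariantly instead of in structure-constant components.
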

\begin{proof}
    See Appendix \ref{APPENDIXPROOFS}.
\end{proof}

We now discuss the definition and the relevant properties of Laplace-Beltrami and Casimir operators on $G$. From now on we shall
interpret the vector fields on $G$ as smooth differential operators. In particular, a left invariant vector field $\widetilde{X}$ defined by an element  $X\in \mathfrak{g}$ is therefore a differential operator $\widetilde{X}: C^\infty(G;\mathbb{C})\to C^\infty(G;\mathbb{C})$.\\

\begin{definition}
Consider 
a Lie group $G$  equipped with a bi-invariant Riemannian metric $\boldsymbol{g}$.
Given a basis  $X_1,\ldots, X_{d}$  of $\mathfrak{g}$, define the $g_e^{ab}$ as the coefficients of the inverse of the matrix of coefficients $(g_e)_{ab}:= \boldsymbol{g}(X_a,X_b)$.\\
The second order differential operator
\begin{align}
X_{\boldsymbol{g}}^2 := g_e^{ab}\widetilde{X}_a \widetilde{X}_b : C^\infty(G;\mathbb{C})\to C^\infty(G;\mathbb{C}) \label{CASIMIR}
\end{align}
is called the (second order) {\bf Casimir operator of $(G,\boldsymbol{g})$}.
\end{definition}

Notice that the inverse metric is taken at $e$. It is not difficult to prove that the definition is intrinsic, i.e., it does not depend on the chosen basis of $\mathfrak{g}$. There exists a more abstract definition of $X_{\boldsymbol{g}}^2$ (see, e.g. \cite{Barut-Raczka-86}) based on the notion of  \textit{universal enveloping algebra}, but the above concrete definition is sufficient for the goals of this work.

\begin{proposition} \label{PROPCASIMIR} Assuming that the Lie group $G$ is equipped with a bi-invariant metric $\mathfrak{g}$,
the associated Casimir operator (\ref{CASIMIR}) satisfies
\begin{align}
X_{\boldsymbol{g}}^2  \widetilde{Y} = \widetilde{Y} X_{\boldsymbol{g}}^2\:, \quad \forall \widetilde{Y} \in \mathfrak{g}^L\:.
\end{align}
\end{proposition}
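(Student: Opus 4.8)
The plan is to show that $X_{\boldsymbol{g}}^2$ commutes with every left-invariant vector field by exhibiting it as a \emph{right-invariant} differential operator. The key observation is that, since $\boldsymbol{g}$ is bi-invariant, the adjoint action $\mathrm{Ad}(x)=(dL_x)_e\circ(dR_{x^{-1}})_e$ on $\mathfrak{g}$ is an isometry of $(\mathfrak{g},\boldsymbol{g}_e)$ for every $x\in G$; equivalently, $\mathrm{ad}(Z)$ is $\boldsymbol{g}_e$-skew-symmetric for every $Z\in\mathfrak{g}$, i.e. $\boldsymbol{g}_e([Z,X],Y)=-\boldsymbol{g}_e(X,[Z,Y])$ (this is presumably the content of the formula labelled \eqref{USEFUL} cited in Corollary \ref{remHAAR}). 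First I would fix a $\boldsymbol{g}_e$-orthonormal basis $X_1,\dots,X_d$ of $\mathfrak{g}$, so that $g_e^{ab}=\delta^{ab}$ (up to signs in the non-definite case, which only carry along harmlessly) and $X_{\boldsymbol{g}}^2=\sum_a \varepsilon_a\widetilde{X}_a\widetilde{X}_a$ with $\varepsilon_a=\pm1$; the intrinsic, basis-independent nature of $X_{\boldsymbol{g}}^2$ (noted right after the definition) makes this reduction legitimate.

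The core computation is then the following. Recall $\widetilde{X}(x)=(dL_x)_eX$, while a \emph{right}-invariant field is $\widehat{X}(x)=(dR_x)_eX$. One has the standard relation $\widetilde{X}=\sum_b \langle \mathrm{Ad}(x^{-1})?\rangle\dots$; more precisely, expressing left-invariant fields through right-invariant ones at each point: if $x\in G$ then $(dL_x)_e X = (dR_x)_e\big(\mathrm{Ad}(x)X\big)$, so $\widetilde{X}(x)=\widehat{\mathrm{Ad}(x)X}(x)$ pointwise (the hat here meaning the right-invariant extension evaluated at $x$). Feeding an orthonormal basis into the definition and using that $\mathrm{Ad}(x)$ is $\boldsymbol{g}_e$-orthogonal, a short rearrangement shows $\sum_a\varepsilon_a\widetilde{X}_a\widetilde{X}_a=\sum_a\varepsilon_a\widehat{X}_a\widehat{X}_a$, i.e. $X_{\boldsymbol{g}}^2$ is \emph{also} the Casimir built from right-invariant vector fields. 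Since left- and right-translations commute as diffeomorphisms ($L_x\circ R_y=R_y\circ L_x$), their pushforwards commute on vector fields, hence every right-invariant field commutes (as a first-order operator, up to the lower-order terms that must be tracked) with every left-invariant field; more cleanly, a right-invariant differential operator is by definition invariant under all $L_x$, and differentiating the invariance identity at $x=e$ shows it commutes with each $\widetilde{Y}\in\mathfrak{g}^L$. Applying this to the right-invariant operator $X_{\boldsymbol{g}}^2$ yields $X_{\boldsymbol{g}}^2\widetilde{Y}=\widetilde{Y}X_{\boldsymbol{g}}^2$.

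I expect the main obstacle to be bookkeeping rather than conceptual: one must verify carefully that the change from left- to right-invariant generators in the \emph{second-order} operator produces no leftover first-order term. The clean way is to use that the point-dependent change of frame is by the orthogonal matrix $\mathrm{Ad}(x)$, and that for any $\boldsymbol{g}_e$-orthonormal frame the expression $\sum_a \varepsilon_a Z_a Z_a$ (as a second-order operator) depends only on the metric, not on the frame, so applying it with the $x$-dependent orthonormal frame $\{\mathrm{Ad}(x)X_a\}_a$ reproduces exactly $\sum_a\varepsilon_a\widehat{X}_a\widehat{X}_a$ with no correction. An alternative, perhaps cleaner, route avoids frames entirely: note that $X_{\boldsymbol{g}}^2$ is, up to sign, the Laplace–Beltrami operator $\Delta_{\boldsymbol{g}}$ (this identification is used throughout the paper and should be recorded nearby), that $\Delta_{\boldsymbol{g}}$ is natural under isometries, and that every $R_{\exp(tY)}$ is an isometry of a bi-invariant metric (Proposition \ref{PROPO6}); differentiating $R_{\exp(tY)}^*\circ\Delta_{\boldsymbol{g}}=\Delta_{\boldsymbol{g}}\circ R_{\exp(tY)}^*$ at $t=0$ gives $[\widehat{Y},\Delta_{\boldsymbol{g}}]=0$ for all right-invariant $\widehat{Y}$; the symmetric argument with $L$ gives commutation with all $\widetilde{Y}\in\mathfrak{g}^L$ directly, which is the claim. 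Either way the proposition follows; I would present the isometry argument as the main line and relegate the orthonormal-frame identity to a remark if needed.
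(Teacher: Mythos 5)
Your proposal is correct in substance but takes a genuinely different route from the paper. The paper's proof is a two-line component computation: it expands the commutator $[X^2_{\boldsymbol{g}},\widetilde{X}_s]$ using $\{\widetilde{X}_i,\widetilde{X}_j\}={c_{ij}}^k\widetilde{X}_k$ and observes that the result vanishes exactly when $c_{sjk}+c_{skj}=0$, which is \eqref{USEFUL} written in components --- the same skew-symmetry of $\mathrm{ad}$ that you correctly identify as the key input. Your two routes are more conceptual: (a) exhibit $X^2_{\boldsymbol{g}}$ as the right-invariant Casimir and use that left- and right-invariant operators commute; (b) identify $X^2_{\boldsymbol{g}}$ with $\Delta_{\boldsymbol{g}}$ and differentiate its naturality under the isometries $R_{\exp(tY)}$. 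Route (b) is clean and correct, but it needs Proposition \ref{PROPX2DELTA} first; there is no circularity (that proposition is proved via \eqref{centralID}, not via the present one), yet it reverses the paper's logical order, in which Corollary \ref{COROLLARIO12} is \emph{deduced from} the present proposition. What the paper's argument buys is brevity and independence from the Levi-Civita connection; what yours buys is a structural explanation (bi-invariance makes the Casimir simultaneously left- and right-invariant).

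One genuine flaw in your "clean way" of discharging the bookkeeping in route (a): the claim that $\sum_a\varepsilon_a Z_aZ_a$ depends only on the metric and not on the ($x$-dependent) orthonormal frame is false in general. For a varying orthonormal frame this operator differs from $\Delta_{\boldsymbol{g}}$ by the first-order term $-\sum_a\varepsilon_a\nabla_{Z_a}Z_a$; e.g.\ the frame $\partial_r,\ r^{-1}\partial_\theta$ on $\mathbb{R}^2\setminus\{0\}$ gives $\partial_r^2+r^{-2}\partial_\theta^2\neq\Delta$. The identity $\sum_a\varepsilon_a\widetilde{X}_a\widetilde{X}_a=\sum_a\varepsilon_a\widehat{X}_a\widehat{X}_a$ is nevertheless true, but for a reason specific to the frame $\mathrm{Ad}(x)X_a$: writing $\widetilde{X}_a(x)=\sum_bA_{ba}(x)\widehat{X}_b(x)$ with $A(x)=\mathrm{Ad}(x)$, the leftover first-order term has coefficient $\sum_{a,b}\varepsilon_aA_{ba}\,(\widehat{X}_bA_{ca})$; since the flow of $\widehat{X}_b$ is left translation, $\widehat{X}_bA=\mathrm{ad}(X_b)\,A$, and using that $A$ is $\boldsymbol{g}_e$-orthogonal this collapses to $\sum_b\varepsilon_b(\mathrm{ad}(X_b))_{cb}$, i.e.\ a sum of components of $[X_b,X_b]=0$. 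Either supply that computation explicitly or promote the isometry argument to the main line (the flow of $\widetilde{Y}$ consists of the right translations $R_{\exp(tY)}$, which are isometries by bi-invariance, cf.\ Proposition \ref{PROPO6}); as written, the main line rests on a false general principle.
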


\begin{proof} See Appendix \ref{APPENDIXPROOFS}.
\end{proof}

\begin{remark} The operator $X_{\boldsymbol{g}}^2$ can be defined also if the metric is not bi-invariant. In this case however Proposition \ref{PROPCASIMIR} does not hold in general.
\end{remark}

The metric $\boldsymbol{g}$ on $G$ permits to define another important second-order differential operator, the {\bf Laplace-Beltrami} operator, in local coordinates,
\begin{align}
\Delta_{\boldsymbol{g}} f= g^{ab}\nabla_a (\mathrm{d}f)_b\:, \quad f \in C^\infty(G;\mathbb{C}) \label{LB}
\end{align}
where $\nabla$ is the Levi-Civita connection associated to $\boldsymbol{g}$.

The question arising at this juncture concerns the interplay of $X_{\boldsymbol{g}}^2$ (defined by (\ref{CASIMIR}) also if $\mathfrak{g}$ is not bi-invariant) and $\Delta_{\boldsymbol{g}}$. We now prove  that these two operators coincide if $\boldsymbol{g}$ is bi-invariant.

\begin{proposition}\label{PROPX2DELTA}
Let us consider a Lie group $G$ equipped with a bi-invariant Riemannian metric $\boldsymbol{g}$. The Casimir operator $X_{\boldsymbol{g}}^2$ (\ref{CASIMIR}) and the Laplace-Beltrami operator $\Delta_{\boldsymbol{g}}$  (\ref{LB}) associated to $\boldsymbol{g}$ satisfy 
$X_{\boldsymbol{g}}^2 = \Delta_{\boldsymbol{g}}$.
\end{proposition}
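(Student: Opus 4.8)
The plan is to compute both operators in the global moving frame provided by a basis $\widetilde{X}_1,\dots,\widetilde{X}_d$ of left-invariant vector fields and to compare them pointwise. I would start from the coordinate-free description of the Laplace--Beltrami operator as the metric trace of the Hessian, $\Delta_{\boldsymbol{g}}f=\mathrm{tr}_{\boldsymbol{g}}(\nabla\mathrm{d}f)$, which in the frame $\{\widetilde{X}_a\}$ reads
\begin{align*}
\Delta_{\boldsymbol{g}}f=g^{ab}\big(\widetilde{X}_a(\widetilde{X}_bf)-(\nabla_{\widetilde{X}_a}\widetilde{X}_b)f\big)\,,
\end{align*}
where $g^{ab}$ are the entries of the inverse of the Gram matrix $g_{ab}:=\boldsymbol{g}(\widetilde{X}_a,\widetilde{X}_b)$ in this frame. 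The crucial observation is that, since $\boldsymbol{g}$ is left-invariant, the functions $\boldsymbol{g}(\widetilde{X}_a,\widetilde{X}_b)$ are constant on $G$ and equal to $(g_e)_{ab}$, so $g^{ab}$ coincides with the constant coefficients $g_e^{ab}$ entering the definition \eqref{CASIMIR} of $X_{\boldsymbol{g}}^2$. Comparing the two displayed expressions, the proposition reduces to the single identity $g_e^{ab}\,\nabla_{\widetilde{X}_a}\widetilde{X}_b=0$.

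To establish this I would compute $\nabla_{\widetilde{X}}\widetilde{Y}$ for left-invariant vector fields via the Koszul formula. Applied to a triple of left-invariant fields $\widetilde{X},\widetilde{Y},\widetilde{Z}$, all the derivative terms $\widetilde{X}\boldsymbol{g}(\widetilde{Y},\widetilde{Z})$ and cyclic permutations vanish because the inner products are constant, leaving
\begin{align*}
2\,\boldsymbol{g}(\nabla_{\widetilde{X}}\widetilde{Y},\widetilde{Z})=\boldsymbol{g}(\widetilde{[X,Y]},\widetilde{Z})-\boldsymbol{g}(\widetilde{[X,Z]},\widetilde{Y})-\boldsymbol{g}(\widetilde{[Y,Z]},\widetilde{X})\,.
\end{align*}
At this point I invoke the $\mathrm{ad}$-invariance of $\boldsymbol{g}_e$ --- the infinitesimal form of bi-invariance, equivalently the total antisymmetry of the lowered structure constants $(g_e)_{cl}{c_{ab}}^l$, cf. \eqref{USEFUL} --- which turns the last two terms on the right-hand side into copies of the first, yielding $\boldsymbol{g}(\nabla_{\widetilde{X}}\widetilde{Y},\widetilde{Z})=\tfrac12\boldsymbol{g}(\widetilde{[X,Y]},\widetilde{Z})$ for all $\widetilde{Z}$, hence $\nabla_{\widetilde{X}}\widetilde{Y}=\tfrac12\widetilde{[X,Y]}$ (the standard formula for the Levi-Civita connection of a bi-invariant metric).

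Finally, substituting this into the reduced identity gives $g_e^{ab}\,\nabla_{\widetilde{X}_a}\widetilde{X}_b=\tfrac12\,g_e^{ab}\,\widetilde{[X_a,X_b]}=0$, since $g_e^{ab}$ is symmetric and $[X_a,X_b]$ antisymmetric in $a,b$. Consequently $\Delta_{\boldsymbol{g}}f=g_e^{ab}\,\widetilde{X}_a\widetilde{X}_bf=X_{\boldsymbol{g}}^2f$ for every $f\in C^\infty(G;\mathbb{C})$, which is the claim. I expect no serious obstacle here; the only points requiring care are the correct frame expression of $\Delta_{\boldsymbol{g}}$ (the trace of the Hessian, which differs from the naive $g^{ab}\widetilde{X}_a\widetilde{X}_b$ precisely by the connection term that bi-invariance annihilates) and the consistent use of the constancy of $\boldsymbol{g}(\widetilde{X}_a,\widetilde{X}_b)$, both of which are routine once left-invariance is exploited.
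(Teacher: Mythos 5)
Your proof is correct and follows essentially the same route as the paper's: both reduce the claim to the vanishing of the connection term $g_e^{ab}\nabla_{\widetilde{X}_a}\widetilde{X}_b$, obtained from the identity $\nabla_{\widetilde{X}}\widetilde{Y}=\tfrac12\widetilde{[X,Y]}$ for bi-invariant metrics (the paper's Corollary \ref{COROLLARIO4}, which you rederive inline via the Koszul formula and ad-invariance) together with the symmetry of $g_e^{ab}$ against the antisymmetry of the bracket. The only cosmetic difference is that the paper carries out the computation in local coordinates while you use the frame expression of the Hessian trace.
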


\begin{proof} See Appendix \ref{APPENDIXPROOFS}.
\end{proof}

\begin{corollary} \label{COROLLARIO12}  
Under the assumptions of Proposition \ref{PROPX2DELTA}, the Laplace-Beltrami operator $\Delta_{\boldsymbol{g}}$ commutes with both  all left-invariant vector fields and all right-invariant vector fields viewed as smooth differential operators.
\end{corollary}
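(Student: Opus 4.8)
The plan is to obtain the statement as an essentially formal consequence of the two preceding propositions, Proposition \ref{PROPCASIMIR} and Proposition \ref{PROPX2DELTA}, supplemented only by the classical fact that left-invariant and right-invariant vector fields commute. For the left-invariant case there is nothing to do beyond quoting these: Proposition \ref{PROPX2DELTA} gives $\Delta_{\boldsymbol{g}}=X_{\boldsymbol{g}}^2$ as differential operators on $C^\infty(G;\mathbb{C})$, while Proposition \ref{PROPCASIMIR} states $X_{\boldsymbol{g}}^2\widetilde{Y}=\widetilde{Y}X_{\boldsymbol{g}}^2$ for every $\widetilde{Y}\in\mathfrak{g}^L$; hence $\Delta_{\boldsymbol{g}}$ commutes with every left-invariant vector field viewed as a differential operator.

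For the right-invariant case I would first establish the auxiliary fact that $\{\widetilde{X},Z\}=0$ for every $\widetilde{X}\in\mathfrak{g}^L$ and every $Z\in\mathfrak{g}^R$. The cleanest argument uses flows: by Proposition \ref{PROPO6}(b) (whose relevant identity holds for any Lie group) the flow of $\widetilde{X}$ is $\Phi^{\widetilde{X}}_s=R_{\exp(sX)}$, and by the entirely parallel computation on $G^{\mathrm{op}}$ the flow of $Z$ with $Z(e)=Y$ is $\Phi^{Z}_t=L_{\exp(tY)}$; since left and right translations always commute, $R_a\circ L_b=L_b\circ R_a$ for all $a,b\in G$, the two flows commute, and vector fields with commuting flows have vanishing Lie bracket. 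Equivalently, as first-order differential operators, $[\widetilde{X},Z]=0$. Then, writing $\Delta_{\boldsymbol{g}}=X_{\boldsymbol{g}}^2=g_e^{ab}\widetilde{X}_a\widetilde{X}_b$ with constant coefficients $g_e^{ab}$ and using the derivation (Leibniz) property of the commutator together with $[\widetilde{X}_a,Z]=[\widetilde{X}_b,Z]=0$, one finds
\[
[\Delta_{\boldsymbol{g}},Z]=g_e^{ab}\bigl(\widetilde{X}_a[\widetilde{X}_b,Z]+[\widetilde{X}_a,Z]\widetilde{X}_b\bigr)=0
\]
for every $Z\in\mathfrak{g}^R$, which completes the proof. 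As an independent cross-check one may note that for a bi-invariant metric the structure constants are totally antisymmetric (the same fact behind Corollary \ref{remHAAR}), so $X_{\boldsymbol{g}}^2$ also equals $g_e^{ab}\widehat{X}_a\widehat{X}_b$ in a right-invariant basis $\widehat{X}_a$, and then Proposition \ref{PROPCASIMIR} applied to $G^{\mathrm{op}}$ yields commutation with $\mathfrak{g}^R$ directly.

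I expect no genuine obstacle here: the only non-bookkeeping ingredient is the commutation of left- and right-invariant fields, which is classical, and the remainder reduces to combining Propositions \ref{PROPCASIMIR} and \ref{PROPX2DELTA} with the Leibniz rule for operator commutators.
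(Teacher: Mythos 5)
Your proof is correct, and for the right-invariant half it takes a genuinely different route from the paper. The paper handles right-invariant fields by introducing an auxiliary right-invariant Casimir operator $X^2_R$ built on the opposite Lie bracket $[\cdot,\cdot]'=-[\cdot,\cdot]$, checking that the antisymmetry condition $c'_{sjk}+c'_{skj}=0$ survives the sign flip, and then rerunning the proofs of Proposition \ref{PROPCASIMIR} and Proposition \ref{PROPX2DELTA} in that setting to conclude $X^2_R=\Delta_{\boldsymbol{g}}$ and hence the commutation. You instead use the classical fact that left- and right-invariant vector fields have vanishing Lie bracket (commuting flows $R_{\exp(sX)}$ and $L_{\exp(tY)}$), so that $[\Delta_{\boldsymbol{g}},Z]=g_e^{ab}\bigl(\widetilde{X}_a[\widetilde{X}_b,Z]+[\widetilde{X}_a,Z]\widetilde{X}_b\bigr)=0$ by Leibniz. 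Your argument is shorter, avoids re-proving two propositions for $G^{\mathrm{op}}$, and makes visible that $X^2_{\boldsymbol{g}}$ commutes with $\mathfrak{g}^R$ for \emph{any} choice of metric coefficients $g_e^{ab}$ --- bi-invariance enters only through the identification $X^2_{\boldsymbol{g}}=\Delta_{\boldsymbol{g}}$ of Proposition \ref{PROPX2DELTA}. What the paper's longer route buys is the extra fact that the right-invariant Casimir also equals $\Delta_{\boldsymbol{g}}$, which your closing ``cross-check'' recovers anyway. One minor point: you correctly flag that the flow identity $\Phi^{\widetilde{X}}_t=R_{\exp(tX)}$ is metric-independent, so citing Proposition \ref{PROPO6}(b) is not strictly needed there.
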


\begin{proof} See Appendix \ref{APPENDIXPROOFS}.
\end{proof}

Let us consider now the self-adjointness properties of $X_{\boldsymbol{g}}^2$ in the natural Hilbert space provided by the structure $(G,{\boldsymbol{g}})$ where $\boldsymbol{g}$ is a bi-invariant Riemannian metric.

It is clear that, on a smooth Riemannian manifold $(M,{\boldsymbol{g}})$, the Laplace-Beltrami operator  $\Delta_{\boldsymbol{g}}$ is symmetric if defined as $\Delta_{\boldsymbol{g}}: C_c^\infty(M;\mathbb{C}) \to L^2(M, \mu_{\boldsymbol{g}})$ and $\mu_{\boldsymbol{g}}$ being the volume form induced by the metric. This leads to the following result.\\

\begin{proposition}\label{PROPESA}  Consider a Lie group  $G$ equipped with a bi-invariant Riemannian metric ${\boldsymbol{g}}$.  Then $X^2_{\boldsymbol{g}} = \Delta_{\boldsymbol{g}}: C_c^\infty(G;\mathbb{C}) \to L^2(G, \mu_G)$ is essentially selfadjoint, where $\mu_G$ indicates the  Haar measure.
\end{proposition}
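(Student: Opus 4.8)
The strategy is to use non-commutative harmonic analysis on the compact group $G$ --- concretely, the Peter--Weyl decomposition of $L^2(G,\mu_G)$ --- to produce a complete orthonormal system of eigenvectors of $X_{\boldsymbol{g}}^2$ lying in $C_c^\infty(G;\bC)$ and having real eigenvalues; essential self-adjointness then follows from the classical criterion that a symmetric operator admitting such a system inside its domain has trivial deficiency subspaces.

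I would begin with the elementary reductions. Since $G$ is compact, $C_c^\infty(G;\bC)=C^\infty(G;\bC)$, and by Corollary \ref{remHAAR} the Haar measure $\mu_G$ coincides, up to a positive constant, with the Riemannian volume form $\mu_{\boldsymbol{g}}$; hence, by Proposition \ref{PROPX2DELTA}, $X_{\boldsymbol{g}}^2=\Delta_{\boldsymbol{g}}$ is a densely defined symmetric operator on $L^2(G,\mu_G)$, as recalled just before the statement. Being the Laplace--Beltrami operator of the bi-invariant metric $\boldsymbol{g}$, it commutes with all left- and right-invariant vector fields (Corollary \ref{COROLLARIO12}) and, more generally, with the isometries $L_g$ and $R_g$ for every $g\in G$; equivalently, it commutes with the left and the right regular representations of $G$ on $L^2(G,\mu_G)$.

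Next I would invoke the Peter--Weyl theorem \cite{Warner}: $L^2(G,\mu_G)=\widehat{\bigoplus}_{[\pi]\in\widehat{G}}\mathcal{M}_\pi$, where $\widehat{G}$ denotes the set of equivalence classes of irreducible unitary representations $\pi\colon G\to\mathrm{U}(V_\pi)$ with $\dim V_\pi<+\infty$, and $\mathcal{M}_\pi\subset C^\infty(G;\bC)$ is the finite-dimensional space spanned by the matrix coefficients $\pi_{ij}(x):=\langle e_i,\pi(x)e_j\rangle$ relative to an orthonormal basis $\{e_i\}$ of $V_\pi$; moreover the suitably normalized matrix coefficients form an orthonormal basis of $L^2(G,\mu_G)$. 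In particular every $\pi_{ij}$ is smooth, hence lies in $C_c^\infty(G;\bC)$ and thus in the domain of $X_{\boldsymbol{g}}^2$. Each $\mathcal{M}_\pi$ is invariant under both regular representations and is irreducible as a module over $G\times G$ (being isomorphic to the outer tensor product of $\pi$ with its contragredient), so Schur's lemma applied to $X_{\boldsymbol{g}}^2|_{\mathcal{M}_\pi}$, which commutes with the $G\times G$-action, yields $X_{\boldsymbol{g}}^2|_{\mathcal{M}_\pi}=\lambda_\pi\,\mathrm{id}$ for a scalar $\lambda_\pi$; alternatively, since the flow of $\widetilde{X}$ is $R_{\exp(tX)}$ (Proposition \ref{PROPO6}), one computes $X_{\boldsymbol{g}}^2\pi_{ij}=\langle e_i,\pi(\cdot)\,d\pi(X_{\boldsymbol{g}}^2)e_j\rangle$ with $d\pi(X_{\boldsymbol{g}}^2):=g_e^{ab}\,d\pi(X_a)d\pi(X_b)$, and Proposition \ref{PROPCASIMIR} forces $d\pi(X_{\boldsymbol{g}}^2)$ to commute with $d\pi(\mathfrak{g})$, hence to be scalar. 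Finally $\lambda_\pi$ is real, since $\lambda_\pi\|\pi_{ij}\|^2=\langle X_{\boldsymbol{g}}^2\pi_{ij},\pi_{ij}\rangle\in\bR$ by symmetry.

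To conclude, I would run the standard deficiency-index argument: if $\phi\in L^2(G,\mu_G)$ satisfies $(X_{\boldsymbol{g}}^2)^{\ast}\phi=\pm i\phi$, then for every normalized matrix coefficient $v\in\mathcal{M}_\pi$ one gets $\pm i\langle\phi,v\rangle=\langle(X_{\boldsymbol{g}}^2)^{\ast}\phi,v\rangle=\langle\phi,X_{\boldsymbol{g}}^2 v\rangle=\lambda_\pi\langle\phi,v\rangle$ with $\lambda_\pi\in\bR$, whence $\langle\phi,v\rangle=0$ for all such $v$ and therefore $\phi=0$; thus both deficiency subspaces vanish and $X_{\boldsymbol{g}}^2$ is essentially self-adjoint. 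The only genuinely non-trivial step is the use of Peter--Weyl together with Schur's lemma to diagonalize $X_{\boldsymbol{g}}^2$; everything else is routine. (If one wished to allow a non-compact abelian factor, i.e. to treat an arbitrary Lie group carrying a bi-invariant Riemannian metric, one would combine the compact case with the elementary essential self-adjointness of $-\Delta$ on $\bR^n$; or, since a bi-invariant metric is geodesically complete by Proposition \ref{PROPO6}, one could instead invoke directly the Gaffney--Chernoff theorem on essential self-adjointness of the Laplace--Beltrami operator of a complete Riemannian manifold.)
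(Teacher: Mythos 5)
Your argument is correct, but it follows a genuinely different route from the paper's, and your main line of reasoning proves slightly less than what is stated. The paper's proof is essentially the one you relegate to your closing parenthesis: by Proposition \ref{PROPO6} the geodesics of a bi-invariant metric are the integral curves of the left-invariant vector fields, i.e.\ translated one-parameter subgroups, hence complete; so $(G,\boldsymbol{g})$ is a complete Riemannian manifold and essential self-adjointness of $\Delta_{\boldsymbol{g}}$ on $C_c^\infty(G;\bC)$ follows from the general theorem on complete manifolds (Gaffney/Chernoff, \cite{ChernoffSA}). That argument covers the statement as written, which does \emph{not} assume $G$ compact --- recall that a connected group carrying a bi-invariant Riemannian metric need only be a product of a compact group with $\bR^n$. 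Your main argument --- Peter--Weyl decomposition, commutation of $X_{\boldsymbol{g}}^2$ with both regular representations plus Schur's lemma to show it acts as a real scalar on each finite-dimensional space of matrix coefficients, and the standard deficiency-index criterion for a symmetric operator whose domain contains a total family of eigenvectors --- is sound, but it requires compactness, so on its own it establishes the proposition only in the compact case; the general case rests entirely on your parenthetical remark. What your route buys in the compact case is an explicit spectral resolution of $\overline{\Delta_{\boldsymbol{g}}}$ into finite-dimensional eigenspaces attached to the irreducible representations, which is precisely the material the paper develops separately afterwards (Propositions \ref{PROPSPECD}, \ref{PROPRAPR} and \ref{PROPALG}); the paper's route is shorter and uniform in $G$.
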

\begin{proof}
See Appendix \ref{APPENDIXPROOFS}.
\end{proof}

Let us consider now the spectrum of $\overline{\Delta_{\boldsymbol{g}}}$, where the bar henceforth  denotes the closure in the Hilbert space $L^2(G, \mu_G)$, i.e., the unique selfadjoint extension of $\Delta_{\boldsymbol{g}}$ with domain $C_c^\infty(G;\mathbb{C})$ on account of Proposition \ref{PROPESA}.\\

\begin{proposition} \label{PROPSPECD} Consider a compact Lie group $G$ equipped with a bi-invariant Riemannian metric ${\boldsymbol{g}}$. The spectrum\footnote{Notice that $\overline{aA+bI}=a\overline {A} +bI$.} of $-\overline{\Delta_{\boldsymbol{g}}}$ is positive and discrete. More precisely,
\begin{itemize}
\item[(1)] $\sigma(-\overline{\Delta_{\boldsymbol{g}}})$ is a countably infinite set of reals $0 < \lambda_0 < \lambda_1 < \ldots < \lambda_n \to +\infty$ as $n\to +\infty$.
\item[(2)]  Every eigenspace $H_\lambda$, where $\lambda \in \sigma(-\overline{\Delta_{\boldsymbol{g}}})$, has finite dimension $d_\lambda$.
\item[(3)]  Every eigenspace $H_\lambda$ is made of $C^\infty$ functions so that the eigenvectors of $-\overline{\Delta_{\boldsymbol{g}}}$ are also eigenfunctions of the differential operator  $-\Delta_{\boldsymbol{g}}$.
\item[(4)]
The orthogonal Hilbert decompositions hold
\begin{align}&L^2(G, \mu_G) = \bigoplus_{\lambda \in \sigma(-\overline{\Delta_{\boldsymbol{g}}})} H_\lambda\:,\label{DECH}\\
&-\overline{\Delta_{\boldsymbol{g}}} f= \sum_{n=0}^{+\infty} \lambda_n P_nf\:, \quad \qquad \forall f \in D(\sigma(-\overline{\Delta_{\boldsymbol{g}}}))\:, \label{DECH2}
\end{align}
where $P_n$ is the orthogonal projector on $H_{\lambda_n}$ and
\begin{align}D(-\overline{\Delta_{\boldsymbol{g}}})= \left\{f\in L^2(G,\mu_G) \:\left|\: \sum_n \lambda_n^2 \|P_{n}f\|^2<+\infty\right.\right\}\:.\end{align}
\item[(5)] For  every $z \in \mathbb{C}$ with $Re(z) \geq 0$,  the operator $e^{-z \overline{\Delta_{\boldsymbol{g}}}}$ --
defined by spectral calculus -- belongs to $\mathfrak{B}(L^2(G,\mu_G))$. Moreover, it  is  compact and trace-class for $Re(z)>0$.
\end{itemize}
 
\end{proposition}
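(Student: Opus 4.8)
The plan is to read the statement as the classical spectral theory of the Laplacian on a closed Riemannian manifold: since $G$ is compact we have $C_c^\infty(G;\mathbb{C})=C^\infty(G;\mathbb{C})$ and $(G,\boldsymbol{g})$ is a compact manifold without boundary, so I would assemble everything from Green's identity, the compactness of the resolvent of $\overline{\Delta_{\boldsymbol{g}}}$ (equivalently, the trace-class property of the heat semigroup), elliptic regularity, and the spectral theorem.

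First, for non-negativity and the bottom of the spectrum: for $f\in C^\infty(G;\mathbb{C})$, integration by parts on the closed manifold $G$ (using $\mu_G=\mu_{\boldsymbol{g}}$ by Corollary \ref{remHAAR}) gives $\langle -\Delta_{\boldsymbol{g}}f,f\rangle_{L^2(G,\mu_G)}=\int_G\boldsymbol{g}(\mathrm{d}f,\overline{\mathrm{d}f})\,\mathrm{d}\mu_G\ge 0$, so the essentially self-adjoint operator $-\Delta_{\boldsymbol{g}}$, and hence its closure $-\overline{\Delta_{\boldsymbol{g}}}$, are non-negative; the constants lie in the kernel, so $0$ belongs to the spectrum and is its minimum. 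Thus $\sigma(-\overline{\Delta_{\boldsymbol{g}}})\subset[0,+\infty)$.

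Next, for discreteness, finite multiplicities and the decompositions \eqref{DECH}--\eqref{DECH2}, I would work through the heat semigroup, which also prepares item (5): by the classical theory of the heat equation on a compact Riemannian manifold, for each $t>0$ the operator $e^{t\overline{\Delta_{\boldsymbol{g}}}}$ is an integral operator with kernel $p_t\in C^\infty(G\times G)$; being self-adjoint, injective, positive and Hilbert--Schmidt, the spectral theorem for compact operators endows it with a discrete spectrum $\{e^{-t\lambda}\}$ with finite-dimensional eigenspaces accumulating only at $0$, whence $\sigma(-\overline{\Delta_{\boldsymbol{g}}})$ is discrete with finite multiplicities $d_\lambda$ and $\lambda_n\to+\infty$, while \eqref{DECH}--\eqref{DECH2} are just the spectral resolution of a self-adjoint operator with pure point spectrum; the spectrum is countably infinite because each $H_\lambda$ is finite-dimensional whereas $\dim L^2(G,\mu_G)=\infty$. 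One could instead invoke Rellich--Kondrachov to see directly that the second-order elliptic operator $\Delta_{\boldsymbol{g}}$ has compact resolvent on the compact manifold $G$; or, in the Lie-theoretic spirit, combine the Peter--Weyl theorem with Proposition \ref{PROPX2DELTA}, since the central operator $X_{\boldsymbol{g}}^2=\Delta_{\boldsymbol{g}}$ then acts by a scalar on each finite-dimensional isotypic component of $L^2(G,\mu_G)$ by Schur's lemma.

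Finally, for item (3): a solution $f\in L^2(G,\mu_G)$ of $-\overline{\Delta_{\boldsymbol{g}}}f=\lambda f$ satisfies $(\Delta_{\boldsymbol{g}}+\lambda)f=0$ distributionally, and since $\Delta_{\boldsymbol{g}}+\lambda$ is elliptic with smooth coefficients, elliptic regularity forces $f\in C^\infty(G;\mathbb{C})$, so the eigenvectors of $-\overline{\Delta_{\boldsymbol{g}}}$ are genuine eigenfunctions of $-\Delta_{\boldsymbol{g}}$. For item (5): when $\mathrm{Re}(z)\ge0$ the Borel function $\lambda\mapsto e^{-z\lambda}$ is bounded by $1$ on $[0,+\infty)$, so $e^{-z\overline{\Delta_{\boldsymbol{g}}}}$ (functional calculus) is bounded with norm $\le 1$ --- unitary if $\mathrm{Re}(z)=0$, a contraction otherwise; when $\mathrm{Re}(z)=\tau>0$ one has $e^{-z\overline{\Delta_{\boldsymbol{g}}}}=\sum_n e^{-z\lambda_n}P_n$ with $\sum_n d_{\lambda_n}|e^{-z\lambda_n}|=\sum_n d_{\lambda_n}e^{-\tau\lambda_n}=\operatorname{tr}\big(e^{\tau\overline{\Delta_{\boldsymbol{g}}}}\big)=\int_G p_\tau(x,x)\,\mathrm{d}\mu_G(x)<+\infty$, finite because $p_\tau$ is continuous on the compact $G\times G$ (equivalently, by Weyl's asymptotics $\lambda_n\sim c\,n^{2/d}$), so $e^{-z\overline{\Delta_{\boldsymbol{g}}}}$ is trace-class, in particular compact, for $\mathrm{Re}(z)>0$. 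The main obstacle is the single analytic input that underlies all of this --- the compactness of the resolvent of $\overline{\Delta_{\boldsymbol{g}}}$, equivalently the existence and smoothness of the heat kernel on $G$ --- which is precisely where the compactness of the manifold $G$ (rather than merely its Lie group structure) enters; once that is available, the rest is bookkeeping with the spectral theorem, elliptic regularity and the functional calculus, and item (5) can alternatively be obtained directly from Weyl's law for $(G,\boldsymbol{g})$.
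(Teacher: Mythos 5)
Your proof is correct and follows essentially the same route as the paper, which simply defers items (1)--(3) to the classical spectral theory of the Laplace--Beltrami operator on a closed Riemannian manifold (citing Chavel) and obtains (4)--(5) from the spectral theorem for self-adjoint operators with pure point spectrum; your write-up merely makes the standard inputs explicit (Green's identity, compactness of the heat semigroup/resolvent, elliptic regularity, the heat-trace bound), and the Peter--Weyl alternative you mention is exactly the mechanism the paper exploits later in Propositions \ref{PROPRAPR} and \ref{PROPALG}. One remark: your (correct) observation that the constants lie in the kernel, so that $0\in\sigma(-\overline{\Delta_{\boldsymbol{g}}})$, shows that the strict inequality $0<\lambda_0$ in item (1) should read $0\le\lambda_0$ --- indeed the paper itself uses $1\in H_0$ in the proof of Proposition \ref{PROPALG} --- so this is a slip in the statement rather than a gap in your argument.
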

\begin{proof}
See Appendix \ref{APPENDIXPROOFS}.
\end{proof}

If $A: D(A) \to H$ is an operator (generally unbounded) in a Hilbert space $H$, a vector $\psi \in H$ is said to be {\bf analytic} for $A$ if $\psi \in \bigcap_{n\in\mathbb{N}} D(A^n)$ and $\sum_{n=0}^{+\infty} t^n\|A^n\psi\|/n!$ converges for some $t>0$. From spectral calculus (see, e.g., \cite{Moretti-17}) it follows that $e^{s A} \psi = \sum_{n=0}^{+\infty} \frac{s^n}{n!}A^n\psi$ for every complex $s$ with $|s|<t$. Finite linear combinations of eigenvectors of operators are automatically analytic.
Therefore the eigenvectors of $-\overline{\Delta_{\boldsymbol{g}}}$ and their finite linear combinations are obviously analytic vectors of that operator.

We eventually discuss the relation between the eigenspaces of $\overline{\Delta_{\boldsymbol{g}}}$ and the representation of $G$. We start by  recalling a well known definition that will play a central role in our construction.
\begin{definition}\label{def-right-regular-representation} If $G$ is a locally-compact topological group and $\mu_G$ is the right-invariant Haar measure,
then the right-action $G\ni x \mapsto R_x\in Aut(G)$ provides a \textit{strongly continuous unitary} representation  of $G$ on $L^2(G,\mu_G)$, called the {\bf right regular representation}:
\begin{align} G \ni x \mapsto \pi_R(x) \in \mathfrak{B}(L^2(G,\mu_G)) \quad \mbox{where $[\pi_R(x)f](y)=f(yx)$ for $f\in L^2(G,\mu_G)$}\:.\label{PIR} \end{align}
\end{definition}

The fact that $\pi_R$ is a unitary  strongly continuous  representation easily arises from the given definition. If $G$ is compact the  \textit{Peter-Weyl theorem} implies that the Hilbert space decomposes into a Hilbert sum of $\pi_R$-invariant and irreducible subspaces of finite dimension. This result, in the case of a Lie group equipped with a bi-invariant metric, can be proved autonomously providing also further information about the structure of $\pi_R$ and its interplay with the Casimir operator, i.e., the Laplace-Beltrami operator. 

\begin{proposition}\label{PROPRAPR}  Consider a compact Lie group $G$ equipped with a bi-invariant Riemannian metric ${\boldsymbol{g}}$. 
Then:
\begin{itemize}
\item[(1)] The finite-dimensional eigenspaces $H_\lambda$ of $\overline{\Delta_{\boldsymbol{g}}}$ in the Hilbert decomposition (\ref{DECH}) are invariant under the action of $\pi_R$, which correspondingly decomposes into finite-dimensional subrepresentations as 
\begin{align} \pi_R=\bigoplus_{\lambda\in\sigma(-\overline{\Delta_{\boldsymbol{g}}})}\pi_R^\lambda\,, 
\quad \mbox{where}\quad \pi_R^\lambda(x) := \pi_R(x)|_{H_\lambda} \to H_\lambda\quad \forall x\in G\:.\label{OPLUSP}\end{align}
In turn, every subspace $H_\lambda$ is a finite orthogonal sum of $\pi_R$-invariant and irreducible finite-dimensional subspaces of $L^2(G,\mu_G)$.
\item[(2)]  Take $X \in \mathfrak{g}$ and let $\mathbb{R} \ni t \mapsto \exp(tX)$ be the generated strongly-continuous unitary one-parameter subgroup, so that (due to Stone's theorem)  $$\mathbb{R} \ni t \mapsto \pi_R(\exp(tX)) = e^{-i t\: X^R}\:,$$ for a unique selfadjoint operator $X^R : D(X^R) \to L^2(G,\mu_G)$. Then,
\begin{itemize}
\item[(a)] $H_\lambda \subset D(X^R)$ and $X^R(H_\lambda)\subset H_\lambda$ for every $\lambda \in \sigma(-\overline{\Delta_{\boldsymbol{g}}})$ and $X\in \mathfrak{g}$,
\item[(b)] $-iX^R|_{H_\lambda} = \widetilde{X}|_{H_\lambda}$,
\item[(c)] $e^{-i t\: X^R} = \bigoplus_{\lambda \in \sigma(-\overline{\Delta_{\boldsymbol{g}}})} e^{-i t\: X^R|_{H_\lambda}}$ if $t\in \mathbb{R}$.
\end{itemize}
As a consequence, the vectors of each space $\bigoplus_{\lambda \in \Lambda} H_\lambda$, for $\Lambda \subset \sigma(-\overline{\Delta_{\boldsymbol{g}}})$ bounded, are {\bf analytic
vectors of} $\pi_R$, i.e., 
analytic vectors of  $X^R$  for every $X\in \mathfrak{g}$.

\item[(3)] If $z\in \mathbb{C}$, with  $Re\, z \geq 0$, then  $e^{-z \overline{\Delta_{\boldsymbol{g}}}}$ commutes with $\pi_R$ and leaves invariant every $H_\lambda$, where trivially $e^{-z \overline{\Delta_{\boldsymbol{g}}}}|_{H_\lambda}= e^{-z\lambda}I|_{H_\lambda}$.
\end{itemize}
\end{proposition}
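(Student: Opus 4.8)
The plan is to deduce everything from the spectral structure of $\overline{\Delta_{\boldsymbol{g}}}$ established in Proposition \ref{PROPSPECD} together with the commutation relations of Corollary \ref{COROLLARIO12} and Proposition \ref{PROPCASIMIR}. For part (1), the key observation is that $\pi_R(x)$ intertwines $\overline{\Delta_{\boldsymbol{g}}}$ with itself. Indeed, since $R_x$ is an isometry of $(G,\boldsymbol{g})$ (right-invariance), it preserves $\mu_G$ and commutes with $\Delta_{\boldsymbol{g}}$ on $C^\infty_c(G;\mathbb{C})$; by the essential self-adjointness of Proposition \ref{PROPESA} and a standard uniqueness argument, $\pi_R(x)$ commutes with the closure $\overline{\Delta_{\boldsymbol{g}}}$, hence with each spectral projector $P_n$, hence leaves each eigenspace $H_{\lambda_n}$ invariant. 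This gives the decomposition \eqref{OPLUSP}, with $\pi_R^\lambda$ a finite-dimensional unitary (hence completely reducible) subrepresentation, which then splits into irreducibles; this is also exactly the Peter--Weyl picture in this setting, so one may alternatively cite Peter--Weyl and identify the pieces.

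For part (2), I would first note that since $-i t X^R$ generates $t\mapsto \pi_R(\exp tX)$ and each $H_\lambda$ is $\pi_R$-invariant and \emph{finite-dimensional}, the restriction $\pi_R(\exp tX)|_{H_\lambda}$ is a smooth one-parameter group of unitaries on a finite-dimensional space; differentiating at $t=0$ shows $H_\lambda\subset D(X^R)$, $X^R(H_\lambda)\subset H_\lambda$, and identifies $-iX^R|_{H_\lambda}$ with the generator of $R_{\exp tX}$ acting on functions, which by Proposition \ref{PROPO6}(b) (the integral curves of $\widetilde X$ are $t\mapsto R_{\exp tX}x$) is precisely $\widetilde X|_{H_\lambda}$. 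This proves (a) and (b). Statement (c) is then the assertion that the self-adjoint operator $X^R$ respects the orthogonal decomposition \eqref{DECH}: since $H_\lambda$ is invariant and $X^R|_{H_\lambda}$ is bounded self-adjoint, $\bigoplus_\lambda e^{-itX^R|_{H_\lambda}}$ is a strongly continuous unitary group whose generator extends $\widetilde X$ on the (dense) span of the $H_\lambda$, and by essential self-adjointness of $X^R$ on that span (analytic-vector argument, Nelson) it must equal $e^{-itX^R}$. The analyticity claim follows because a vector in $\bigoplus_{\lambda\in\Lambda}H_\lambda$ with $\Lambda$ bounded lies in a finite-dimensional $X^R$-invariant subspace, so the power series $\sum t^n\|(X^R)^n\psi\|/n!$ converges for all $t$.

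For part (3), by Proposition \ref{PROPSPECD}(5) the operator $e^{-z\overline{\Delta_{\boldsymbol{g}}}}$ is defined by spectral calculus as a function of $\overline{\Delta_{\boldsymbol{g}}}$; since $\pi_R(x)$ commutes with $\overline{\Delta_{\boldsymbol{g}}}$ (from part (1)), it commutes with every bounded Borel function of it, in particular with $e^{-z\overline{\Delta_{\boldsymbol{g}}}}$. That this operator leaves each $H_\lambda$ invariant and acts there as $e^{-z\lambda}I$ is immediate from the spectral decomposition $e^{-z\overline{\Delta_{\boldsymbol{g}}}}=\sum_n e^{-z\lambda_n}P_n$ and $H_{\lambda_n}=P_n(L^2)$. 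The main obstacle is the care needed in part (2)(c): passing from the finite-dimensional identifications $-iX^R|_{H_\lambda}=\widetilde X|_{H_\lambda}$ to the global statement requires knowing that the algebraic direct sum of the $H_\lambda$ is a core for $X^R$, which is where the analytic-vector / Nelson-type argument enters; everything else is a routine consequence of spectral calculus and the isometry property of right translations.
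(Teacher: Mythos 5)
Your proposal is correct and follows essentially the same route as the paper: right translations are isometries, hence commute with $\Delta_{\boldsymbol{g}}$ and preserve its eigenspaces; the finite-dimensionality of each $H_\lambda$ then reduces (2) to elementary differentiation and matrix exponentials, and (3) is spectral calculus. The only divergence is in (2)(c), where you invoke a Nelson-type core argument, whereas the paper obtains the decomposition more directly from the fact (already established in (1)) that each $H_\lambda$ is invariant under the whole group $e^{-itX^R}=\pi_R(\exp(tX))$, whose restriction to the finite-dimensional $H_\lambda$ is then necessarily $e^{-itX^R|_{H_\lambda}}$.
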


\begin{proof}
See Appendix \ref{APPENDIXPROOFS}.
\end{proof}

We finally present a crucial but less known result based on general harmonic analysis on compact (Lie) groups \cite{RT}. Some definitions are necessary.  If $G$ is a compact Lie group, in the following
$C(G)$ denotes the commutative unital $C^*$ algebra of the complex continuous functions on $G$ with natural pointwise operations and norm $\|\cdot\|_\infty$.
It is clear that $C(G)$ is dense in $L^2(G,\mu_G)$.

\begin{definition}\label{Def: finite energy functions}
Consider a compact Lie group $G$ equipped with a bi-invariant Riemannian metric ${\boldsymbol{g}}$. If $H_\lambda$ denotes  the (finite dimensional) $\lambda$-eigenspaces of $-\overline{\Delta_{\boldsymbol{g}}}$, the space of {\bf finite-energy} vectors $F_G\equiv Span\{H_\lambda\:|\: \lambda \in \sigma(-\overline{\Delta_{\boldsymbol{g}}}) \}$ is the dense subspace of $L^2(G,\mu_G)$
  made of \textit{finite} complex linear combinations of  elements of the spaces $H_\lambda$. \\
\end{definition}
In view of the previous results, $F_G$ is made of smooth functions which are analytic vectors for $\overline{\Delta_{\boldsymbol{g}}}$ and  $\pi_R$. Furthermore $F_G$ is invariant under  $\pi_R$, every generator $X^L$, $\overline{\Delta_{\boldsymbol{g}}}$ and the  one parameter groups of unitaries it generates. We have also the following interesting fact. 
\begin{proposition}\label{PROPALG}  Referring to Definition \ref{Def: finite energy functions},  $F_G$ is a $\|\cdot\|_\infty$-dense unital subalgebra of $C(G)$.
\end{proposition}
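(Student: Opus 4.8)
The plan is to deduce the statement from the Stone--Weierstrass theorem, so that the actual work reduces to showing that $F_G$ is closed under pointwise multiplication and that it separates the points of $G$; the remaining hypotheses are immediate. Indeed $F_G\subseteq C^\infty(G)\subseteq C(G)$ by Proposition~\ref{PROPSPECD}(3); the constant functions are annihilated by $\Delta_{\boldsymbol g}$ and hence lie in an eigenspace of $-\overline{\Delta_{\boldsymbol g}}$, so $1\in F_G$; and since the coefficients of $\Delta_{\boldsymbol g}$ in local charts are real one has $\overline{\Delta_{\boldsymbol g}f}=\Delta_{\boldsymbol g}\overline f$, whence $\overline{H_\lambda}=H_\lambda$ and $F_G$ is stable under complex conjugation. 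Thus only points (i) "$F_G$ is a subalgebra'' and (ii) "$F_G$ separates points'' need proof.

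For the algebra property -- which I expect to be the main obstacle -- I would argue as follows. Given $f,g\in F_G$, by Definition~\ref{Def: finite energy functions} each lies in a \emph{finite} sum of eigenspaces $H_\lambda$, and by Proposition~\ref{PROPRAPR}(1) every $H_\lambda$ is $\pi_R$-invariant; hence the orbits $\pi_R(G)f$ and $\pi_R(G)g$ span finite-dimensional subspaces $V_f,V_g\subseteq C^\infty(G)$, with $f\in V_f$, $g\in V_g$. From $[\pi_R(x)(fg)](y)=[\pi_R(x)f](y)\,[\pi_R(x)g](y)$ one sees that $\pi_R(G)(fg)$ is contained in the finite-dimensional span $V$ of the pointwise products of a basis of $V_f$ with a basis of $V_g$, and $V$ is a finite-dimensional $\pi_R$-invariant subspace of $C^\infty(G)$ containing $fg$. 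It then remains to prove the key lemma: \emph{every finite-dimensional $\pi_R$-invariant subspace $V\subseteq C^\infty(G)$ is contained in $F_G$}. Here one fixes a $\boldsymbol g_e$-orthonormal basis $X_1,\dots,X_d$ of $\mathfrak g$ (using that $\boldsymbol g$ is Riemannian); a direct computation from the definition of $X_a^R$ gives $\widetilde X_a h=-iX_a^R h$ for $h\in C^\infty(G)$, and since $V$ is a finite-dimensional invariant subspace for the unitary group $e^{-itX_a^R}$ it lies in $D(X_a^R)$ with $X_a^R(V)\subseteq V$, so $\widetilde X_a|_V=-iX_a^R|_V$ is $-i$ times a self-adjoint operator on the finite-dimensional Hilbert space $V$. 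Consequently $X^2_{\boldsymbol g}|_V=\sum_a(\widetilde X_a|_V)^2=-\sum_a(X_a^R|_V)^2$ is negative semi-definite and self-adjoint on $V$; since $X^2_{\boldsymbol g}=\Delta_{\boldsymbol g}$ by Proposition~\ref{PROPX2DELTA}, the restriction $\Delta_{\boldsymbol g}|_V$ is diagonalizable with real nonpositive eigenvalues $-\mu_1,\dots,-\mu_r$. Any $h$ in the $(-\mu_j)$-eigenspace is a smooth function with $\Delta_{\boldsymbol g}h=-\mu_j h$, hence an eigenvector of $\overline{\Delta_{\boldsymbol g}}$, so $\mu_j\in\sigma(-\overline{\Delta_{\boldsymbol g}})$ and $h\in H_{\mu_j}\subseteq F_G$; therefore $V\subseteq F_G$, and in particular $fg\in F_G$.

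Finally, to see that $F_G$ separates points I would argue by contradiction. If $f(x)=f(y)$ for all $f\in F_G$ with $x\neq y$, then applying this to $\pi_R(g)f\in F_G$ (recall $F_G$ is $\pi_R$-invariant) yields $f(xg)=f(yg)$ for all $g\in G$, i.e. $f(g)=f(ag)$ for $a:=yx^{-1}\neq e$ and all $g\in G$, for every $f\in F_G$. The left translation $T\colon f\mapsto f(a\,\cdot)$ is unitary on $L^2(G,\mu_G)$ because $\mu_G$ is left-invariant (Corollary~\ref{remHAAR}), and it fixes the $L^2$-dense subspace $F_G$ (density by Proposition~\ref{PROPSPECD}(4)); hence $T=\mathrm{id}$, which is absurd, as is seen by evaluating $T$ on the characteristic function of a sufficiently small open neighbourhood of $e$. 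Thus $F_G$ separates points, and Stone--Weierstrass gives the $\|\cdot\|_\infty$-density of $F_G$ in $C(G)$. (Alternatively, separation of points also follows from the key lemma of the previous paragraph applied to the matrix coefficients of a faithful finite-dimensional unitary representation of the compact group $G$.)
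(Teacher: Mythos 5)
Your proof is correct, but it takes a genuinely different route from the paper's. The paper proves the proposition by identifying $F_G$ with the algebra $\mathcal{B}$ of trigonometric polynomials on $G$ (finite linear combinations of matrix coefficients of irreducible unitary representations): it shows each Peter--Weyl block $H_{\phi,i}$ sits inside some $H_\lambda$ by irreducibility, and conversely that each $H_\lambda$ is a finite orthogonal sum of such blocks, and then imports both the algebra structure and the $\|\cdot\|_\infty$-density from the Peter--Weyl theorem (Theorem 7.6.2 of \cite{RT}). You instead bypass Peter--Weyl entirely: you reduce everything to the Stone--Weierstrass theorem, proving the two nontrivial hypotheses by hand. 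Your key lemma --- that every finite-dimensional $\pi_R$-invariant subspace $V\subseteq C^\infty(G)$ lies in $F_G$, obtained by diagonalizing the self-adjoint, negative semi-definite operator $X^2_{\boldsymbol g}|_V=-\sum_a(X_a^R|_V)^2$ and recognizing its eigenvectors as smooth eigenfunctions of $\Delta_{\boldsymbol g}$ --- is exactly the $\pi_R$-equivariance machinery of Propositions \ref{PROPRAPR} and \ref{PROPX2DELTA} run in reverse, and the observation that $\pi_R(x)(fg)=(\pi_R(x)f)(\pi_R(x)g)$ traps the product in a finite-dimensional invariant subspace is the clean way to get closure under multiplication from it. Your separation-of-points argument (a nontrivial left translation cannot act as the identity on the $L^2$-dense subspace $F_G$) is likewise sound. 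The trade-off: the paper's route yields as a by-product the identification $F_G=\mathcal{B}$, which it exploits elsewhere (e.g.\ in Remark \ref{Rmk: comparison with literature} when comparing with \cite{Tha1}); your route is more self-contained, uses only facts already established in the paper plus Stone--Weierstrass, and isolates a reusable lemma. The only steps you assert rather than prove --- that a finite-dimensional invariant subspace of a strongly continuous unitary group lies in the domain of the generator, and that the $L^2$-derivative of $\pi_R(\exp(tX_a))h$ agrees with $\widetilde X_a h$ for smooth $h$ --- are justified at exactly the level of detail the paper itself uses in the proof of Proposition \ref{PROPRAPR}(2), so I do not count them as gaps.
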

\begin{proof} 
See Appendix \ref{APPENDIXPROOFS}.
\end{proof}

\section{Feynman path integrals on a Riemannian manifold}\label{sez-Fey-M}
In the present section we introduce the definition of the {\it Feynman map}, i.e. the linear functional providing a mathematical definition of the Feynman path integral \eqref{feynman}.
\subsection{The Feynman map for  $\mathbb{R}^d$ }
\label{Subsec: the Feynman map for Rd and oscillatory integrals}.

For fixed $t>0$ and $d\in \bN$, let us consider the {\bf Cameron Martin space} $H_t(\mathbb{R}^d)$, i.e. the Sobolev space\footnote{The $L^2$ condition $\int_0^t \|\gamma(t)\|^2 dt<+\infty$ is automatically satisfied since the considered curves are (absolutely) continuous functions defined on the compact set $[0,t]$.}
  of absolutely continuous paths $\gamma :[0,t]\to\mathbb{R}^d$ such that  $\gamma(0)=0$ and $\int_0^t\|\dot\gamma(s)\|^2ds<\infty$, where $\dot \gamma $ denotes the weak derivative of $\gamma$.  $H_t(\mathbb{R}^d)$  is actually an Hilbert space with the   inner product
$$\langle \gamma _1,\gamma _2\rangle_{H_t(\mathbb{R}^d)} =\int ^t _0 \dot\gamma _1(s)\cdot\dot\gamma
  _2(s)ds \, .$$
 
Fixed $n\in \mathbb{N}$, consider the equally spaced partition $0=t_0<t_1<\dots<t_n=t$ of the interval $[0,t]$, where $t_j=jt/n$, $j=0,\dots, n$. Let $P_n:H_t(\mathbb{R}^d)\to H_t(\mathbb{R}^d)$ be the orthogonal projection operator onto the subspace $P_nH_t(\mathbb{R}^d)\simeq\mathbb{R}^{nd}$ of piecewise linear paths, i.e. those paths having constant velocity when restricted to the partition subintervals:  
$$P_n\gamma (s):=\gamma(t_{j-1})+\frac{\gamma(t_{j})-\gamma(t_{j-1})}{t_{j}-t_{j-1}}(s-t_{j-1})\, , \qquad t_{j-1}\leq s\leq t_{j}\, .$$

Denoting by $\mathrm{d}P_n\gamma $ the Borel volume measure on $P_n H_t(\mathbb{R}^d)$ associated to the metric inherited by $H_t(\mathbb{R}^d)$ itself and considering a complex map $f:H_t(\mathbb{R}^d)\to \mathbb{C}$, let us define  the finite-dimensional \textit{oscillatory integral} (with henceforth $i^{1/2}:= e^{i\pi/4}$)
\begin{equation}\label{FDOI-1}
   (2\pi i \hbar )^{-nd/2}\int_{ P_n H_t(\mathbb{R}^d)} e^{\frac{i}{2\hbar}\|\gamma\|_{H_t(\mathbb{R}^d)}^2}f_{P_n}(\gamma) \mathrm{d}P_n\gamma\, ,
\end{equation}
where $f_{P_n}:P_n H_t(\mathbb{R}^d)\to \mathbb{C}$ stands for the restriction of $f$ to $P_n H_t(\mathbb{R}^d)$. As it stands, the integral above does not always make sense if interpreted in the standard way.

Following H\"ormander \cite{Hor1} and Elworthy and Truman \cite{ELT}, it is meaningful to provide a definition of the integral \eqref{FDOI-1} allowing the integration of functions $f\colon\mathbb{R}^{nd}\to\mathbb{C}$ that do not necessarily belong to  $L^1(\mathbb{R}^{nd}, \mathrm{d}x)$, where $\mathrm{d}x$ is the Borel measure  on $\mathbb{R}^{nd}$ induced by the standard metric\footnote{The completion of that Borel measure is the standard \textit{Lebesgue measure} in $\mathbb{R}^{nd}$.}.
(The integral in \eqref{FDOI-1} can be reduced to this setting by considering an orthonormal basis of $P_nH_t(\mathbb{R}^d)$ so that $P_nH_t(\mathbb{R}^d)$ and $\mathrm{d}P_n$ can be identified with $\mathbb{R}^{nd}$ and $\mathrm{d}x$ respectively.)
More specifically,  we shall adopt the following definition \cite{AlBr,ELT,Ma-book}.
\begin{definition}\label{def-osc-int-findim}

Let $n\geq 1$ be a natural number and let  $f:\mathbb{R}^n\to \mathbb{C}$, $\Phi:\mathbb{R}^n\to \mathbb{R}$ be Borel measurable functions.
We say that 
$\int_{\mathbb{R}^n} e^{i\Phi(x)}f(x)\mathrm{d}x$
exists as an  \textbf{oscillatory integral} if
\begin{itemize}
    \item for every Schwartz test function  $\varphi \in S(\mathbb{R}^{n})$ such that $\varphi (0)=1$,  the regularized integral
$$\int_{\mathbb{R}^n} e^{i\Phi(x)}f(x)\varphi (\epsilon x)\mathrm{d}x$$ exists and it is finite  for all $\epsilon>0$;
\item
the limit
$$ \lim_{\epsilon \downarrow 0} \int_{\mathbb{R}^n} e^{i\Phi(x)}f(x)\varphi (\epsilon x)\mathrm{d}x\,, $$
exists, it is finite and independent of $\varepsilon$.
\end{itemize}
In this case the limit is denoted \begin{equation}\label{int-osc-fin-dim}
    \int_{\mathbb{R}^n}^o e^{i\Phi(x)}f(x)\mathrm{d}x\, .
\end{equation}
\end{definition}
Clearly, when $f\in L^1(\mathbb{R}^n, \mathrm{d}x)$ the oscillatory integral \eqref{int-osc-fin-dim} coincides with the standard (Lebesgue) integral  $\int_{\mathbb{R}^n} e^{i\Phi(x)}f(x)\mathrm{d}x$. 

In the following, we shall adopt the shortened notation $I_n(f)$ for the normalized oscillatory integral on $ P_n H_t(\mathbb{R}^d)$.
\begin{equation}
   I_n(f)=(2\pi i \hbar )^{-nd/2}\int^o_{ P_n H_t(\mathbb{R}^d)} e^{\frac{i}{2\hbar}\|\gamma\|^2_{H_t(\mathbb{R}^d)}}f_{P_n}(\gamma) \mathrm{d}P_n\gamma\, ,
\end{equation}
In particular, by direct inspection and thanks to the introduction of the normalizing constant $(2\pi i \hbar )^{-nd/2}$, we have $I_n(\mathbf{1})=1$, where $\mathbf{1}$ denotes the constant function equal to 1.
Indeed, for $\varphi\in S(\mathbb{R}^n)$ Parseval equality yields:
\[\int_{\mathbb{R}^n}\frac{e^{\frac{i}{2\hbar}\|x\|^2}}{(2\pi i \hbar )^{nd/2}}\varphi(\epsilon x)\mathrm{d}x =\int_{\mathbb{R}^n}e^{-\frac{i\hbar}{2}\epsilon^2\|x\|^2}\hat\varphi( x)\mathrm{d}x\,,\]
where $\varphi(x)=\int_{\mathbb{R}^n}e^{ixy}\hat\varphi (y)\mathrm{d}y$. By taking the limit for $\epsilon\downarrow 0$ and exploiting the condition $\varphi(0)=\int_{\mathbb{R}^n}\hat\varphi (y)\mathrm{d}y=1$ we eventually get the normalization property $I_n(\mathbf{1})=1$.
\begin{remark}
    By adopting on $P_n H_t(\mathbb{R}^d)$ the coordinates 
    \begin{equation}\label{coord-1}
        x_j:=\gamma(t_j)\, ,\qquad j=1,\dots , n
    \end{equation} (with $x_0=0$) and introducing the notation $\gamma_{\boldsymbol{x}}$, $\boldsymbol{x}=(x_1,\ldots,x_{n})$, for the path $\gamma \in P_n H_t(\mathbb{R}^d)$ satifying conditions \eqref{coord-1},  the finite dimensional oscillatory integral \eqref{FDOI-1} assumes the following form
    \begin{equation}\label{rap-coord-1}
        I_n(f)=(2\pi i \hbar t/n )^{-nd/2}\int_{\mathbb{R}^{nd}}^o
        e^{\frac{i}{2\hbar t/n}\sum_{j=1}^n\|x_j-x_{j-1}\|^2} f_X(x_1,\dots, x_{n})\mathrm{d}x_1\cdots \mathrm{d}x_{n}\, ,
    \end{equation}
    where $f_X:(\mathbb{R}^d)^n \to \mathbb{C}$ is defined as  $  f_X(x_1,\dots, x_{n}):= f(\gamma_{\boldsymbol{x}})$.
    
    Similarly, by using as coordinates the velocities $v_j$, $j=1,\dots, n$,
    \begin{equation}\label{coord-2}
        v_j:=\frac{\gamma(t_j)-\gamma(t_{j-1})}{t_j-t_{j-1}}\, ,\qquad j=1,\dots , n
    \end{equation} and introducing the notation $\gamma_{\boldsymbol{v}}$, $\boldsymbol{v}=(v_1,\ldots, v_n)$, for the path $\gamma \in P_n H_t(\mathbb{R}^d)$ satifying conditions \eqref{coord-2}, and $f_V$ for the mapping $f_V:\mathbb{R}^{nd} \to \mathbb{C}$ defined as  $f_V(v_1,\dots, v_n):=f(\gamma_{\boldsymbol{v}})$, the finite dimensional oscillatory integral \eqref{FDOI-1} can be computed as:
    \begin{equation}\label{rap-coord-2}
        I_n(f)=(2\pi i \hbar n/t )^{-nd/2}\int^o_{\mathbb{R}^{nd}}e^{\frac{i}{2\hbar}\frac{t}{n}\sum_{j=1}^n\|v_j\|^2} f_V(v_1,\dots, v_n)\mathrm{d}v_1\cdots \mathrm{d}v_n\, .
    \end{equation}
    
\end{remark}
We are now ready to give the main definition of this section.
\begin{definition}
Given a function $f:H_t(\mathbb{R}^d)\to \mathbb{C}$ we  define its {\bf Feynman map} $F_{H_t(\mathbb{R}^d)}(f)$ as the limit of the finite dimensional oscillatory integrals  
\begin{equation}\label{FMap-1}
    F_{H_t(\mathbb{R}^d)}(f):=\lim_{n\to \infty} I_n(f)\, ,
\end{equation}
provided that all the terms  of the approximating sequence $I_n(f)$ are well defined and the limit \eqref{FMap-1} exists in $\mathbb{C}$.\\
\end{definition}
An alternative more suggestive notation for the Feynman map $F_{H_t(\mathbb{R}^d)}(f)$ is the following \cite{AlHKMa,AlBr}
\begin{equation}\label{not-FM1}
    F_{H_t(\mathbb{R}^d)}(f)\equiv \widetilde{\int}_{H_t(\mathbb{R}^d)}e^{\frac{i}{2\hbar}\int_0^t|\dot\gamma (s)|^2ds}f(\gamma)\mathrm{d}\gamma\,. 
    \end{equation}
\begin{remark}
By using representation \eqref{rap-coord-1} for $I_n(f)$, the Feynman map can be actually computed as:
\begin{equation}\label{FMap-2}
    F_{H_t(\mathbb{R}^d)}(f)=\lim_{n\to \infty} (2\pi i \hbar t/n )^{-nd/2}\int^o_{\mathbb{R}^{nd}}e^{\frac{i}{2\hbar t/n}\sum_{j=1}^n\|x_j-x_{j-1}\|^2} f_X(x_1,\dots, x_{n})\mathrm{d}x_1\cdots \mathrm{d}x_{n}\,  ,
\end{equation}
resembling Feynman's original construction. At the same time, representation \eqref{rap-coord-2} leads to the equivalent form
\begin{equation}\label{FMap-3}
    F_{H_t(\mathbb{R}^d)}(f)=\lim_{n\to \infty} (2\pi i \hbar (t/n)^{-1} )^{-nd/2}\int^o_{\mathbb{R}^{nd}}e^{\frac{i}{2\hbar}\frac{t}{n}\sum_{j=1}^n\|v_j\|^2} f_V(v_1,\dots, v_n)\mathrm{d}v_1\cdots \mathrm{d}v_n\, .
    \end{equation}
    \end{remark}

    \begin{remark}
        By considering an initial datum $\psi_0$ and a potential $V$ of the form
        $$\psi_0(x)=\int_{\bR^d}e^{ixy}\mathrm{d}\mu_0(y)\,, \quad V(x)=\int_{\bR^d}e^{ixy}\mathrm{d}\mu_V(y)\,, \qquad x\in \bR^d\,,$$
        for some bounded complex Borel measures $\mu_0$ and $\mu_V$ on $\bR^b$, it is possible to prove \cite{AlBr,AlHKMa,ELT} that the Feynman map of the function $f:H_t(\mathbb{R}^d)\to \bC$ defined as
        $$f(\gamma):=e^{-\frac{i}{\hbar}\int_0^tV(\gamma(s))\mathrm{d}s}\psi_0(\gamma(t))$$
        provides the solution $\psi(t,x)$ of the Schr\"odinger equation \eqref{Schroedinger}. By adopting the notation \eqref{not-FM1} for $ F_{H_t(\mathbb{R}^d)}(f)$ one formally obtains the original Feynman's heuristic formula, namely
        $$\psi(t,x)=\widetilde{\int}_{H_t(\mathbb{R}^d)}e^{\frac{i}{\hbar}\int_0^t\left(\frac{1}{2}|\dot\gamma (s)|^2-V(\gamma(s))\right)\mathrm{d}s}\psi_0(\gamma(t))\mathrm{d}\gamma\,.$$
    \end{remark}
    
\subsection{The Feynman map for Riemannian manifolds }
\label{SECCONJ}
Let us consider now an arbitrary $d$-dimensional Riemannian manifold $(M,\boldsymbol{g})$. 
For $t\in\mathbb{R}$ and $x\in M$, let us denote by $C_x(M)$ the set of continuous paths on $M$ starting at $x$:
$$C_x(M):=\{\gamma:[0,t]\to M\,|\, \gamma \:\:\hbox{continuous and }\gamma (0)=x \}\, . $$
Similarly, with the symbol $C(T_xM)$ we shall denote the real vector space
$$C(T_xM):=\{\Gamma:[0,t]\to T_xM\,|\, \Gamma \:\:\hbox{continuous and }\Gamma (0)=0 \}\, . $$
Let us also  consider the vector  space ${\mathcal{H}}_{0,t}(T_xM)\subset C(T_xM)$ defined by:
\begin{align}
    {\mathcal{H}}_{0,t}(T_xM)
    :=\{\Gamma\in C(T_xM)\,|\,
    \Gamma\:\: \hbox{absolutely continuous}, 
    \int_0^t\boldsymbol{g}_x(\dot{\Gamma}(s),\dot{\Gamma}(s))\mathrm{d}s
    <+\infty\}\,,
\end{align}
where $\dot{\Gamma}$ denotes the weak derivative of $\Gamma$. 
${\mathcal{H}}_{0,t}(T_xM)$, endowed with the inner product $\langle \, , \, \rangle_{\mathcal{H}_{0,t}(T_xM)}$ defined by
\begin{align}
    \langle\Gamma_1,\Gamma_2\rangle_{\mathcal{H}_{0,t}(T_xM)}
    :=\int_0^t\boldsymbol{g}_x(\dot{\Gamma}_1(s),\dot{\Gamma}_2(s))\mathrm{d}s\,,
\end{align}
is a real Hilbert space \cite{AlHKMa,ELT,AnDri}.
In fact, if we identify $T_xM$ with $\mathbb{R}^{\dim(M)}$ through  the choice of a orthonormal basis, then $\mathcal{H}_{0,t}(T_xM)$ turns out to be isomorphic to the Cameron-Martin space 
$H_t(\mathbb{R}^d)$.

\noindent We can analogously define the  set  of finite energy paths on $M$   \cite{Kli1,AnDri}:
\begin{align}\label{HxtM}
    \mathcal{H}_{x,t}(M)
    :=\{\gamma\in C_x(M)\,|\,
    \gamma \hbox{ absolutely continuous}\,,\,
    \int_0^t\boldsymbol{g}(\dot{\gamma}(s),\dot{\gamma}(s))\mathrm{d}s
    <+\infty\}\,,
\end{align}
where $\gamma\in C_x(M)$ is said to be  absolutely continuous if for any $f\in C^\infty (M;\mathbb{R})$, the function $f\circ \gamma :[0,t]\to \mathbb{R}$ is   absolutely continuous. \\
Clearly, $\mathcal{H}_{x,t}(M)$ is not a vector space, but it turns out to be an Hilbert manifold (see, e.g. \cite{Kli1} for further details). Moreover there exists an interesting smooth and one-to-one map between $\mathcal{H}_{x,t}(M)$ and the Hilbert space ${\mathcal{H}}_{0,t}(T_xM)$, as we are going to discuss.
\begin{definition}\label{Def: Cartan map}
For $x\in M$ arbitrary but fixed, the {\bf Cartan development map}\footnote{Sometimes the Cartan development map is defined as the inverse  the function $\Psi_x$ defined by \eqref{def-Cartan-map}.} $\Psi_x$ \cite{KN} maps piecewise smooth curves
$\gamma_x: [0, t] \to M$  starting at $x=\gamma(0)$ to curves
$\Gamma_x:=\Psi_x(\gamma_x)\colon[0,t]\to T_xM$ defined by the requirements
\begin{align}\label{def-Cartan-map}
    \Gamma_x(0)=0\,,
    \qquad
    \dot{\Gamma}_x(s):=\wp[\gamma_x]^s_0\dot{\gamma}_x(s)\:,\quad  s\in [0,t]\:,
\end{align}
where $\wp[\gamma_x]^s_0\colon T_{\gamma_x(s)}M\to T_xM$ denotes the $\boldsymbol{g}$-parallel transport along $\gamma_x$ while we identified $T_{\Gamma_x(s)}T_xM\simeq T_xM$ for all $s\in[0,t]$.
\end{definition}

\begin{remark}\label{REMCM}
From the definition and the properties of the parallel transport associated with the Levi-Civita connection, it follows that: 
\begin{itemize}
\item[(1)] if $\gamma_x$ is a piecewise $\boldsymbol{g}$-geodesic, then $\Gamma_x$ is a piecewise straight line;
\item[(2)] if $\Gamma_x=\Psi_x(\gamma_x)$, then $ \int_0^t\boldsymbol{g}_x(\dot{\Gamma}_x(s),\dot{\Gamma}_x(s))\mathrm{d}s=\int_0^t\boldsymbol{g}(\dot{\gamma}(s),\dot{\gamma}(s))\mathrm{d}s$.
\end{itemize}
\end{remark}
On account of the  properties of $\Psi_x$ we have the following result whose proof can be found in \cite{EE,AnDri}.
\begin{proposition}\label{PROPonetoONE}
The Cartan map $\Psi_x$ uniquely continuously extends to a  one-to-one map (indicated with the same symbol) from $\mathcal{H}_{x,t}(M)$  to $ {\mathcal{H}}_{0,t}(T_xM)$ which is a diffeomorphism of infinite dimensional Hilbert manifolds. In particular,
$$ \int_0^t\boldsymbol{g}_x(\dot{\Gamma}_1(s),\dot{\Gamma}_2(s))\mathrm{d}s = \int_0^t\boldsymbol{g}(\dot{\gamma}_1(s),\dot{\gamma}_2(s))\mathrm{d}s$$
for $\Gamma_i=\Psi_x(\gamma_i)$, $\gamma_i\in\mathcal{H}_{x,t}(M)$, $i=1,2$.
\end{proposition}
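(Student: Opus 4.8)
The plan is to realise both $\Psi_x$ and its inverse through a single ordinary differential equation on the orthonormal frame bundle $\pi\colon O(M)\to M$ equipped with the Levi--Civita connection, and then to carry the analytic properties of that ODE --- existence, uniqueness and smooth dependence on the data --- into the Hilbert-manifold category. Fix a frame $u_0\in O(M)$ over $x$, use it to identify $T_xM\simeq\mathbb{R}^d$, and let $H_1,\dots,H_d$ be the standard horizontal vector fields on $O(M)$, with $\theta$ the canonical solder form. For a piecewise-smooth curve $\gamma_x$ with $\gamma_x(0)=x$, let $u$ be its unique horizontal lift with $u(0)=u_0$; unravelling the definitions, $u(s)^{-1}$ represents the parallel transport $\wp[\gamma_x]^s_0$ in the frame $u_0$, so the curve $\Gamma_x=\Psi_x(\gamma_x)$ of Definition \ref{Def: Cartan map} is $\Gamma_x(s)=\int_0^s\theta(\dot u(r))\,\mathrm dr$. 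Conversely, given $\Gamma\in\mathcal{H}_{0,t}(T_xM)$ with components $\Gamma^i$, its development is obtained by solving
\begin{equation*}
\dot u(s)=\textstyle\sum_{i=1}^{d}\dot\Gamma^i(s)\,H_i(u(s))\,,\qquad u(0)=u_0\,,
\end{equation*}
and setting $\gamma:=\pi\circ u$. Since $\dot\Gamma\in L^2([0,t];\mathbb{R}^d)\subset L^1$, for a.e.\ $s$ the right-hand side is a fixed smooth vector field on $O(M)$ with an $L^1$ time weight, so Carath\'eodory's theorem yields a unique maximal absolutely continuous solution; when $M$ is complete --- in particular when $M=G$ is compact, the case of interest --- the bound $\mathrm{length}(\gamma)\le\sqrt t\,\|\Gamma\|_{\mathcal{H}_{0,t}(T_xM)}$ confines $\gamma$ to a fixed closed metric ball, compact by Hopf--Rinow, so $u$ and hence $\gamma$ are defined on all of $[0,t]$.

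Secondly, the energy identity is pointwise: $\wp[\gamma_x]^s_0$ is a linear isometry, whence $\boldsymbol g_x(\dot\Gamma_x(s),\dot\Gamma_x(s))=\boldsymbol g_{\gamma_x(s)}(\dot\gamma_x(s),\dot\gamma_x(s))$ for a.e.\ $s$; integrating gives Remark \ref{REMCM}(2) and, through the same isometry property, the bilinear form of the identity stated in Proposition \ref{PROPonetoONE}. That pointwise identity also shows that a curve $\gamma$ starting at $x$ lies in $\mathcal{H}_{x,t}(M)$ exactly when its anti-development $\Gamma$ lies in $\mathcal{H}_{0,t}(T_xM)$; together with Remark \ref{REMCM}(1) this makes the two constructions above mutually inverse bijections between the piecewise-smooth curves in $\mathcal{H}_{x,t}(M)$ and the piecewise-linear curves in $\mathcal{H}_{0,t}(T_xM)$, both dense in the respective Hilbert manifolds.

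Continuity and the extension then follow from Gronwall estimates for the frame-bundle ODE: in local trivialisations the solution map $\dot\Gamma\mapsto u$ is Lipschitz from $L^2$-bounded sets into $H^1$-curves, so the development is uniformly continuous on bounded subsets of $\mathcal{H}_{0,t}(T_xM)$ and the anti-development is uniformly continuous on bounded subsets of $\mathcal{H}_{x,t}(M)$. Each therefore extends uniquely and continuously from the dense piecewise-smooth/piecewise-linear curves to the whole Hilbert manifolds; since the extensions remain mutually inverse, $\Psi_x\colon\mathcal{H}_{x,t}(M)\to\mathcal{H}_{0,t}(T_xM)$ is an energy-preserving homeomorphism.

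Finally, to obtain a diffeomorphism one shows that the solution of the $O(M)$-valued ODE depends smoothly on the driving $L^2$-datum in the Hilbert-manifold topology: writing the ODE as $\mathcal F(u,\Gamma)=0$ in the Hilbert manifold of $H^1$-curves in $O(M)$, the map $\mathcal F$ is smooth and its partial derivative in $u$ is a first-order linear ODE operator subject to the initial condition, hence a topological linear isomorphism, so the implicit function theorem in Banach spaces gives $\Gamma\mapsto u[\Gamma]$ smooth; composing with $\pi_*$ makes the development smooth, and the symmetric argument (lifting $\dot\gamma$ horizontally) makes the anti-development smooth. A pair of mutually inverse smooth bijections is a diffeomorphism, which concludes the argument; full details are in \cite{EE,AnDri}. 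I expect the main obstacle to be exactly this last point --- upgrading the ODE's smooth dependence on its $L^2$-data to the infinite-dimensional Hilbert-manifold setting rather than merely pointwise in time --- together with the global existence of the development, which is what forces the completeness hypothesis (automatic for the compact group $G$).
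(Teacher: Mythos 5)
The paper does not prove this proposition itself; it defers entirely to \cite{EE,AnDri}, and your frame-bundle sketch --- horizontal lift, Carath\'eodory existence for the development ODE driven by $\dot\Gamma\in L^2$, the pointwise isometry of parallel transport for the energy identity, Gronwall estimates for continuity, and the implicit function theorem for smoothness --- is precisely the argument carried out in those references, so it is essentially the same approach. Your observation that surjectivity of the development onto $\mathcal{H}_{0,t}(T_xM)$ requires completeness of $M$ (automatic for the compact groups actually used later) is correct and is a hypothesis the paper's statement leaves implicit.
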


The previous result suggests a natural way to rely the construction of the Feynman map of a function $f: \mathcal{H}_{x,t}(M)\to \mathbb{C}$ upon the Feynman map of the associated function $f\circ\Psi_x^{-1}: \mathcal{H}_{0,t}(T_xM)\to \mathbb{C}$ \cite{ELT,ELT81}. Indeed, according to Proposition \ref{PROPonetoONE},
the Cartan map $\Psi_x$ sends suitably regular curves $\gamma_x$ on $M$ starting at $x$ to suitably regular curves $\Gamma_x:=\Psi_x(\gamma_x)$ on $T_xM$ starting at $0$.
In particular, having fixed an orthonormal basis of $T_xM$ and having identified $T_xM$ with $\mathbb{R}^d$, $\Psi_x$ maps $\mathcal{H}_{x,t}(M)$ onto $H_t(\mathbb{R}^d)$. In the following we shall denote $\Psi_x^{-1}:H_t(\mathbb{R}^d)\to \mathcal{H}_{x,t}(M)$ its inverse.

\begin{definition}\label{def-Fey-map-M}
Given a function $f:\mathcal{H}_{x,t}(M) \to \mathbb{C}$, we shall define its {\bf Feynman map} $F_{\mathcal{H}_{x,t}(M)}(f) $ as
\begin{equation}\label{Def-FM-M}
    F_{\mathcal{H}_{x,t}(M)}(f) :=F_{H_t(\mathbb{R}^d)}(f\circ \Psi_x^ {-1}) \:.
\end{equation}
  \end{definition}
\begin{remark}
    In fact, the Cartan development map and its extension to $C_x(T_xM)$ called {\em stochastic development} play a fundamental role  in the construction of Brownian motion on a Riemannian manifold. See, e.g. \cite{EE,AnDri} for further details.
\end{remark}
By adopting the heuristic but suggestive notation \eqref{not-FM1}, the Feynman map $ F_{\mathcal{H}_{x,t}(M)}(f) $ can be denoted as:
\begin{equation}\label{not-FM2}
F_{\mathcal{H}_{x,t}(M)}(f) \equiv \widetilde{\int}_{\mathcal{H}_{x,t}(M)} e^{\frac{i}{2\hbar} \int_0^t\boldsymbol{g}(\dot{\gamma}(s),\dot{\gamma}(s))\mathrm{d}s   } f(\gamma)\mathrm{d}\gamma\,.\end{equation}
 The authors of \cite{ELT81} conjectured what follows.\\
 
\noindent {\bf Conjecture}. \textit{Consider a smooth  Riemannian manifold $(M,{\boldsymbol{g}})$ and a function $f:\mathcal{H}_{x,t}(M) \to \mathbb{C}$ of the form 
$$f(\gamma)=\psi_0(\gamma (t))e^{-\frac{i}{\hbar}\int_0^tV(\gamma(s))\mathrm{d}s}$$
with $\psi_0:M\to\mathbb{C}$ and $V:M\to \mathbb{R}$ suitably chosen.\\
Then the action $F_{\mathcal{H}_{x,t}(M)}(f)$ of the Feynman map on $f$ provides a representation for  the solution $\psi(t,x)$ of the Schr\"odinger equation \eqref{SchroedingerM}, i.e.
according to the formal notation \eqref{not-FM2}
\begin{equation}\label{con-Fey-M}
    \psi(t,x)=\widetilde{\int}_{\mathcal{H}_{x,t}(M)} e^{\frac{i}{\hbar} \int_0^t\left(\frac{1}{2}\boldsymbol{g}(\dot{\gamma}(s),\dot{\gamma}(s))-V(\gamma(s))\right)\mathrm{d}s   } \psi_0(\gamma (t))\mathrm{d}\gamma\,.\end{equation}}

The conjecture may be understood in the sense of strongly-continuous unitary one-parameter groups, assuming the operator on the right-hand side of (\ref{SchroedingerM}) is essentially selfadjoint in the Hilbert space $L^2(M,\mu_{\boldsymbol{g}})$ in a dense domain which includes the initial datum:
\begin{align} F_{\mathcal{H}_{x,t}(M)}(f) = \left(e^{-\frac{it}{\hbar}(\overline{{ -}\frac{\hbar^2}{2}\Delta_{\boldsymbol{g}} { +}V})}\psi_0\right)(x)\:.\label{CONJ}\end{align}

Notice that each such group is also a \textit{$C_0$-semigroup} for $t\in [0,+\infty)$, and this fact has important consequences in relation to \textit{Chernoff's approximation theorem} which may be used to prove (\ref{CONJ}) on compact Lie groups $G=M$.

\bigskip

The case where $M=\mathbb{R}^d$ has been extensively studied and the conjecture has been proved to be true for a large class of initial data $\psi_0$ and potentials $V$ (see, e.g.,  \cite{Ma-book} and references therein). 
In the following we are going to address conjecture \eqref{con-Fey-M} when $M=G$ is a compact Lie group with a bi-invariant Riemannian metric $\boldsymbol{g}$.
In section \ref{sez-fey-G} the conjecture is proved for the case $V=0$ by restricting to a specific class of functions $f\colon\mathcal{H}_t(G)\to\mathbb{R}$.
In section \ref{sez-Schr-V} we generalize these results to the case of non-vanishing $V$, although in this case the construction conjecture \eqref{con-Fey-M} is proved only at a "perturbative" level.
Here, "perturbative" refers to the fact that we cannot make sense of the Feynman integral of $\exp(-\frac{i}{\hbar}\int_0^tV(\gamma(s))\mathrm{d}s)$ directly. However, we can interpret the Feynman integral of each term in the expansion of $\exp(-\frac{i}{\hbar}\int_0^tV(\gamma(s))\mathrm{d}s)$, and moreover show that the resulting series converges to the desired result.

\section{The Feynman map for compact Lie groups with bi-invariant Riemannian metrics}\label{sez-fey-G}

The goal of this section is to prove the well-definiteness of the Feynman map defined through the Cartan map as explained in Section \ref{SECCONJ} once $M=G$ is a compact Lie group with bi-invariant metric.
Specifically, we will prove that $F_{\mathcal{H}_{x,t}(G)}(f)$ makes sense for a suitable class of functions $f\colon\mathcal{H}_{x,t}(G)\to\mathbb{C}$.
Moreover, within this setting the resulting Feynman map will fulfil Equation \eqref{CONJ}.

\subsection{Finite dimensional approximations of the Feynman map on Lie groups}

In what follows we will consider the case where $M=G$ is a compact Lie group with bi-invariant Riemannian metric $\boldsymbol{g}$.
We shall denote by $\gamma_{x,v}$ the geodesic starting at $x\in G$ and with initial velocity $\widetilde{v}_x$ where $\widetilde{v}\in\mathfrak{g}^L$ is the left-invariant vector field associated with $v\in\mathfrak{g}$ ---we recall that $\gamma_{x,v}(t)=x\exp(t v)$ for all $t\in\mathbb{R}$ on account of Proposition \ref{PROPO6}.
Similarly, given $\boldsymbol{v}=(v_1,\ldots,v_n)$, where $v_1,\ldots,v_n\in\mathfrak{g}$, and $\boldsymbol{t}=(t_1,\ldots,t_n)\in\mathbb{R}^n$, $0\leq t_1\leq\dots\leq t_n$, we will denote by $\gamma_{x,\boldsymbol{v},\boldsymbol{t}}$ the piecewise $\boldsymbol{g}$-geodesic $\gamma_{x,\boldsymbol{v},\boldsymbol{t}}: [0,t] \to G$ starting at $x\in G$ and defined by
\begin{align}\label{gamma-x-v}
    \gamma_{x,\boldsymbol{v},\boldsymbol{t}}(s)
    :=x\prod\limits_{\ell=1}^{j-1}\exp[(t_\ell-t_{\ell-1})v_\ell]
    \exp[(s-t_{j-1})v_j]
    \qquad t_{j-1}\leq s\leq t_j
    \quad\forall j\in\{1,\ldots,n\}\,,
\end{align}
where $t_0:=0$, $t_n:=t$, and the empty product is set to $e$.
The $n$-tuple $\boldsymbol{t}$ contains the time steps at which the piecewise geodesic changes tangent vectors.
Indeed
the tangent vector field $\dot{\gamma}_{x,\boldsymbol{v},\boldsymbol{t}}$ to $\gamma_{x,\boldsymbol{v},\boldsymbol{t}}$, according to (\ref{ISOTILDE}), is given by
\begin{align*}
    \dot{\gamma}_{x,\boldsymbol{v},\boldsymbol{t}}(s)
    =\widetilde{v}_j(\gamma_{x,\boldsymbol{v},\boldsymbol{t}}(s))
    \qquad t_{j-1}\leq s\leq t_j
    \quad\forall j\in\{1,\ldots,n\}\,,
\end{align*}
where the left-invariant vector field $\widetilde{v}_j$ is defined as $\widetilde{v}_j(z)=(\mathrm{d}L_z)_e v_j$.
In what follows we shall always consider an equally spaced $n$-tuple $\boldsymbol{t}$ so that $t_j-t_{j-1}=\delta t=t/n$.
For this reason in the forthcoming discussion we will adopt the shorthand notation $\gamma_{x,\boldsymbol{v}}:=\gamma_{x,\boldsymbol{v},\boldsymbol{t}}$.

\begin{remark}\label{Rmk: initial velocities for Cartan curve associated with piecewise geodesic; hat velocities and velocities are in bijection}
\noindent
\begin{enumerate}[(1)]
\item\label{Item: initial velocities for Cartan curve associated with piecewise geodesic}
Since $\gamma_{x,\boldsymbol{v}}$ is a piecewise geodesic, the curve $\Gamma_{x,\hat{\boldsymbol{v}}}:=\Psi_x(\gamma_{x,\boldsymbol{v}})$ in $T_xG$ is a piecewise straight line with velocities $\hat{\boldsymbol{v}}=(\hat{v}_1,\ldots,\hat{v}_n)$, $\hat{v}_1,\ldots,\hat{v}_n\in\mathfrak{g}$.
The latter are given by {$\hat{v}_j:= \dot{\Gamma}_{x,\hat{\boldsymbol{v}}}(t_{j-1})$} $j\in\{1,\ldots,n\}$.
Explicitly we have for all $j\in\{1,\ldots,n\}$:
\begin{align}\label{Eq: velocities for piecewise straight line}
    \nonumber
    \dot{\Gamma}_{x,\hat{\boldsymbol{v}}}(s)
   & =
    \wp[\gamma_{x,v_1}]^{t_1}_0
    \cdots
    \wp[\gamma_{x_{j-1},v_j}]^{s}_{t_{j-1}}
    \widetilde{v}_j
    (\gamma_{x,\boldsymbol{v}}(s))
    \qquad
    t_{j-1}< s< t_j
    \\
    \hat{v}_j
    &=
    \wp[\gamma_{x,v_1}]^{t_1}_0
    \cdots
    \wp[\gamma_{x_{j-2},v_{j-1}}]^{t_{j-1}}_{t_{j-2}}
    (\mathrm{d}L_{x_{j-1}})_ev_j
\end{align}
where $x_0,\ldots,x_n$ are defined by $x_j=x\exp[t_1v_1]\cdots\exp[t_jv_j]$ with $x_0:=x$.

\item\label{Item: hat velocities and velocities are in bijection}
For later convenience we observe that
$\hat{\boldsymbol{v}}$ is computed from $\boldsymbol{v}$ by applying isometries ---we are considering the natural product metric on $\mathfrak{g}^n$--- namely the differential of the left-action and the parallel transport.
In particular, the map $\boldsymbol{v}\mapsto\hat{\boldsymbol{v}}$ is 1-1.

Furthermore, it is worth to point out that the Jacobian of the map $(v_1,\ldots,v_n)\mapsto(\hat{v}_1,\ldots,\hat{v}_n)$ is one.
Indeed, for each $j\in\{1,\ldots,n\}$ we have that $\hat{v}_j$ depends only on $v_1,\ldots,v_j$.
Therefore, the corresponding Jacobian will be a block-wise lower triangular matrix and its determinant will correspond to the product of the Jacobians of the maps $v_j\mapsto \hat{v}_j$ where $v_1,\ldots,v_{j-1}$ are kept fixed.
Crucially, by direct inspection the latter Jacobians are identically equal to 1, since they are associated to isometries. Indeed, for any $j$ the map  $v_j\mapsto \hat{v}_j$ preserves inner products 
 as it is the composition of the differential of the right translation $\mathrm{d}L_{x_{j-1}}$  and the parallel transport of the resulting vector $\mathrm{d}L_{x_{j-1}}v_j$ along the fixed curve uniquely identified by $x$ and the vectors $v_1, \dots, v_{j-1}$.
\end{enumerate}
\end{remark}

Let us consider a function $f:\mathcal{H}_{x,t}(G)\to \mathbb{C}$ of the form
 $f(\gamma)=\psi_0(\gamma(t))$ for $\psi_0\in C^\infty(G; \mathbb{C})\subset  L^2(G,\mu_G)$.
By the definition of the Feynman map (see  \eqref{Def-FM-M} and \eqref{FMap-1}), with the choice of an orthonormal basis at $x$, we have:
\begin{equation}
    F_{\mathcal{H}_{x,t}(G)}(f)=F_{H_t(\mathbb{R}^{d})}(f\circ \Psi_x^ {-1})
\end{equation}
This identity can be formally and more intuitively written as:
\begin{equation}
   \widetilde{ \int}_{\mathcal{H}_{x,t}(G)} e^{\frac{i}{2\hbar}\|\gamma\|^2_{H_t(\mathbb{R}^{d})}} f(\gamma)\mathrm{d}\gamma=\widetilde{\int}_{H_t(\mathbb{R}^{d})} e^{\frac{i}{2\hbar}\|\Gamma\|^2_{H_t(\mathbb{R}^{d})}}
    f(\Psi_x^{-1}\Gamma)\mathrm{d}\Gamma\,.
\end{equation}
\begin{remark}
	Using the results of Section \ref{Subsec: the Feynman map for Rd and oscillatory integrals}, $F_{H_t(\mathbb{R}^{d})}(f\circ\Psi_x^{-1})$ enjoys the representation formula \eqref{rap-coord-2}.
	In particular
	\begin{align*}
		F_{\mathcal{H}_{x,t}(G)}(f)
		&=\lim_{n\to \infty}
		(2\pi i \hbar (t/n)^{-1} )^{-nd/2}\int_{\mathbb{R}^{nd}}^o
		e^{\frac{i}{2\hbar}\frac{t}{n}\sum_{j=1}^n\|\hat{v}_j\|^2} f(\Psi_x^{-1}\Gamma_{x,\hat{\boldsymbol{v}}})
		\mathrm{d}\hat{v}_1\cdots \mathrm{d}\hat{v}_n\,,
	\end{align*}
	where $\Gamma_{x,\hat{\boldsymbol{v}}}$ is the piecewise linear path on $T_xG$ with velocities $\hat{\boldsymbol{v}}:=(\hat{v}_1,\ldots,\hat{v}_n)$, \textit{cf.} Remark \ref{Rmk: initial velocities for Cartan curve associated with piecewise geodesic; hat velocities and velocities are in bijection}.
    With a slight abuse of notation the integral in the above formula is meant with respect to the components of $\hat{v}_1,\ldots,\hat{v}_n$ once an arbitrary but fixed orthonormal basis $X_1,\ldots,X_d\in\mathfrak{g}$ is fixed.
	However, in what follows we shall profit from a slightly different representation formula which is more useful for our purposes.
	In particular we will parametrize $\Gamma\in P_n H_{t}(\mathbb{R}^{d})$ in terms of the parameters $v_1,\ldots,v_n\in\mathfrak{g}$ of the associated path $\Psi_x^{-1}\Gamma$.
	In more details, any $\Gamma\in P_n H_{t}(\mathbb{R}^{d})$ can be written as $\Gamma=\Psi_x(\gamma_{x,\boldsymbol{v}})$, for a unique piecewise geodesic curve $\gamma_{x,\boldsymbol{v}}$ with parameters $v_1,\ldots,v_n\in\mathfrak{g}$.
	We recall that, by Remark \ref{Rmk: initial velocities for Cartan curve associated with piecewise geodesic; hat velocities and velocities are in bijection}, the map $\boldsymbol{v}\mapsto\hat{\boldsymbol{v}}$ is an isometric bijection.
	In particular we have $\|\Gamma\|_{H_{0,t}(\mathbb{R}^{d})}=\|\gamma_{x,\boldsymbol{v}}\|_{\mathcal{H}_{x,t}(G)}=\frac{t}{n}\sum_{j=1}^n\|v_j\|^2$, thus,
	\begin{align}
		\label{Eq: Feynman integral parametrized by Lie algebra velocities}
		F_{\mathcal{H}_{x,t}(G)}(f)
		&
		=\lim_{n\to \infty}
		(2\pi i \hbar (t/n)^{-1} )^{-nd/2}\int_{\mathbb{R}^{nd}}^o
		e^{\frac{i}{2\hbar}\frac{t}{n}\sum_{j=1}^n\|\hat{v}_j\|^2} f(\Psi_x^{-1}\Gamma_{x,\hat{\boldsymbol{v}}})
		\mathrm{d}\hat{v}_1\cdots \mathrm{d}\hat{v}_n
		\\
		&=\lim_{n\to \infty}
		(2\pi i \hbar (t/n)^{-1} )^{-nd/2}\int_{\mathbb{R}^{nd}}^o
		e^{\frac{i}{2\hbar}\frac{t}{n}\sum_{j=1}^n\|v_j\|^2} f(\gamma_{x,\boldsymbol{v}})
		\mathrm{d}v_1\cdots \mathrm{d}v_n\,,
	\end{align}
        where the Jacobian of the map $(v_1,\ldots,v_n)\mapsto(\hat{v}_1,\ldots,\hat{v}_n)$ is 1 as discussed in remark \ref{Rmk: initial velocities for Cartan curve associated with piecewise geodesic; hat velocities and velocities are in bijection}.
        In the above equation integration over $v_1,\ldots,v_n$ is again intended as integration with respect to the Lebesgue measure of the corresponding components in a fixed orthonormal basis $X_1,\ldots,X_d\in\mathfrak{g}$.
\end{remark}

First of all, taking Proposition \ref{PROPSPECD} into account, let us consider the most elementary case where
\begin{align*}
    f\colon\mathcal{H}_{x,t}(G)
    \to\mathbb{C}
    \qquad
    f(\gamma):=\varphi_\lambda(\gamma(t))\,,
\end{align*}
with $\varphi_\lambda$  an eigenfunction of the  operator $-\Delta_{\boldsymbol{g}}$, that is $-\Delta_{\boldsymbol{g}} \varphi_\lambda=\lambda \varphi_\lambda$.  We know that the eigenvectors of $-\overline{\Delta_{\boldsymbol{g}}}$ coincide with the eigenfunctions of $-\Delta_{\boldsymbol{g}}$ and that in particular they are smooth functions in view of Proposition \ref{PROPSPECD}.

According to the conjecture \eqref{CONJ},
the Feynman map should provide a representation of the operator
\begin{align}
    U(t):=e^{\frac{i\hbar t}{2}\overline{\Delta_{\boldsymbol{g}}}}\colon L^2(G,\mu_G)\to L^2(G,\mu_G)\,,
\end{align}
where $\mu_G$ denotes the Haar measure on $G$ (which also coincides with the $\boldsymbol{g}$-volume measure $\mu_{\boldsymbol{g}}$).
The expected result for the case $f(\gamma)=\varphi_\lambda(\gamma(t))$ is therefore
\begin{align}\label{EXPECTED}
    F_{\mathcal{H}_{x,t}(G)}(f)
    \stackrel{\scriptsize\mbox{expected}}{=}
    e^{-\frac{it\hbar}{2}\lambda}\varphi_\lambda(x)\,.
\end{align}

To explicitly compute $F_{\mathcal{H}_{x,t}(G)}(f)$ we shall provide a convenient formula for $\varphi_\lambda(\gamma_{x,\boldsymbol{v}}(t))$.
To this avail, we shall consider the right-regular representation $\pi_R\colon G\to\mathfrak{B}(L^2(G,\mu_G))$ introduced in Definition \ref{def-right-regular-representation} and refer to the content and  notation of Proposition \ref{PROPRAPR}.
In particular, for a given $X\in \mathfrak{g}$,  $X^R$ will denote the selfadjoint generator of the unitary one-parameter group  $\mathbb{R}\ni t \mapsto \pi_R(\exp(tX)) = e^{-itX^R}$. 

Let $X_1,\ldots,X_d\in\mathfrak{g}$ be an orthonormal basis of $\mathfrak{g}$.
Given the $n$-tuple $v_1,\ldots,v_n\in\mathfrak{g}$ associated with $\gamma_{x,\boldsymbol{v}}$ we decompose $v_k=\sum_{j_k}v_k^{j_k}X_{j_k}$ according to the chosen basis.
We then set $v_k\cdot X^R:=v_k^R:=\sum_{j_k}v_k^{j_k}X_{j_k}^R$, $k\in\{1,\ldots,n\}$.
Thus
\begin{align}
	\varphi_\lambda(\gamma_{x,\boldsymbol{v}}(t))
	&=\varphi_\lambda(x\exp[(t/n)v_1]\cdots\exp[(t/n)v_n])\nonumber 
	\\
	&=[\pi_R(\exp[(t/n)v_n]\cdots\exp[(t/n)v_1])\varphi_\lambda](x)\nonumber
	\\
	&=e^{-i\frac{t}{n}v_1\cdot X^R}
	\cdots e^{-i\frac{t}{n}v_n\cdot X^R}\varphi_\lambda(x) \label{calcolo}
\end{align}
where we adopted the informal notation $A\varphi_\lambda (x):= [A\varphi_\lambda](x)$ and we shall use it hereinafter.

On account of Proposition \ref{PROPRAPR}, every unitary group $e^{itX^R}$ leaves the eigenspace $H_\lambda$ invariant and its action on that space  can be computed by directly exponentiating the restriction  $X^R|_{H_\lambda}$, which is trivially bounded.
Since $\varphi_\lambda$ in (\ref{calcolo}) belongs to $H_\lambda$, the previous observation simplifies the computation of
\begin{align*}
   e^{-i\frac{t}{n}v_1\cdot X^R}
   \cdots e^{-i\frac{t}{n}v_n\cdot X^R}\varphi_\lambda\:. 
\end{align*}
Indeed, all computations are done in the \textit{finite dimensional} spaces $H_\lambda$ using the operators $e^{itX^R|_{H_\lambda}}$.
No issues regarding topology or operator domains take place in this way.

Everything trivially generalises to the case of  a function  $\varphi:G\to \bC$ 
belonging to an orthogonal sum $\bigoplus_{\lambda \in \Lambda} H_\lambda$, where $\Lambda \subset \sigma(-\overline{\Delta}_{\boldsymbol{g}})$ is bounded (i.e., finite).
 Fixing a basis, everything may be interpreted in matrix sense in a sufficiently large space $\mathbb{C}^N$.
 
A definition is natural at this juncture.
\begin{definition}\label{DEFMI}  Consider the right-regular representation of the Lie group $G$ with a bi-invariant metric ${\boldsymbol{g}}$ and with 
 unitary one-parameter groups  $\mathbb{R}\ni t \mapsto \pi_R(\exp(tX)) = e^{-itX^R}$.
The oscillatory integral 
\begin{equation}\label{MATRIX}
    \frac{1}{(2\pi\hbar i (t/n)^{-1})^{nd/2}}
    \int_{\mathbb{R}^{nd}}^o
    e^{
    \frac{i}{2\hbar}\frac{t}{n} \sum\limits_{j=1}^n\|v_j\|^2}
    [e^{-i\frac{t}{n} v_1\cdot X^R}
    \cdots
    e^{-i\frac{t}{n}v_n\cdot X^R}]\mathrm{d}v_1\cdots\mathrm{d}v_n\:,
    \end{equation}
    is understood  as a {\bf matrix-valued}  oscillatory integral when, on account of Proposition \ref{PROPRAPR}, the unbounded selfadjoint  operators $v_n\cdot X^R$  appearing therein are actually interpreted as their bounded restrictions -- henceforth improperly  called {\bf matrices} -- to a  Hilbert sum $\bigoplus_{\lambda \in \Lambda }H_{\lambda}$  with $\Lambda \subset \sigma(-\overline{\Delta}_{\boldsymbol{g}})$  bounded.
\end{definition}

Coming back to the main stream, we have found that $F_{\mathcal{H}_{x,t}(G)}(f)$ is given by
\begin{equation}\label{FM-2}
     \lim_{n\to \infty}
    \frac{1}{(2\pi\hbar i (t/n)^{-1})^{nd/2}}
    \int_{\mathbb{R}^{nd}}^o
    e^{
    \frac{i}{2\hbar}\frac{t}{n} \sum\limits_{j=1}^n\|v_j\|^2} 
    [e^{-i\frac{t}{n} v_1\cdot X^R}
    \cdots
    e^{-i\frac{t}{n}v_n\cdot X^R}]\mathrm{d}v_1\cdots\mathrm{d}v_n \varphi_\lambda(x)\,.
\end{equation}
where the integral of operators, before applying it to $\varphi_\lambda$, has the precise meaning in Definition \ref{DEFMI}.
By linearity, we can analogously treat a  linear combination of eigenfunctions \begin{align}\label{FGEN}f(\gamma) := \sum_{\lambda \in \Lambda} c_\lambda\varphi_\lambda(\gamma(t))\quad \mbox{where $\Lambda \subset \sigma(-\overline{\Delta_{\boldsymbol{g}}})$ is \textit{bounded}.}\end{align}  In that case the expected result (\ref{EXPECTED}) would be replaced by \
\begin{align}\label{EXPECTED2}
    F_{\mathcal{H}_{x,t}(G)}(f)
    \stackrel{\scriptsize\mbox{expected}}{=}\sum_{\lambda \in \Lambda} c_\lambda e^{-\frac{it\hbar}{2}\lambda}\varphi_\lambda(x)\,.
\end{align}

In summary, the problem of evaluating  $F_{\mathcal{H}_{x,t}(G)}(f)$,  with $f$ as in (\ref{FGEN}), boils down to the evaluation of the \textit{matrix-valued}  oscillatory integrals (\ref{MATRIX}) and  to the study of their limit for $n\to \infty$. 
   
\subsection{Evaluation of matrix-valued oscillatory integrals}
In order to proceed with the evaluation of the limit \eqref{FM-2} we present the following lemma, whose proof is immediate.
\begin{lemma}\label{lemma-trasf-int}
For every constant $\sigma >0$, if the left hand side of the identity below exists as an oscillatory integral according to Definition \ref{def-osc-int-findim}, then  the right-hand side exists as well and
\begin{equation}\label{eq-changevariables}
   (2\pi i \sigma )^{-n/2} \int^o_{\mathbb{R}^n}e^{\frac{i}{2\sigma}\|x\|^2}f(x)\mathrm{d}x=(2\pi i  )^{-n/2} \int^o_{\mathbb{R}^n}e^{\frac{i}{2}\|x\|^2}f(\sqrt \sigma x)\mathrm{d}x\, .
\end{equation}
\end{lemma}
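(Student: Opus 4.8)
The plan is to establish \eqref{eq-changevariables} by carrying out the linear substitution $x = \sqrt{\sigma}\,y$ directly inside the regularized integrals appearing in Definition \ref{def-osc-int-findim}, and then transferring the existence and the regularizer-independence from the left-hand side to the right-hand side.

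First I would fix an arbitrary $\varphi \in S(\mathbb{R}^n)$ with $\varphi(0) = 1$ and consider the regularized left-hand side
\[
(2\pi i \sigma)^{-n/2}\int_{\mathbb{R}^n} e^{\frac{i}{2\sigma}\|x\|^2}\, f(x)\, \varphi(\epsilon x)\, \mathrm{d}x\,,
\]
which is finite for every $\epsilon > 0$ by hypothesis. Applying $x = \sqrt{\sigma}\,y$ (so that $\mathrm{d}x = \sigma^{n/2}\,\mathrm{d}y$ and $\tfrac{1}{2\sigma}\|x\|^2 = \tfrac{1}{2}\|y\|^2$), the Jacobian factor $\sigma^{n/2}$ combines with $(2\pi i \sigma)^{-n/2}$ to produce $(2\pi i)^{-n/2}$, and the integral becomes
\[
(2\pi i)^{-n/2}\int_{\mathbb{R}^n} e^{\frac{i}{2}\|y\|^2}\, f(\sqrt{\sigma}\,y)\, \varphi(\epsilon \sqrt{\sigma}\,y)\, \mathrm{d}y\,.
\]
Setting $\psi := \varphi(\sqrt{\sigma}\,\cdot\,)$, which is again a Schwartz function with $\psi(0) = 1$, one has $\varphi(\epsilon\sqrt{\sigma}\,y) = \psi(\epsilon y)$, so the last display is exactly the regularized right-hand side of \eqref{eq-changevariables} with regularizer $\psi$. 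Hence, for every $\epsilon>0$, the regularized right-hand side with regularizer $\psi$ is finite and equals the regularized left-hand side with regularizer $\varphi$.

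Next I would note that $\varphi \mapsto \psi = \varphi(\sqrt{\sigma}\,\cdot\,)$ is a bijection of the set $\{\varphi \in S(\mathbb{R}^n)\colon \varphi(0) = 1\}$ onto itself, with inverse $\psi \mapsto \psi(\,\cdot\,/\sqrt{\sigma})$. Therefore every admissible regularizer for the right-hand side is obtained in this way from an admissible regularizer for the left-hand side. Since by assumption $\lim_{\epsilon\downarrow 0}$ of the regularized left-hand side exists, is finite, and is independent of $\varphi$, the same holds for the regularized right-hand side as $\psi$ ranges over all admissible regularizers. This shows that $\int^o_{\mathbb{R}^n} e^{\frac{i}{2}\|x\|^2} f(\sqrt{\sigma}\,x)\,\mathrm{d}x$ exists as an oscillatory integral and that the two sides of \eqref{eq-changevariables} agree.

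There is essentially no obstacle here: this is the oscillatory-integral counterpart of the elementary dilation behaviour of the Lebesgue integral, and the computation is routine. The only point deserving explicit mention is the verification that the dilation acts \emph{bijectively} on the family of admissible regularizers, which is what guarantees that the required independence of the regularizer is genuinely inherited by the right-hand side rather than merely checked on a subfamily.
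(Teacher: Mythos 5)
Your proof is correct and is exactly the argument the paper has in mind: the paper states the lemma "whose proof is immediate" and omits the details, which are precisely the dilation $x=\sqrt{\sigma}\,y$ in the regularized integrals together with the observation that $\varphi\mapsto\varphi(\sqrt{\sigma}\,\cdot)$ maps the class of admissible regularizers bijectively onto itself. Your explicit attention to that bijectivity is the one point that makes the "immediate" proof fully rigorous, and it is handled correctly.
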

Thanks to the lemma, for $t>0$ the computation of the matrix-valued oscillatory integral (\ref{MATRIX})
   boils down to  the computation of 
    \begin{equation}
        \frac{1}{(2\pi i )^{nd/2}}
    \int^o_{\mathbb{R}^{nd}}
    e^{
    \frac{i}{2} \sum\limits_{j=1}^n\|x_j\|^2}
    [e^{-i\sqrt{\hbar \frac{t}{n}} x_1\cdot X^R}
    \cdots
    e^{-i\sqrt{\hbar \frac{t}{n}}x_n\cdot X^R}]\mathrm{d}x_1\cdots\mathrm{d}x_n
    \end{equation}
    The following theorem shows how the matrix-valued oscillatory integral \eqref{MATRIX} can be transformed into an (absolutely convergent) Gaussian integral.
   \begin{theorem}\label{teo-osc-gau} Let $t>0$ and consider a finite dimensional representation of the Lie group $G$ with one parameter subgroups $\mathbb{R} \ni t \mapsto e^{itX^R}$. Then for any positive integer $n\geq 1$ the matrix valued-oscillatory integral
   $$ \frac{1}{(2\pi i )^{nd/2}}
    \int^o_{\mathbb{R}^{nd}}
    e^{
    \frac{i}{2} \sum\limits_{j=1}^n\|x_j\|^2}
    [e^{-i\sqrt{\hbar \frac{t}{n}} x_1\cdot X^R}
    \cdots
    e^{-i\sqrt{\hbar \frac{t}{n}}x_n\cdot X^R}]\mathrm{d}x_1\cdots\mathrm{d}x_n$$
   is equal to the Gaussian integral
   \begin{equation}
       \label{int-osc-rot}
    \frac{1}{(2\pi  )^{nd/2}}
    \int_{\mathbb{R}^{nd}}
    e^{-
    \frac{1}{2} \sum\limits_{j=1}^n\|x_j\|^2}
    [e^{-ie^{i\pi/4}\sqrt{\hbar \frac{t}{n}} x_1\cdot X^R}
    \cdots
    e^{-ie^{i\pi/4}\sqrt{\hbar \frac{t}{n}}x_n\cdot X^R}]\mathrm{d}x_1\cdots\mathrm{d}x_n\, .
    \end{equation} 
   \end{theorem}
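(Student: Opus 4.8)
The plan is to pass to matrix entries and reduce the statement to a finite-dimensional oscillatory integral of an \emph{entire} integrand of controlled growth, whose value is then obtained by rotating the integration contour from $\mathbb{R}^{nd}$ to $e^{i\pi/4}\mathbb{R}^{nd}$: this turns the oscillatory phase $e^{\frac{i}{2}\|x\|^2}$ into the Gaussian weight $e^{-\frac{1}{2}\|x\|^2}$ and produces exactly the factor $i^{nd/2}$ needed to pass from $(2\pi i)^{-nd/2}$ to $(2\pi)^{-nd/2}$. Fix a basis of the (finite-dimensional, unitary by Proposition \ref{PROPRAPR}) representation space and let $F_{ab}(x_1,\dots,x_n)$ denote the matrix entries of $[e^{-i\sqrt{\hbar t/n}\,x_1\cdot X^R}\cdots e^{-i\sqrt{\hbar t/n}\,x_n\cdot X^R}]$. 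Three structural facts drive the argument. (i) Each $F_{ab}$ is the restriction to $\mathbb{R}^{nd}$ of an entire function on $\mathbb{C}^{nd}$, since $z\mapsto z\cdot X^R$ is $\mathbb{C}$-linear into a finite-dimensional matrix algebra, $A\mapsto e^{-i\sqrt{\hbar t/n}A}$ is entire, and a finite product of entire matrix-valued maps is entire. (ii) For \emph{real} arguments $x_j$ the operator $x_j\cdot X^R$ is self-adjoint, so $e^{-i\sqrt{\hbar t/n}\,x_j\cdot X^R}$ is unitary; consequently $F_{ab}$ \emph{together with all of its partial derivatives} is bounded on $\mathbb{R}^{nd}$, each differentiation producing, via Duhamel's formula, again a bounded function (a product of unitaries with bounded factors $\sqrt{\hbar t/n}\,X^R_\ell$ inserted). (iii) On every rotated real subspace $e^{i\theta}\mathbb{R}^{nd}$, $\theta\in[0,\pi/2]$, one has a uniform bound $|F_{ab}(e^{i\theta}y)|\leq e^{C\|y\|}$, with $C$ depending only on $n,d,\hbar,t$ and $\max_\ell\|X^R_\ell\|$, coming from the elementary estimate $\|e^{-i\sqrt{\hbar t/n}A}\|\leq e^{\sqrt{\hbar t/n}\,\|A\|}$.

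From (i)--(ii) I would first check that the left-hand oscillatory integral exists in the sense of Definition \ref{def-osc-int-findim}, i.e. that the regularised integrals converge to a common limit independent of the Schwartz cut-off $\varphi$. This is the classical Hörmander argument \cite{Hor1} (see also \cite{ELT,AlBr,Ma-book}): split $\mathbb{R}^{nd}$ into $\{\|x\|\leq 1\}$, where the limit $\epsilon\downarrow0$ is immediate by dominated convergence because $F_{ab}$ is bounded, and $\{\|x\|> 1\}$, where one iterates $N$ times the transpose of the first-order operator $L=\frac{1}{i\|x\|^2}\sum_k x_k\partial_{x_k}$, which satisfies $L\,e^{\frac{i}{2}\|x\|^2}=e^{\frac{i}{2}\|x\|^2}$; since $F_{ab}$ and all its derivatives are bounded by (ii), for $N>nd$ the integrand $(L^{t})^{N}[F_{ab}\,\varphi(\epsilon\cdot)]$ is dominated on $\{\|x\|>1\}$, uniformly in $\epsilon\in(0,1]$ and in $\varphi$, by a fixed $L^{1}$ function, and the terms in which a derivative falls on $\varphi(\epsilon\cdot)$ carry a factor $\epsilon$ and hence vanish as $\epsilon\downarrow 0$; the boundary contributions on $\{\|x\|=1\}$ are treated identically. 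Thus the oscillatory integral exists and its value may be computed with any convenient regulariser.

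Choosing $\varphi(\epsilon x)=e^{-\frac{\epsilon}{2}\|x\|^2}$, the regularised integral becomes the absolutely convergent
\[
\int_{\mathbb{R}^{nd}}e^{-\alpha_\epsilon\|x\|^2}F_{ab}(x)\,\mathrm{d}x\,,\qquad \alpha_\epsilon:=\frac{\epsilon-i}{2}\,,\quad \mathrm{Re}\,\alpha_\epsilon>0\,.
\]
Now rotate the $nd$ coordinate contours, one at a time, from $\mathbb{R}$ to $e^{i\theta_\epsilon}\mathbb{R}$, where $\theta_\epsilon:=\tfrac12\arctan(1/\epsilon)\in(0,\pi/4)$ is chosen so that $\alpha_\epsilon e^{2i\theta_\epsilon}=|\alpha_\epsilon|>0$. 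Using the joint analyticity from (i), with the remaining variables held on their contours the integrand in the variable being rotated is entire, and the contributions of the circular arcs $\{R e^{i\psi}:0\leq\psi\leq\theta_\epsilon\}$ vanish as $R\to\infty$, because on those sectors $\mathrm{Re}(\alpha_\epsilon e^{2i\psi})\geq|\alpha_\epsilon|\cos(\arctan(1/\epsilon))>0$, so the Gaussian damping beats the at most exponential growth (iii). Cauchy's theorem and Fubini then give
\[
\int_{\mathbb{R}^{nd}}e^{-\alpha_\epsilon\|x\|^2}F_{ab}(x)\,\mathrm{d}x=\big(e^{i\theta_\epsilon}\big)^{nd}\int_{\mathbb{R}^{nd}}e^{-|\alpha_\epsilon|\|y\|^2}F_{ab}\big(e^{i\theta_\epsilon}y\big)\,\mathrm{d}y\,.
\]

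Finally, as $\epsilon\downarrow0$ one has $\theta_\epsilon\to\pi/4$ and $|\alpha_\epsilon|\to\tfrac12$, while by (iii) the integrand on the right is dominated for small $\epsilon$ by the integrable function $e^{-\frac14\|y\|^2}e^{C\|y\|}$; dominated convergence yields $\big(e^{i\pi/4}\big)^{nd}\int_{\mathbb{R}^{nd}}e^{-\frac12\|y\|^2}F_{ab}(e^{i\pi/4}y)\,\mathrm{d}y=i^{nd/2}\int_{\mathbb{R}^{nd}}e^{-\frac12\|y\|^2}F_{ab}(e^{i\pi/4}y)\,\mathrm{d}y$. Dividing by $(2\pi i)^{nd/2}$ cancels the factor $i^{nd/2}$, and since $(e^{i\pi/4}y_j)\cdot X^R=e^{i\pi/4}(y_j\cdot X^R)$ by $\mathbb{C}$-linearity, reassembling the entries $F_{ab}$ reproduces precisely the matrix-valued Gaussian integral \eqref{int-osc-rot}. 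The step I expect to require the most care is the regulariser-independence in the second paragraph, namely interchanging $\lim_{\epsilon\downarrow0}$ with the integral for an \emph{arbitrary} Schwartz cut-off; its clean resolution hinges on the fact, special to this setting, that $F_{ab}$ and all its derivatives are bounded on the real domain because $\pi_R$ is unitary. A secondary subtlety is the multidimensional contour rotation, which must be performed coordinatewise and relies on the joint analyticity of the integrand together with the uniform bound (iii).
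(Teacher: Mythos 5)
Your proof is correct, and its computational core --- rotating the contour to $e^{i\pi/4}\mathbb{R}$ so that the oscillatory phase becomes a Gaussian weight, with the circular arcs suppressed because quadratic damping of strictly positive real part beats the at most exponential growth $\|e^{-i\sqrt{\hbar t/n}\,z\cdot X^R}\|\le e^{c\|z\|}$ --- is exactly the mechanism the paper uses. Where you genuinely diverge is in how the arbitrariness of the Schwartz regulariser is handled. The paper does not invoke any symbol-class existence theorem: it treats $\Phi(x)=e^{\frac{i}{2}\sum_j\|x_j\|^2}[\cdots]$ as a matrix-valued tempered distribution, rewrites $I_\varphi(\epsilon)$ via Fourier duality as $(2\pi)^{-nd}\int\hat{\Phi}(\epsilon k)\hat{\varphi}(k)\,\mathrm{d}k$, computes $\hat{\Phi}$ explicitly (through an auxiliary cut-off $e^{-\|x\|^2/2N}$, polar coordinates, and a \emph{one-dimensional radial} contour rotation) as a bounded smooth function given by a Gaussian integral, and then obtains independence of $\varphi$ for free from $\hat{\Phi}(\epsilon k)\to\hat{\Phi}(0)$ together with $(2\pi)^{-nd}\int\hat{\varphi}\,\mathrm{d}k=\varphi(0)=1$. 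You instead establish $\varphi$-independence up front by the classical H\"ormander integration-by-parts argument --- legitimate here precisely because, as you observe, unitarity of the representation for real arguments makes each matrix entry $F_{ab}$ bounded together with all its derivatives --- and only then evaluate with the convenient Gaussian regulariser by a coordinate-wise rotation through the $\epsilon$-dependent angle $\theta_\epsilon\uparrow\pi/4$. Both routes are sound and yield $i^{nd/2}=(e^{i\pi/4})^{nd}$ cancelling against $(2\pi i)^{-nd/2}$ as required; yours separates existence from evaluation more cleanly and avoids the polar-coordinate/Fourier-transform detour, while the paper's is more self-contained in that the value $\hat{\Phi}(0)$ and the $\varphi$-independence come out of a single computation. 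Two minor points of care in your version: the splitting into $\{\|x\|\le 1\}$ and $\{\|x\|>1\}$ is best done with a smooth partition of unity so that the iterated integration by parts does not accumulate boundary terms on the sphere at every stage (with a sharp cut-off these terms are finite and $\varphi$-independent in the limit, but this deserves a sentence); and the domination $e^{-|\alpha_\epsilon|\|y\|^2}\le e^{-\frac{1}{2}\|y\|^2}$ actually holds since $|\alpha_\epsilon|=\tfrac{1}{2}\sqrt{1+\epsilon^2}\ge\tfrac{1}{2}$, so your dominating function can be sharpened accordingly.
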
 
   \begin{proof}
       Let $\varphi\in S(\mathbb{R}^{nd})$ be a Schwartz function such that $\varphi (0)=1$ and, for any $\epsilon >0$, let us focus on the regularized integral $I_\varphi (\epsilon)$ defined as 
       \begin{equation}\label{int-reg1}
       I_\varphi (\epsilon):=\frac{1}{(2\pi i )^{nd/2}}
    \int_{\mathbb{R}^{nd}}
    e^{
    \frac{i}{2} \sum\limits_{j=1}^n\|x_j\|^2}
    [e^{-i\sqrt{\hbar \frac{t}{n}} x_1\cdot X^R}
    \cdots
    e^{-i\sqrt{\hbar \frac{t}{n}}x_n\cdot X^R}]\varphi(\epsilon x)\mathrm{d}x_1\cdots\mathrm{d}x_n\,. \end{equation}
    In fact, every component of the matrix-valued integral  \eqref{int-reg1} is  well-defined as an absolutely convergent  integral since the matrix $e^{-i\sqrt{\hbar \frac{t}{n}} v\cdot X^R}$ is unitary for every $v\in \mathbb{R}^{d}$. 
    Moreover, since the map
    \begin{equation}\label{distr-Phi}
        \mathbb{R}^{nd}\ni x\mapsto \Phi(x):= \frac{1}{(2\pi i )^{nd/2}}e^{
    \frac{i}{2} \sum\limits_{j=1}^n\|x_j\|^2}
    [e^{-i\sqrt{\hbar \frac{t}{n}} x_1\cdot X^R}
    \cdots
    e^{-i\sqrt{\hbar \frac{t}{n}}x_n\cdot X^R}]
    \end{equation} is continuous and bounded, it defines a (matrix valued) Schwartz distribution. By introducing the Fourier transform $\hat \varphi$ of the Schwartz test function $\varphi$, the integral $I_\varphi (\epsilon)$ can be equivalently written as
    \begin{equation}\label{int-phiepsilon}
        I_\varphi (\epsilon)=\frac{1}{(2\pi)^{nd}}\int_{\mathbb{R}^{nd}}\hat \Phi(k)\frac{\hat \varphi \left(\frac{k}{\epsilon}\right)}{\epsilon^{nd}}\mathrm{d}k\, ,
    \end{equation} 
    where $\hat \Phi$ denotes the Fourier transform of the distribution \eqref{distr-Phi}.
    This can be computed by introducing a suitable regularization as:
   \begin{equation}\label{FTPHI}
   \hat \Phi (k) =\lim_{N\to \infty}\int _{\mathbb{R}^{nd}}e^{ikx}\Phi(x) e^{-\frac{1}{2N}\sum\limits_{j=1}^n\|x_j\|^2}
         \mathrm{d}x\, .\end{equation}
   Indeed, the  sequence of (matrix valued) Schwartz distributions $\Phi_N$ associated to the continuous summable functions $ \Phi_N(x):=\Phi(x)e^{-\frac{1}{2N}\sum\limits_{j=1}^n\|x_j\|^2}$ converges to $\Phi$ in $S'$, hence equality \eqref{FTPHI}  follows from the sequential continuity of Fourier transform in the space of Schwartz distributions. In particular we have:
    \begin{align*}
         \hat \Phi_N (k) 
         =\int_{\mathbb{R}^{nd}} e^{ikx}[e^{-i\sqrt{\hbar \frac{t}{n}} x_1\cdot X^R}
    \cdots
    e^{-i\sqrt{\hbar \frac{t}{n}}x_n\cdot X^R}]\frac{e^{
   \left( \frac{i}{2} -\frac{1}{2N}\right)\sum\limits_{j=1}^n\|x_j\|^2}}{(2\pi i )^{nd/2}}\mathrm{d}x\,.
    \end{align*}
          The latter can be actually computed by introducing polar coordinates in $\mathbb{R}^{nd}$ in the following way: 
      \begin{equation}\label{Fou-Phihat}
      \hat \Phi _N (k) =  \int_{\mathbb{S}^{nd-1}}\int_0^{+\infty} e^{-\rho^2/2N} e^{i\rho \hat n \cdot k}\Phi (\rho\nu)\rho ^{nd-1} \mathrm{d}\rho \mathrm{d}S (\nu)\, ,
      \end{equation}
  where $S$ stands for the surface measure on the unitary spherical hypersurface $\mathbb{S}^{nd-1}\equiv\{x\in \bR^{nd}\colon \|x\|=1\}$, while the inner integral can be interpreted as the Fourier transform of the matrix-valued distribution $\psi$ on the real line defined as:
  \begin{align}
       \psi (x):&=\theta (x)x^{nd-1}\Phi(x \nu)e^{-x^2/2N} \nonumber \\
       &=\theta (x)x^{nd-1}[e^{-i\sqrt{\hbar \frac{t}{n}}x \nu_1\cdot X^R}
    \cdots
    e^{-i\sqrt{\hbar \frac{t}{n}}x \nu_n\cdot X^R}]\frac{e^{
   \left( \frac{i}{2} -\frac{1}{2N}\right)x^2}}{(2\pi i )^{nd/2}}\, . \label{PSIDEF}
  \end{align}
By dominated convergence we have 
$$ \hat \psi (k)=\lim_{R_0\to +\infty} \int_0^{R_0}e^{ikx}\psi (x)\mathrm{d}x\,.$$
   We can then use a suitable deformation of the integration contour in the complex plane, following the argument presented in greater detail in \cite{AlMa05}. More specifically, let $\widetilde{\psi} $ denote the analytic continuation in the complex plane of the map $\psi$  defined by the right-hand side of (\ref{PSIDEF}) \textit{omitting the singular factor $\theta(x)$}. 
   Let us consider now the   three paths in the complex plane, \textit{cf.} Figure \ref{Fig: 3 paths}:
   \begin{align*}
       \gamma_1&=\{z= r\, , 0\leq r\leq R_0\}\\
       \gamma_2&=\{z= R_0e^{i\theta}\, , \epsilon\leq \theta\leq \pi/4\}\\
       \gamma_3&=\{z= re^{i\pi/4}\, , 0\leq r\leq R_0\}
   \end{align*}
    \begin{figure}[h!]
			\centering
			\begin{tikzpicture}[scale=2.5]
				\draw[->] (0,0) -- (1.2,0);
				\draw[->] (0,0) -- (0,1);
				\draw[variable=\t,domain=0:45,samples=100]
				plot ({cos(\t)},{sin(\t)});
                \draw[->,variable=\t,domain=20:{20+25/2},samples=500]
				plot ({cos(\t)},{sin(\t)}) node[right]  {$\gamma_2$};
                \draw[->] (0,0) -- ({0.5*cos(0)},{0.5*sin(0)}) node[below right] {$\gamma_1$};
                \draw (0,0) -- ({cos(45)},{sin(45)});
                \draw[->] (0,0) -- ({0.5*cos(45)},{0.5*sin(45)}) node[above left] {$\gamma_3$};
			\end{tikzpicture}
			\caption{The paths $\gamma_1,\gamma_2,\gamma_3$.}
			\label{Fig: 3 paths}
		\end{figure}
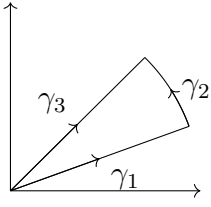
   Due to  the complex analyticity of the matrix-valued map $\widetilde{\psi}$ in the region bounded by the union of the three curves, we have
   $$ \int _{\gamma_1\cup\gamma_2\cup (-\gamma_3)} e^{ikz}\widetilde{\psi} (z) \mathrm{d}z=0\, .$$
   In particular this yields
   \begin{equation*}
       \lim_{R_0\to \infty}\int _{\gamma_1} e^{ikz}\widetilde{\psi} (z) \mathrm{d}z=\int_0^\infty e^{ikre^{i\pi/4}}\widetilde{\psi} (re^{i\pi/4})e^{i\pi/4}\mathrm{d}r-\lim_{R_0\to +\infty}\int_0^{\pi/4}e^{ikR_0e^{i\theta}}\widetilde{\psi} (R_0e^{i\theta})R_0e^{i\theta} \mathrm{d}\theta.
   \end{equation*}
The first integral on the right hand side is the Gaussian integral 
$$  \int_0^\infty e^{ikre^{i\pi/4}}r^{nd-1}[e^{-ie^{i\pi/4}\sqrt{\hbar \frac{t}{n}}x \nu_1\cdot X^R}
    \cdots
    e^{-ie^{i\pi/4}\sqrt{\hbar \frac{t}{n}}r \nu_n\cdot X^R}]\frac{e^{
    -\frac{1}{2} r^2}e^{-\frac{i}{2N}r^2}}{(2\pi  )^{nd/2}}\mathrm{d}r\, .$$
Concerning the second integral, by considering the operator norm $\|X^R_j\|$ of the matrices $X^R_j$, $j=1,\dots , d$ and by setting
\begin{equation}
    \label{def-l}l:=d\max_j \|X^R_j\|\,,
\end{equation}
we have the trivial bound $\|\nu_j\cdot X^R\|\leq l $, valid for any $\nu_j\in \mathbb{R}^{d}$ such that $\|\nu_j\|\leq 1$.
In particular each component of the matrix $\int_0^{\pi/4}e^{ikR_0e^{i\theta}}\tilde\psi (R_0e^{i\theta})R_0e^{i\theta} \mathrm{d}\theta$ will be bounded by
\begin{multline*}
    \frac{R_0^{nd}}{(2\pi)^{nd/2}}\int_0^{\pi/4}  e^{-kR_0\sin\theta} e^{R_0l\sqrt{\hbar t n}\sin\theta}e^{-\left(\sin (2\theta)+\frac{\cos(2\theta)}{N}\right)\frac{R_0^2}{2}}  \mathrm{d}\theta
    \\
    \leq\frac{R_0^{nd}}{(2\pi)^{nd/2}}
    \bigg(\frac{\pi}{4}\bigg)
    e^{-\frac{R_0^2}{2N}}
    e^{(|k|+l\sqrt{\hbar t n})R_0/\sqrt{2}}\,,
\end{multline*}
 which converges to 0 for $R_0\to \infty$.
 This yields:
 \begin{equation}\nonumber
      \hat \Phi_N (k)=\int_{\mathbb{R}^{nd}} e^{ie^{i\pi /4}kx}[e^{-ie^{i\pi /4}\sqrt{\hbar \frac{t}{n}} x_1\cdot X^R}
    \cdots
    e^{-ie^{i\pi /4}\sqrt{\hbar \frac{t}{n}}x_n\cdot X^R}]\frac{e^{
   - \frac{1}{2}\left(1+\frac{i}{2N}\right) \sum\limits_{j=1}^n\|x_j\|^2}}{(2\pi  )^{nd/2}}\mathrm{d}x\, .
 \end{equation}
 Finally, By taking the limit for $N\to \infty$ in \eqref{FTPHI} we can conclude that the matrix valued distribution $\hat \Phi$ is a smooth map given by the Gaussian integral
 \begin{equation}\label{rapp-fou}
      \hat \Phi (k)
      =\int_{\mathbb{R}^{nd}} e^{ie^{i\pi /4}kx}[e^{-ie^{i\pi /4}\sqrt{\hbar \frac{t}{n}} x_1\cdot X^R}
    \cdots
    e^{-ie^{i\pi /4}\sqrt{\hbar \frac{t}{n}}x_n\cdot X^R}]\frac{e^{
   - \frac{1}{2} \sum\limits_{j=1}^n\|x_j\|^2}}{(2\pi  )^{nd/2}}\mathrm{d}x\, .
 \end{equation}
 
 Coming back to the regularized integral \eqref{int-phiepsilon}, a simple change of variable argument leads to the following  expression:
 \begin{equation*}
        I_\varphi (\epsilon)=\frac{1}{(2\pi)^{nd}}\int_{\mathbb{R}^{nd}}\hat \Phi(\epsilon k)\hat \varphi \left(k\right)\mathrm{d}k\, .
    \end{equation*}
    By using the trivial identity $\frac{1}{(2\pi)^{nd}}\int \hat \varphi \left(k\right) \mathrm{d}k=\varphi(0)=1$ and the  dominated convergence theorem, which applies thanks to the representation \eqref{rapp-fou} for $\hat \Phi$, we eventually obtain 
   \begin{multline*}
       \frac{1}{(2\pi i )^{nd/2}}
        \int^o_{\mathbb{R}^{nd}}
        e^{
        \frac{i}{2} \sum\limits_{j=1}^n\|x_j\|^2}
        [e^{-i\sqrt{\hbar \frac{t}{n}} x_1\cdot X^R}
        \cdots
        e^{-i\sqrt{\hbar \frac{t}{n}}x_n\cdot X^R}]\mathrm{d}x_1\cdots\mathrm{d}x_n
       \\=\lim_{\epsilon\downarrow 0} I_\varphi (\epsilon)
       =\hat \Phi(0)
       =\int_{\mathbb{R}^{nd}}[e^{-ie^{i\pi /4}\sqrt{\hbar \frac{t}{n}} x_1\cdot X^R}
    \cdots
    e^{-ie^{i\pi /4}\sqrt{\hbar \frac{t}{n}}x_n\cdot X^R}]\frac{e^{
   - \frac{1}{2} \sum\limits_{j=1}^n\|x_j\|^2}}{(2\pi  )^{nd/2}}\mathrm{d}x\, .
   \end{multline*}
   \end{proof}

 By applying Theorem \ref{teo-osc-gau} and Fubini's Theorem we find the following useful corollary.
 
 \begin{corollary} With the same hypotheses as in Theorem \ref{teo-osc-gau}, the following \textit{Fubini-type identity} holds
 \begin{multline}\label{Fubini-type}
      \frac{1}{(2\pi i )^{nd/2}}
    \int^o_{\mathbb{R}^{nd}}
    e^{
    \frac{i}{2} \sum\limits_{j=1}^n\|x_j\|^2}
    [e^{-i\sqrt{\hbar \frac{t}{n}} x_1\cdot X^R}
    \cdots
    e^{-i\sqrt{\hbar \frac{t}{n}}x_n\cdot X^R}]\mathrm{d}x_1\cdots\mathrm{d}x_n\\
    =\prod _{j=1}^n\frac{1}{(2\pi i )^{d/2}} \int^o_{\mathbb{R}^{d}}e^{
    \frac{i}{2} 
    \|x_j\|^2}e^{-i\sqrt{\hbar \frac{t}{n}}x_j\cdot X^R}\mathrm{d}x_j\:.
 \end{multline}
 \end{corollary}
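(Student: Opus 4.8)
The plan is to derive this as a direct consequence of Theorem \ref{teo-osc-gau} together with an ordinary application of Fubini's theorem for absolutely convergent integrals: the theorem lets us trade the oscillatory integrals on both sides of \eqref{Fubini-type} for Gaussian ones, and once everything is absolutely convergent the $nd$-dimensional Gaussian integral factorizes over the $n$ blocks of variables.

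First I would invoke Theorem \ref{teo-osc-gau} on the left-hand side, rewriting the $nd$-dimensional matrix-valued oscillatory integral as the absolutely convergent Gaussian integral \eqref{int-osc-rot}. Its integrand factorizes,
\begin{multline*}
    \frac{e^{-\frac12\sum_{j=1}^n\|x_j\|^2}}{(2\pi)^{nd/2}}\,\bigl[e^{-ie^{i\pi/4}\sqrt{\hbar t/n}\,x_1\cdot X^R}\cdots e^{-ie^{i\pi/4}\sqrt{\hbar t/n}\,x_n\cdot X^R}\bigr]
    \\=\prod_{j=1}^n\frac{e^{-\frac12\|x_j\|^2}}{(2\pi)^{d/2}}\,e^{-ie^{i\pi/4}\sqrt{\hbar t/n}\,x_j\cdot X^R}\,,
\end{multline*}
a product of matrix-valued functions in which the $j$-th factor depends only on the block $x_j\in\mathbb{R}^d$. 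Using $e^{i\pi/4}=\tfrac{1}{\sqrt2}(1+i)$ and the self-adjointness of $x_j\cdot X^R$, each factor splits as a positive self-adjoint exponential times a unitary, so that $\|e^{-ie^{i\pi/4}\sqrt{\hbar t/n}\,x_j\cdot X^R}\|\leq e^{(l/\sqrt2)\sqrt{\hbar t/n}\,\|x_j\|}$ with $l$ as in \eqref{def-l}; paired with the Gaussian weight, this operator-norm bound shows that every scalar entry of the integrand is absolutely integrable over $\mathbb{R}^{nd}$ and that the dominating function itself factorizes over the blocks. Hence ordinary Fubini applies entrywise: expanding the matrix product and integrating term by term, the $x_j$-integration acts only on the $j$-th factor, so the left-to-right ordering of the matrices is preserved and the integral equals $\prod_{j=1}^n\bigl[(2\pi)^{-d/2}\int_{\mathbb{R}^d}e^{-\frac12\|x_j\|^2}e^{-ie^{i\pi/4}\sqrt{\hbar t/n}\,x_j\cdot X^R}\mathrm{d}x_j\bigr]$, the normalization matching since $(2\pi i)^{-nd/2}=\prod_{j=1}^n(2\pi i)^{-d/2}$.

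Finally I would apply Theorem \ref{teo-osc-gau} once more, now with the integer $n$ set to $1$ and the parameter $t$ set to $t/n$, to recognize each factor $(2\pi)^{-d/2}\int_{\mathbb{R}^d}e^{-\frac12\|x\|^2}e^{-ie^{i\pi/4}\sqrt{\hbar t/n}\,x\cdot X^R}\mathrm{d}x$ as the one-dimensional oscillatory integral $(2\pi i)^{-d/2}\int^o_{\mathbb{R}^d}e^{\frac{i}{2}\|x\|^2}e^{-i\sqrt{\hbar t/n}\,x\cdot X^R}\mathrm{d}x$, which gives \eqref{Fubini-type}. There is no genuine analytic difficulty here; the only point requiring attention is bookkeeping: because the matrices $x_j\cdot X^R$ do not commute, Fubini must be invoked on the scalar entries of the matrix product (which is legitimate precisely because the operator-norm estimate above secures absolute integrability), and the order of the factors is undisturbed only because distinct factors involve distinct integration variables. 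A secondary, purely cosmetic matter is keeping the powers of $2\pi i$ consistent across the two uses of Theorem \ref{teo-osc-gau}.
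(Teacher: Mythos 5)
Your argument is correct and is precisely the one the paper intends (the corollary is stated without proof as an "immediate consequence" of Theorem \ref{teo-osc-gau}): convert to the absolutely convergent Gaussian form, factorize the integrand over the blocks $x_1,\dots,x_n$, apply Fubini entrywise with the operator-norm bound $\|e^{-ie^{i\pi/4}\sqrt{\hbar t/n}\,x_j\cdot X^R}\|\leq e^{(l/\sqrt2)\sqrt{\hbar t/n}\,\|x_j\|}$ as the dominating function, and reconvert each single-block Gaussian integral via the theorem with $n=1$ and $t\mapsto t/n$. You also correctly handle the only delicate point (non-commutativity forces Fubini to be applied to scalar entries so that the ordering of the matrix factors is preserved) and implicitly fix the normalization, which in the paper's display should read $(2\pi i)^{-d/2}$ inside the product.
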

\subsection{Chernoff approximations of the Schr\"odinger group}
We will now study the limit of the finite dimensional oscillatory integrals \eqref{FM-2} and relate it to the solution of the Schr\"odinger equation \eqref{Schroedinger} with $V=0$ and $\psi_0= \sum_{\lambda \in \Lambda} c_\lambda\varphi_\lambda$, where $\Lambda \subset \sigma(-\overline{\Delta_{\boldsymbol{g}}})$ is bounded. A priori, this can be achieved by an application of the renown Chernoff's theorem  \cite{Chernoff,EN,BS}.
However, in our setting we are dealing with linear operators on finitely dimensional spaces. Therefore, it is more efficient to provide a concrete proof of Chernoff's theorem for the case at hand: The latter will have the merit to enlighten the quantitative bounds of our approximation.

 Let us consider now the strongly continuous Schr\"odinger unitary group (which is therefore a $C_0$-semigroup on the Hilbert space $ L^2(G,\mu_G)$ for $t\in \mathbb{R}_+$)
  $$U(t):=e^{\frac{it\hbar}{2}\overline{\Delta_{\boldsymbol{g}}}}\colon L^2(G,\mu_G)\to L^2(G,\mu_G)\:,$$ and, for the moment, a single eigenfunction $\varphi_\lambda$ of the Laplace-Beltrami operator $\Delta_{\boldsymbol{g}}$, i.e. $-\Delta_{\boldsymbol{g}}\varphi_\lambda=\lambda\varphi_\lambda $. We consider the Feynman map $F_{\mathcal{H}_{x,t}(G)}$ (\ref{Def-FM-M}) applied to the function $f: \mathcal{H}_{x,t}(G)\to \mathbb{C}$ given by
  $$ f(\gamma)=\varphi_\lambda (\gamma(t))\, , \qquad \gamma \in \mathcal{H}_{x,t}(G).$$
  As discussed above, $F_{\mathcal{H}_{x,t}(G)}(f) $ can be computed in terms of the limit of matrix-valued oscillatory integrals \eqref{FM-2} that, thanks the Fubini-type formula \eqref{Fubini-type}, leads to the following representation:
  \begin{equation}
     F_{\mathcal{H}_{x,t}(G)}(f)=\lim_{n\to \infty}( S(t/n))^n\varphi_\lambda (x)\, ,
  \end{equation}
  where 
  $(S(t))_{t\geq 0}$,  is the 1-parameter family of operators on $\oplus_{\lambda\in\Lambda} H_\lambda$, where $\Lambda$ is a finite subset of $\sigma(\overline{\Delta}_{\boldsymbol{g}})$, defined by
\begin{multline}\label{rapp-gaus-int}
    S(t)
    :=\frac{1}{(2\pi i )^{d/2}}
    \int^o_{\mathbb{R}^{d}}
    e^{
    \frac{i}{2} \|x\|^2}
    e^{-i\sqrt{\hbar t} x\cdot X^R}
    \mathrm{d}x
    \\
    =\frac{1}{(2\pi  )^{d/2}}
    \int_{\mathbb{R}^{d}}
    e^{-
    \frac{1}{2} \|x\|^2}
    e^{-ie^{i\pi/4}\sqrt{\hbar t} x\cdot X^R}
    \mathrm{d}x
    \in\mathfrak{B}(\oplus_{\lambda\in\Lambda}H_\lambda)\,,
\end{multline}
where in the second equality we used theorem \ref{teo-osc-gau}.
Notice that $\oplus_{\lambda\in\Lambda}H_\lambda$ is finite dimensional, therefore, $S(t)$ can be identified with a square matrix of suitable size (which increases with $\# \Lambda$).
The following lemma recollects the main properties of the map $t\mapsto S(t)$.

\begin{lemma}\label{Lem: Taylor expansion of S(t)}
    It holds $S(0)=I$, moreover,
    \begin{align}\label{Eq: Taylor expansion of S(t)}
        S(t)
        =I
        +\frac{i\hbar t}{2}\Delta_{\boldsymbol{g}}
        +R_2(t)\,,
        \qquad
        \|R_2(t)\|\leq C_2t^2\,.
    \end{align}
\end{lemma}
\begin{proof}
    The initial condition $S(0)=I$ is verified by direct inspection.
    For what concerns \eqref{Eq: Taylor expansion of S(t)}, the representation \eqref{rapp-gaus-int} leads to:
    \begin{multline}\label{rap-st-C4}
        S(t)=
        \int_{\mathbb{R}^{d}}\bigg[I-ie^{i\pi/4}\sqrt{\hbar t} x\cdot X^R-\frac{i\hbar}{2}(x\cdot X^R)^2 t
        \\
        -\frac{e^{i\pi/4}}{3!}\hbar ^{3/2}(x\cdot X^R)^3 t^{3/2}\bigg]\frac{e^{-\frac{\|x\|^2}{2}}}{(2\pi  )^{d/2}}\mathrm{d}x
        +R_2(t)\,,
  \end{multline}
  with 
  \begin{equation}\label{rap-Rt-C4}
      R_2(t)= \int_{\mathbb{R}^{d}}\left(\int_0^1\frac{(-ie^{i\pi/4}\sqrt{\hbar t} x\cdot X^R)^4}{3!} e^{-ie^{i\pi/4}\sqrt{\hbar t}u x\cdot X^R} (1-u)   \mathrm{d}u \right) \frac{e^{-\frac{\|x\|^2}{2}}}{(2\pi  )^{d/2}}\mathrm{d}x\, .
  \end{equation}
  Each term in \eqref{rap-st-C4} can be treated separately. In particular:
  \begin{align*}
      \int_{\mathbb{R}^{d}}x\cdot X^R\frac{e^{-\frac{\|x\|^2}{2}}}{(2\pi  )^{d/2}}\mathrm{d}x&
      =\sum_{j=1}^{d}X^R_j\int_{\mathbb{R}^{d}}x^j\frac{e^{-\frac{\|x\|^2}{2}}}{(2\pi  )^{d/2}}\mathrm{d}x=0\,,
      \\
      \int_{\mathbb{R}^{d}}(x\cdot X^R)^3\frac{e^{-\frac{\|x\|^2}{2}}}{(2\pi  )^{d/2}}\mathrm{d}x&
      =\sum_{j,k,l=1}^{d}X^R_jX^R_kX^R_l\int_{\mathbb{R}^{d}}x^jx^kx^l\frac{e^{-\frac{\|x\|^2}{2}}}{(2\pi  )^{d/2}}\mathrm{d}x=0
  \end{align*}
  while 
  \begin{align*}
      \int_{\mathbb{R}^{d}}(x\cdot X^R)^2\frac{e^{-\frac{\|x\|^2}{2}}}{(2\pi  )^{d/2}}\mathrm{d}x&=\sum_{j,k=1}^{d}X^R_jX^R_k\int_{\mathbb{R}^{d}}x^jx^k\frac{e^{-\frac{\|x\|^2}{2}}}{(2\pi  )^{d/2}}\mathrm{d}x
      \\
      &=\sum_{j,k=1}^{d}X^R_jX^R_k\delta_{jk}
      {
      =-X_{\boldsymbol{g}}^2
      =-\Delta_{\boldsymbol{g}}}\,,
  \end{align*}
 {where we used} proposition \ref{PROPX2DELTA} {together with the identity $X^R=i\tilde{X}$, \textit{cf} proposition \ref{PROPRAPR}-(2b)}.
  Overall we have
  $$ S(t)
  =I+\frac{i\hbar t}{2}\Delta_{\boldsymbol{g}}+R_2(t)\,.$$
  The remainder term $R_2(t) $ can be estimated by using representation \eqref{rap-Rt-C4}. In particular, recalling the definition \eqref{def-l} of the positive constant $l$, we have:
  \begin{align*}
      \|R_2(t)\|&\leq \frac{\hbar^2 t^2}{3!}\int_{\mathbb{R}^{d}}\int_0^1(1-u)\| x\cdot X^R\|^4\|e^{\frac{\sqrt 2 }{2}\sqrt{\hbar t}u x\cdot X^R}\|\frac{e^{-\frac{\|x\|^2}{2}}}{(2\pi  )^{d/2}}\mathrm{d}x\mathrm{d}u\\
      &\leq  \frac{\hbar^2 t^2}{3!} l^4 \int_{\mathbb{R}^{d}}\int_0^1(1-u)\|x\|^4 e^{\frac{\sqrt 2 }{2}\sqrt{\hbar t}u \|x\|l}\frac{e^{-\frac{\|x\|^2}{2}}}{(2\pi  )^{d/2}}\mathrm{d}x\mathrm{d}u\\
      &\leq \frac{\hbar^2 t^2}{12}\int_{\mathbb{R}^{d}}\|x\|^4 e^{\frac{\sqrt 2 }{2}\sqrt{\hbar t} \|x\|l}\frac{e^{-\frac{\|x\|^2}{2}}}{(2\pi  )^{d/2}}\mathrm{d}x\,.
  \end{align*}
  Hence, if $|t|\leq 1$ the remainder satisfies the estimate $ \|R_2(t)\|\leq C_2t^2$, with the constant $C_2$ given by
  $$C_2=\frac{\hbar^2}{12}\int_{\mathbb{R}^{d}}\|x\|^4 e^{\frac{\sqrt 2 }{2}\sqrt{\hbar } \|x\|l}\frac{e^{-\frac{\|x\|^2}{2}}}{(2\pi  )^{d/2}}\mathrm{d}x\,.$$
\end{proof}

 \begin{theorem}\label{teoCherAppr}
     Let $S:\mathbb{R}^+\to \mathfrak{B}(\bigoplus_{\lambda \in \Lambda} H_\lambda)$ be the map defined by the matrix-valued oscillatory integral 
     \begin{equation}\label{FormulaChernoffOperator}
         S(t):= 
   \frac{1}{(2\pi i )^{d/2}}
    \int^o_{\mathbb{R}^{d}}
    e^{
    \frac{i}{2} \|x\|^2}
    e^{-i\sqrt{\hbar t} x\cdot X^R}
    \mathrm{d}x\,.
     \end{equation}
     Then we have
     \begin{align}\label{Eq: Chernoff approximation - standard}
         \exists C>0\colon
         \sup_{t\in[0,T]}\|S(t/n)^n-e^{\frac{it\hbar}{2}\overline{\Delta_{\boldsymbol{g}}}}\|
         \leq C/n\,.
     \end{align}
     In other words, the limit $S(t/n)^n$ converges to the (restriction to $\oplus_{\lambda\in\Lambda}H_\lambda$ of the) 1-parameter group of unitary operators $U(t)=e^{\frac{it\hbar}{2}\overline{\Delta_{\boldsymbol{g}}}}$.
 \end{theorem}
\begin{proof}
    In what follows, we will consider the restriction of $\Delta_{\boldsymbol{g}}$ to $\oplus_{\lambda\in\Lambda}H_\lambda$.
    The latter is a trivially bounded operator with norm $\max_{\lambda\in\Lambda}\lambda$.
    Since $\oplus_{\lambda\in\Lambda}H_\lambda$ is finitely dimensional, we also drop the distinction between $\Delta_{\boldsymbol{g}}$ and $\overline{\Delta_{\boldsymbol{g}}}$.
    By Lemma \ref{Lem: Taylor expansion of S(t)} we have $\|S(t)-I\|\leq ct$ for some positive constant $c>0$.
    This implies that $\|S(t/n)-I\|<1$ for $n$ large enough and all $t\in [0,T]$.
    Thus, the matrix logarithm of $S(t/n)^n$ makes sense and we have
    \begin{align*}
        S(t/n)^n=\exp[\log(S(t/n)^n)]
        =\exp[n\log S(t/n)]\,.
    \end{align*}
    Moreover, the Taylor expansion of the matrix logarithm and Lemma \ref{Lem: Taylor expansion of S(t)} leads to
    \begin{align*}
        \log S(t/n)
        &=S(t/n)-I
        +R_2'(t/n)
        &\exists C_2'>0\colon\|R_2'(t/n)\|\leq C_2't^2/n^2
        \\
        &=\frac{i\hbar t}{2n}\Delta_{\boldsymbol{g}}
        +R_2(t/n)+R_2'(t/n)\,.
        &\textrm{Lemma }\ref{Lem: Taylor expansion of S(t)}
    \end{align*}
    It follows that
    \begin{align*}
        S(t/n)^n=\exp\Big[
        \frac{i\hbar t}{2}\Delta_{\boldsymbol{g}}
        +R_1(t^2/n)
        \Big]
        \qquad
        \exists C_1>0\colon \|R_1(t^2/n)\|
        =C_1t^2/n\,.
    \end{align*}
    We now estimate
    \begin{multline*}
        S(t/n)^n-e^{\frac{it\hbar}{2}\Delta_{\boldsymbol{g}}}
        =\int_0^1\frac{\mathrm{d}}{\mathrm{d}\varepsilon}
        e^{\frac{it\hbar}{2}\Delta_{\boldsymbol{g}}
        +\varepsilon R_1(t^2/n)}
        \mathrm{d}\varepsilon
        \\
        =\int_0^1
        \sum_{n\geq 1}
        \frac{1}{n!}
        \sum_{k=0}^{n-1}
        \Big(\frac{it\hbar}{2}\Delta_{\boldsymbol{g}}
        +\varepsilon R_1(t^2/n)\Big)^k
        R_1(t^2/n)
        \Big(\frac{it\hbar}{2}\Delta_{\boldsymbol{g}}
        +\varepsilon R_1(t^2/n)\Big)^{n-1-k}
        \mathrm{d}\varepsilon\,.
    \end{multline*}
    Thus, we have
    \begin{align*}
        \|S(t/n)^n-e^{\frac{it\hbar}{2}\Delta_{\boldsymbol{g}}}\|
        &\leq C_1\frac{t^2}{n}\int_0^1
        \sum_{n\geq 1}
        \frac{1}{(n-1)!}
        \Big\|\frac{it\hbar}{2}\Delta_{\boldsymbol{g}}
        +\varepsilon R_1(t^2/n)\Big\|^{n-1}
        \mathrm{d}\varepsilon
        \\
        &\leq
        C_1\frac{T^2}{n}e^{\frac{T\hbar}{2}\|\Delta_{\boldsymbol{g}}\|+\frac{C_1T^2}{n}}
        \leq\frac{C}{n}\,,
    \end{align*}
    which provides the claimed uniform bound over $t\in [0,T]$.
\end{proof}

\begin{remark}\label{Remark-properties-S(t)}
    \noindent
    \begin{enumerate}[(i)]
    \item
    Theorem \ref{teoCherAppr} is nothing but a finite-dimensional version of the renown Chernoff's theorem.
    The formulation in the infinite-dimensional setting requires an exponential bound of the norm of $S(t)$, namely there should exists $A>0$ and $\omega\in\mathbb{R}$ such that $\|S(t)\|\leq Ae^{\omega t}$.
    In the present setting this assumption is satisfied because
     \begin{align*}
        \|S(t)\|
        \leq  \frac{1}{(2\pi  )^{d/2}}
       \int_{\mathbb{R}^{d}}
       e^{-
        \frac{1}{2} \|x\|^2}
        e^{\frac{\sqrt{2}}{2}\sqrt{\hbar t} l\|x\|}
        \mathrm{d}x
       =\left(\sqrt{\frac{2}{\pi  }}\int_{0}^\infty
       e^{-
        \frac{1}{2} r^2}e^{\frac{\sqrt{2}}{2}\sqrt{\hbar t} lr}\mathrm{d}r\right)^d
        =2^de^{\frac{dl^2\hbar}{4}t}\,,
    \end{align*}
    where $l$ as been defined in equation \eqref{def-l}.
    This estimate was not used for the proof of Theorem \ref{teoCherAppr}, nevertheless, the latter property plays a prominent role also in the finite-dimensional case.
    Indeed, under this assumption it can be shown that
    \begin{align}\label{Eq: Chernoff approximation - improvement excluding 0}
         \forall\varepsilon>0\,,\,
         \exists C>0\colon
         \sup_{t\in[\varepsilon,T]}\|S(t/n)^n-e^{\frac{it\hbar}{2}\overline{\Delta_{\boldsymbol{g}}}}\|
         \leq C/n^2\,,
     \end{align}
     which provides an improvement of the rate of convergence.
     To prove \eqref{Eq: Chernoff approximation - improvement excluding 0} we first observe that, considering $t\in[\varepsilon,T]$, we may assume $A<1$ in the estimate $\|S(t)\|\leq Ae^{\omega t}$ at the price of increasing $\omega$.
     Indeed, let $0<A'<1$: Then
     \begin{align*}
         \|S(t)\|\leq Ae^{\omega t}
         =A'e^{\omega t+\log A/A'}
         \leq A'e^{\omega 't}
         \qquad
         \omega'
         =\omega+\varepsilon^{-1}\log A/A'\,.
     \end{align*}
     Thus, in what follows we will assume $A<1$.
     Lemma \ref{Lem: Taylor expansion of S(t)} and the expansion of the exponential $e^{\frac{it\hbar}{2}\Delta_{\boldsymbol{g}}}$ implies
     \begin{align*}
         S(t)-e^{\frac{it\hbar}{2}\Delta_{\boldsymbol{g}}}
         =R_2(t)
         \qquad
         \exists C_2>0\colon\|R_2(t)\|\leq C_2t^2\,.
     \end{align*}
     Next, we observe that
     \begin{align*}
          S(t/n)^n
         -e^{\frac{it\hbar}{2}\Delta_{\boldsymbol{g}}}
         =S(t/n)^n
         -(e^{\frac{it\hbar}{2n}\Delta_{\boldsymbol{g}}})^n
         =\sum_{j=0}^{n-1}
         S(t/n)^j
         R_2(t/n)
         (e^{\frac{it\hbar}{2n}\Delta_{\boldsymbol{g}}})^{n-1-j}
         \,.
     \end{align*}
     It follows that
     \begin{align*}
         \|S(t/n)^n
         -e^{\frac{it\hbar}{2}\Delta_{\boldsymbol{g}}}\|
         \leq\frac{C_2T^2}{n^2}\sum_{j=0}^{n-1}
         A^je^{\omega tj/n}
         \leq\frac{C_2T^2}{n^2}
         \frac{1-A^ne^{\omega T}}{1-Ae^{\omega T/n}}
         \leq\frac{C}{n^2}\,,
     \end{align*}
    as claimed.

    \item 
    For later convenience we point out that, with minor changes to the proof of Theorem \ref{teoCherAppr}, we may prove the following variation of \eqref{Eq: Chernoff approximation - standard}, \textit{cf.} \cite[Cor. 5.4]{EN} for the infinite dimensional case.
    Let $t>0$ and let $\{t_n\}$ be a positive null sequence and $\{k_n\}$ be an increasing
    sequence of integers such that $t_nk_n\underset{n\to\infty}{\longrightarrow}t$.
    Then 
    \begin{align}\label{Eq: Chernoff approximation - with arbitrary sequence}
        \lim_{n\to \infty }\|S(t_n)^{k_n}-e^{\frac{it\hbar}{2}\overline{\Delta_{\boldsymbol{g}}}}\|=0\,.
    \end{align}
    \end{enumerate}
\end{remark}

 Let us finally come back to the initial problem of the computation of 
\begin{multline}\label{FM-2X}
     F_{\mathcal{H}_{x,t}(G)}(f)
   =\lim_{n\to \infty}
    \frac{1}{(2\pi\hbar i (t/n)^{-1})^{nd/2}}
    \int^o
    e^{
    \frac{i}{2\hbar}\frac{t}{n} \sum\limits_{j=1}^n\|v_j\|^2} \\
    [e^{-i\frac{t}{n} v_1\cdot X^R}
    \cdots
    e^{-i\frac{t}{n}v_n\cdot X^R}]\mathrm{d}v_1\cdots\mathrm{d}v_n \sum_{\lambda \in \Lambda}c_\lambda \varphi_\lambda(x)\,.
\end{multline}
with  $f(\gamma) := \sum_{\lambda \in \Lambda} c_\lambda\varphi_\lambda(x)$   where, as mentioned above,  $\Lambda \subset \sigma(-\overline{\Delta_{\boldsymbol{g}}})$ is bounded.
Thus, we have  established identity (\ref{EXPECTED2}). That identity is the simplest case of the  conjecture stated in Section \ref{SECCONJ}. We have in fact proved the following theorem.\\
 
  \begin{theorem}\label{FinalTheoremSchroedinger1} Consider a compact Lie group $G$ equipped with a bi-invariant Riemannian metric ${\boldsymbol{g}}$ and let $\Delta_{\boldsymbol{g}}$ denote
the   Laplace-Beltrami operator (\ref{LB}).
    If $\varphi \in C^\infty(G;\mathbb{C})$ is a finite energy function (Definition \ref{Def: finite energy functions}), define  $f: \mathcal{H}_{x,t}(G)\to \mathbb{C}$ as
      $$ f(\gamma)=\varphi (\gamma(t))\, , \qquad \gamma \in \mathcal{H}_{x,t}(G),$$
      where the space of paths $\mathcal{H}_{x,t}(G)$ is defined in (\ref{HxtM}).
      Then the Feynman map $F_{\mathcal{H}_{x,t}(G)}$ (\ref{FM-2}) applied to the function $f$ gives the solution of the free Schr\"odinger equation with initial datum $\varphi$:
      $$F_{\mathcal{H}_{x,t}(G)}(f)= \left(
      e^{\frac{i\hbar t}{2}\overline{\Delta_{\boldsymbol{g}}}  }\varphi
      \right)(x)\,, \quad \forall x\in G, \forall t\in \mathbb{R}\:.$$
  \end{theorem}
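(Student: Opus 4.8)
The plan is to stitch together the machinery already developed: the Feynman map on $G$ unfolds into a limit of matrix-valued oscillatory integrals, those integrals factorise into iterated copies of the function $S$, and Chernoff's theorem then identifies the limit with the free Schr\"odinger propagator.

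\textbf{Step 1 (reduction to a Chernoff approximation).} By linearity of $F_{\mathcal{H}_{x,t}(G)}$ it is enough to take $\varphi=\varphi_\lambda$, a single eigenfunction of $-\Delta_{\boldsymbol{g}}$; a general finite-energy datum $f(\gamma)=\sum_{\lambda\in\Lambda}c_\lambda\varphi_\lambda(\gamma(t))$, $\Lambda\subset\sigma(-\overline{\Delta_{\boldsymbol{g}}})$ finite, is then recovered by summation. Starting from the representation \eqref{Eq: Feynman integral parametrized by Lie algebra velocities} of the Feynman map parametrised by the Lie-algebra velocities $v_1,\dots,v_n$, I would insert \eqref{calcolo}, i.e. $\varphi_\lambda(\gamma_{x,\boldsymbol v}(t))=\big[e^{-i\frac{t}{n}v_1\cdot X^R}\cdots e^{-i\frac{t}{n}v_n\cdot X^R}\varphi_\lambda\big](x)$; this is legitimate because $\varphi_\lambda\in H_\lambda$ and, by Proposition \ref{PROPRAPR}, each $e^{-isX^R}$ preserves the finite-dimensional space $H_\lambda$ with bounded restriction, so everything lives inside $\bigoplus_{\lambda\in\Lambda}H_\lambda$ and the matrix-valued oscillatory integral of Definition \ref{DEFMI} makes sense. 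This yields \eqref{FM-2} (resp. \eqref{FM-2X}); using Lemma \ref{lemma-trasf-int} to absorb the $\hbar t/n$-rescaling and then the Fubini-type identity \eqref{Fubini-type} (corollary of Theorem \ref{teo-osc-gau}), the $n$-fold integral collapses to $S(t/n)^n$ acting on $\varphi_\lambda$, with $S$ the matrix-valued map \eqref{FormulaChernoffOperator}. Hence the claim is reduced to $\lim_{n\to\infty}\big(S(t/n)^n\varphi_\lambda\big)(x)=\big(e^{\frac{i\hbar t}{2}\overline{\Delta_{\boldsymbol{g}}}}\varphi_\lambda\big)(x)$.

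\textbf{Step 2 (taking the limit via Chernoff).} By Theorem \ref{teoCherAppr}, $S$ restricted to $\bigoplus_{\lambda\in\Lambda}H_\lambda$ is a Chernoff function for the generator $\frac{i\hbar}{2}\Delta_{\boldsymbol{g}}$ of $U(t)=e^{\frac{i\hbar t}{2}\overline{\Delta_{\boldsymbol{g}}}}$. Applying Chernoff's Theorem \ref{Chernoff-theorem} with $T=|t|$ would give $\|S(t/n)^n\varphi_\lambda-U(t)\varphi_\lambda\|_{L^2(G,\mu_G)}\to 0$ for $t\ge0$; for $t<0$ I would note that on each finite-dimensional $H_\lambda$ the operators in play are bounded, $U(t)|_{H_\lambda}=e^{-\frac{i\hbar t}{2}\lambda}I$ depends analytically on $t\in\mathbb{R}$, and the Taylor bound $\|S(t)-I-\frac{i\hbar t}{2}\Delta_{\boldsymbol{g}}\|\le C t^2$ from the proof of Theorem \ref{teoCherAppr} is insensitive to the sign of $t$, so the same convergence holds (equivalently, apply Chernoff to the backward group). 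Finally, since $S(t/n)^n\varphi_\lambda$ and $U(t)\varphi_\lambda$ all sit in the finite-dimensional subspace $\bigoplus_{\lambda\in\Lambda}H_\lambda$, whose elements are $C^\infty$ functions by Proposition \ref{PROPSPECD}(3) and on which all norms are equivalent, the $L^2$-convergence forces uniform, hence pointwise, convergence. Evaluating at $x$ and combining with Step 1 gives $F_{\mathcal{H}_{x,t}(G)}(f)=\big(U(t)\varphi_\lambda\big)(x)=e^{-\frac{i\hbar t}{2}\lambda}\varphi_\lambda(x)$, i.e. \eqref{EXPECTED}; summing over $\lambda$ produces \eqref{EXPECTED2} and the statement.

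\textbf{Main obstacle.} The real work has already been done in the preparatory results: turning the matrix-valued oscillatory integral into an absolutely convergent Gaussian one by contour deformation (Theorem \ref{teo-osc-gau}) and verifying the hypotheses of Chernoff's theorem, most delicately condition \eqref{Item: Chernoff conditions - approximation on a core}, where the identity $\int_{\mathbb{R}^d}(x\cdot X^R)^2 e^{-\|x\|^2/2}(2\pi)^{-d/2}\mathrm{d}x=\sum_{j}(X_j^R)^2=-X_{\boldsymbol{g}}^2=-\Delta_{\boldsymbol{g}}$ rests on $X^R=i\widetilde{X}$ (Proposition \ref{PROPRAPR}(2b)) and on $X_{\boldsymbol{g}}^2=\Delta_{\boldsymbol{g}}$ (Proposition \ref{PROPX2DELTA}). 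Granting all that, the only genuinely new point in this theorem is the passage from the $L^2$-convergence supplied by Chernoff to the pointwise value $(\,\cdot\,)(x)$ required by the definition of the Feynman map; this is precisely where the finite-energy hypothesis on $\varphi$ is used, since it confines the whole argument to a finite-dimensional space of smooth functions where pointwise evaluation is continuous.
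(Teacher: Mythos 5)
Your proposal is correct and follows essentially the same route as the paper: the Cartan-map representation of the Feynman map, insertion of \eqref{calcolo} to pass to the matrix-valued oscillatory integral, the Gaussian rewriting and Fubini-type factorisation from Theorem \ref{teo-osc-gau}, and finally Chernoff's theorem applied to the function $S$ of Theorem \ref{teoCherAppr}. You are in fact slightly more careful than the paper on two points it leaves implicit --- the case $t<0$ and the passage from $L^2$- to pointwise convergence --- both of which you correctly settle by the finite-dimensionality of $\bigoplus_{\lambda\in\Lambda}H_\lambda$.
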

  $\null$\\
  With a more formal notation, closer to the original intuition by Feynman, we have found that
      \begin{align}\nonumber
   \widetilde{\int}_{\mathcal{H}_{x,t}(G)} e^{\frac{i}{2\hbar}\int_0^t\boldsymbol{g}(\dot{\gamma}(s),\dot{\gamma}(s))\mathrm{d}s} \varphi(\gamma(t))\mathrm{d}\gamma =  \left(
      e^{\frac{i\hbar t}{2}\overline{\Delta_{\boldsymbol{g}}}  }\varphi
      \right)(x)\:.
\end{align}
 \begin{remark}
    Given the finite dimension of the space $\bigoplus_{\lambda \in \Lambda} H_\lambda$
  with $\Lambda$ finite and the smoothness of the eigenfunctions $\varphi_\lambda$, the Feynman map $F_{\mathcal{H}_{x,t}(G)}$ provides both a classical solution and a solution in $L^2(G,\mu_G)$ of the Schr\"odinger equation.
\end{remark}
\begin{remark}\label{Rmk: finite energy assumption}
    \noindent
    \begin{enumerate}[(i)]
        \item
        The assumptions on the function $\varphi \in C^\infty(G; \mathbb{C})$ in Theorem \ref{FinalTheoremSchroedinger1} seem to be rather restrictive, but in fact they are quite natural in the context of infinite dimensional oscillatory integrals whenever representation formulae analogous to \eqref{int-osc-rot} come into play. In particular, in the extensively studied case where $G=\mathbb{R}^{d}$, formulae similar to \eqref{int-osc-rot} can be proved by assuming $\varphi \in C^\infty(\mathbb{R}^{d}; \mathbb{C})$ to be Fourier transform of a compactly supported measure (see, e.g., \cite{AlCaMa,BoMa}).
        In the case of a compact Lie group, given the particular structure of the harmonic analysis there, this condition is equivalent to the requirement that $\varphi$ is a finite linear combination of eigenfunctions of  the Laplace-Beltrami operator or, equivalently in the case considered in the present work, the Casimir operator.
        \item
        Our approach exploits the assumption that the geodesics of the used metric coincide with the integral curves of the left-invariant vector fields -- this is necessary to implement the Cartan development map into a useful form.
        For a left-invariant metric that condition is equivalent to requiring that it is also right-invariant in view of the last statement of Proposition \ref{PROPO6}.
        Under the bi-invariance assumption, the Laplace-Beltrami operator coincides with the Casimir operator due to Proposition \ref{PROPX2DELTA}.
        In our approach these two operators must always coincide. 
    \end{enumerate}
\end{remark}

\begin{remark}\label{remark-scalar-curvature}
    In the formula \eqref{FormulaChernoffOperator}  for the Chernoff operators $S(t)$, the integration variables $x$ are coordinates on the Lie algebra $\mathfrak{g}$ and $\mathrm{d}x$ represents the Lebesgue measure $d{\mathcal{ L} }(x)$ there.
    If we restrict the domain of integration to an open set $U\subset\mathfrak{g}$ where $\exp$ is a diffeomorphism, we obtain an integral on the Lie group $G$ with respect to the push-forward measure of $d{\mathcal{ L} }(x)$  under the action of the exponential map.
    \begin{multline}\label{Eq: local oscillatory integration - motivating formula}
    \frac{1}{(2\pi i t\hbar )^{d/2}}
    \int^o_{U}
    e^{
    \frac{i}{2t\hbar} \|x\|^2}
    e^{-ix\cdot X^R}
    \mathrm{d}x
    \\
    =\frac{1}{(2\pi i t\hbar )^{d/2}}\int^o_{\exp(U)} e^{
    \frac{i}{2t\hbar} \|\exp^{-1}(x)\|^2}
    e^{-i \exp^{-1}(x)\cdot X^R}
    \mathrm{d}(\exp)_*{\cal L}(x)\,.
    \end{multline}
    This formula holds only locally, however, it provides a useful insight on the presence or absence of the scalar curvature in our construction.
    Indeed, the latter is a consequence of the choice of the reference measure on $G$ in the finite dimensional approximations of the measure $d\gamma$ on path space.
    This is ultimately related to the construction of the infinite dimensional oscillatory integral on curved space via the Cartan development map \cite{ELT81}, that is able to encode the geometry of the underlying manifold and faithfully translates  the theory developed for $\mathbb{R}^d$ \cite{AlBr,AlHK77,AlHKMa,ELT} into the non-euclidean setting.
    On the other hand, a curvature term may appear by suitably modifying \eqref{Eq: local oscillatory integration - motivating formula}. Indeed, by replacing $\mathrm{d}(\exp)_*{\cal L}(x)$ with the Haar measure $\mu_G$ on the group $G$, one obtains
    \begin{multline}
     \frac{1}{(2\pi i t\hbar )^{d/2}}\int^o_{\exp^{-1}(U)} e^{
    \frac{i}{2t\hbar} \|\exp^{-1}(x)\|^2}
    e^{-i\exp^{-1}(x)\cdot X^R}
    \mathrm{d}\mu_G(x)
    \\
    =\frac{1}{(2\pi i t\hbar )^{d/2}}\int^o_U e^{
    \frac{i}{2t\hbar} \|x\|^2}
    e^{-ix\cdot X^R}J(x)
    \mathrm{d}x\,,
   \end{multline}
   where $J(x)=\det\Big(\frac{1-e^{-\operatorname{ad}(x\cdot X)}}{\operatorname{ad}(x\cdot X)}\Big)$ is the Jacobian of the exponential map, \textit{cf.} \cite[\S II, Thm. 1.7]{Hel} and \cite{Watanabe}.
   This motivates the definition of a different Chernoff approximation, namely
\begin{equation}\label{tildeS}
     \tilde S(t):= 
     \frac{1}{(2\pi i t\hbar )^{d/2}}\int^o_{\mathbb{R}^d} e^{
    \frac{i}{2t\hbar} \|x\|^2}
    e^{-ix\cdot X^R}J(x)
    \mathrm{d}x
     =\frac{1}{(2\pi i )^{d/2}}\int^o_{\mathbb{R}^d} e^{
    \frac{i}{2} \|x\|^2}
    e^{-i\sqrt{ t\hbar}x\cdot X^R}J(\sqrt{ t\hbar}x)
    \mathrm{d}x\,.
   \end{equation} 
    For $t\downarrow 0$ one has the following asymptotic expansion:
    \begin{align*}
        \frac{1-e^{-\mathrm{ad}(\sqrt{\hbar t}x\cdot X)}}{\mathrm{ad}(\sqrt{\hbar t}x\cdot X)}
        &=1
        -\frac{\sqrt{\hbar t}}{2}\mathrm{ad}(x\cdot X)
        +\frac{\hbar t}{6}\mathrm{ad}^2(x\cdot X)
        +o(t)\,,
    \end{align*}
and, by using the asymptotic approximation formula $$\det(I+\epsilon B)=1+\epsilon \mathrm{Tr}[B]-\frac{\epsilon^2}{2}\left((\mathrm{Tr}[B])^2-\mathrm{Tr}[B^2]\right)+o(\epsilon^2)\,, $$ this yields:
\begin{align*}
J(\sqrt{ t\hbar}x)=\det\left(\frac{1-e^{-\mathrm{ad}(\sqrt{\hbar t}x\cdot X)}}{\mathrm{ad}(\sqrt{\hbar t}x\cdot X)}\right)
=1+\frac{\hbar t}{24}\mathrm{Tr}[\mathrm{ad}^2(x\cdot X)]+o(t)\,.
\end{align*}
where in the last step we applied Proposition \ref{PROP13}.
The term $\mathrm{Tr}[\mathrm{ad}^2(x\cdot X)]$ is strictly related with the Ricci tensor.
Indeed, given an orthonormal basis $\{e_i\}_{i=1}^d$ of $\mathfrak{g}$ one has, \textit{cf.} remark \ref{Rmk: Ricci curvature} in the Appendix:
\begin{align*}
    \mathrm{Tr}[\mathrm{ad}^2(x\cdot X)]
    &=\sum_{i=1}^d\boldsymbol{g}_e(e_i,[x\cdot X, [x\cdot X, e_i]])
    \\
    &=-\sum_{i=1}^d\boldsymbol{g}_e([x\cdot X, e_i],[x\cdot X, e_i])
    =-4\,\mathrm{Ric}(x\cdot X,x\cdot X)\,,
\end{align*}
thus giving
$$J(\sqrt{ t\hbar}x)=1-\frac{\hbar t}{6}\mathrm{Ric}(x\cdot X,x\cdot X)+o(t)\,. $$
By studying the asymptotic behavior of the integral \eqref{tildeS} for $t\downarrow 0$, and using the identities
$$ \int_{\mathbb{R}^d}^ox_j\frac{e^{
    \frac{i}{2} \|x\|^2}}{(2\pi i )^{d/2}}\mathrm{d}x=0\qquad \int_{\mathbb{R}^d}^ox_jx_k\frac{e^{
    \frac{i}{2} \|x\|^2}}{(2\pi i )^{d/2}}\mathrm{d}x=i\delta_{jk}\qquad \forall j,k=1,\dots, d,$$
we eventually get
\begin{align*}
    \tilde S(t)
   &= \int^o_{\mathbb{R}^d}
    e^{-i\sqrt{ t\hbar}x\cdot X^R}
    J(\sqrt{ t\hbar}x)
    \frac{e^{
    \frac{i}{2} \|x\|^2}}{(2\pi i )^{d/2}}
    \mathrm{d}x
    \\
    &=\int_{\mathbb{R}^d}^o 
    \left(I-i\sqrt{ t\hbar}x\cdot X^R-\frac{t\hbar}{2}(x\cdot X^R)^2-\frac{1}{6}t\hbar \mathrm{Ric}(x\cdot X,x\cdot X)\right)\frac{e^{
    \frac{i}{2} \|x\|^2}}{(2\pi i )^{d/2}}\mathrm{d}x +o(t)\\
    &=I{+}i\frac{t\hbar}{2}X^2_{\boldsymbol{g}}-i\frac{1}{6}t\hbar\sum_{j=1}^d\mathrm{Ric}(e_i,e_i) +o(t)\\
    &=I{+}i\frac{t\hbar}{2}\Delta_{\boldsymbol{g}}-i\frac{1}{6}t\hbar R
    +o(t),\end{align*}
where $R $ in the last line denotes the scalar curvature. This shows that the family of operators $\tilde S(t)$ is a Chernoff approximation for the unitary group $\tilde U(t)=e^{it\hbar\left(\frac{\overline{\Delta_{\boldsymbol{g}}}}{2}{-}\frac{1}{6}R\right)}$ on $L^2(G,\mu_G)$.
Similarly, this result shows that the family of operators $(\tilde S_R(t))_t$ defined by
\begin{align*}
    \tilde S_R(t):=\tilde S(t) e^{\frac{i\hbar}{6}Rt}\,.
\end{align*}
is a Chernoff approximation for the unitary group $U(t)=e^{it\hbar \frac{\overline{\Delta_{\boldsymbol{g}}}}{2}}$.

These results show that, analogously to the case of the heat equation studied in \cite{AnDri}, even in the Schr\"odinger equation case the  appearance of the scalar curvature correction term arises from different choices of the reference measure on the finite-dimensional approximations of the path space and it is ultimately related to the particular form of the Radon-Nikodym derivative between them.

\end{remark}
\subsection{Integration of cylinder functions}
In this section we will discuss the Feynman map for a wider class of functions $f\colon\mathcal{H}_{x,t}(G)\to\mathbb{C}$, namely cylinder functions which are obtained as a product of finite energy functions, \textit{cf.} Definition \ref{Def: finite energy functions}.
In a nutshell, the value that a cylinder function $f\colon\mathcal{H}_{x,t}(G)\to\mathbb{C}$ attains on a path $\gamma\in \mathcal{H}_{x,t}(G)$ depends only on the values of $\gamma$ at a finite (fixed) set of times.
Actually, let us consider a finite set of times $0\leq t_1<t_2<\ldots<t_k\leq t$ and a Borel map $g:G^k\to \mathbb{C}$ ---here $G^k$ denotes the $k-$fold cartesian product of $G$, i.e. $G^k:=\underbrace{G\times \ldots\times G}_{ \mbox{$k$ times}}$. The function $f:\mathcal{H}_{x,t}(G)\to \mathbb{C}$ defined as
 \begin{equation}\label{cylinder-function}
 f(\gamma):=g(\gamma(t_1),\ldots,\gamma(t_k) ), \qquad \gamma\in \mathcal{H}_{x,t}(G)\,,
 \end{equation}
 is called a \textbf{cylinder function}. 
In the following, we are going to prove the integration formula \eqref{funz-cyl} for a specific class of cylinder functions, namely those obtained as products of finite energy functions:
\begin{align*}
    f(\gamma)=\phi_1(\gamma(t_1))\cdots\phi_k(\gamma(t_k))\,,
    \qquad
    \phi_1,\ldots,\phi_k\in F_G\,.
\end{align*}
The restriction to this subclass is motivated by the results of the previous section, \textit{cf.} Remark \ref{Rmk: finite energy assumption}.
More precisely, our Ansatz is the following formula for the Feynman map of a function  of the form \eqref{cylinder-function}:
 \begin{align}\label{int-cyl1}
     F_{\mathcal{H}_{x,t}(G)}(f)=\int_{G^k}
     \prod_{j=0}^k K_{t_j-t_{j-1}}(x_{j-1},x_j)
     g(x_1, \ldots, x_k)\mathrm{d}\mu_G(x_1)\cdots \mathrm{d}\mu_G(x_k)\,,
 \end{align}
 where $g(x_1,\ldots,x_k):=\phi_1(x_1)\cdots\phi_k(x_k)$, $\phi_1,\ldots,\phi_k\in F_G$, while we set $t_0=0$ and $x_0=x$ and denote by $K_t\in {\cal D}(G\times G)'$ the \textit{Schwartz kernel} of the unitary operator $U(t)= e^{\frac{i\hbar t}{2}\overline{\Delta_{\boldsymbol{g}}}}$.
 We stress that the existence of $K_t$ follows from the fact that $U(t)$, viewed as a map ${\cal D}(G) \to L^2(G,\mu_G) \subset {\cal D}(G)'$, is sequentially continuous in the relevant  ${\cal D}-{\cal D}'$ topologies so that the Schwartz kernel theorem applies \cite[Thm. 8.2.12]{Hormander_1990}.
  
 On account of the rather explicit form of $g$, the right hand side of \eqref{int-cyl1} takes the following form:
\begin{multline}\label{int-cyl2}
    \int_{G^k}
    \prod_{j=1}^k K_{t_{j}-t_{j-1}}(x_{j-1},x_j)\phi_j(x_j)
    \mathrm{d}\mu_G(x_1)\cdots \mathrm{d}\mu_G(x_k)
     \\
     =\left(U(t_1)\phi_1\cdots U(t_{k-2}-t_{k-1})\phi_{k-1}U(t_k-t_{k-1})\phi_k\right)(x)
\end{multline}

Theorem \ref{teoCherAppr}, see also equation \eqref{Eq: Chernoff approximation - with arbitrary sequence}, allows to approximate the action of the operator $U(s)$, with $s\geq 0$, on a finite energy function $\phi\in F_G$ in the following way
\begin{align*}
    U(s)\phi
    =\lim_ {n\to\infty}S(t/n)^{\lfloor ns/t\rfloor}\phi\,,
    \qquad
    S(t):= 
   \frac{1}{(2\pi i )^{d/2}}
    \int^o_{\mathbb{R}^{d}}
    e^{
    \frac{i}{2} \|x\|^2}
    e^{-i\sqrt{\hbar t} x\cdot X^R}
    \mathrm{d}x\,,
\end{align*}
where $t\geq0$ is arbitrary.
In particular \eqref{int-cyl2} can be rewritten in the following form
 \begin{multline}\label{mult-Cher}
\left(U(t_1)\phi_1\cdots U(t_{k-2}-t_{k-1})\phi_{k-1}U(t_k-t_{k-1})\phi_k\right)(x)\\
=\lim_{n_1\to\infty}\lim_{n_2\to\infty}\dots \lim_{n_k\to\infty}\big(S(t/n_1)^{\lfloor n_1 t_1/t\rfloor}\phi_1 S(t/n_2)^{\lfloor n_2(t_2- t_1)/t\rfloor}\phi_2\cdots \\ \cdots \phi_{k-1}S(t/n_k)^{\lfloor n_k(t_k- t_{k-1})/t\rfloor}\phi_k\big)(x)\,,
 \end{multline}

In the derivation of formula \eqref{mult-Cher}, we have taken advantage of the fact that the set of finite energy functions is an algebra under multiplication and it is closed under the action of the unitary operators $U(t)$, $t\in {\mathbb R}$, \textit{cf.} Proposition \ref{PROPALG}.

By Theorem \ref{teoCherAppr},  the bound $S(t)\leq Ae^{\omega t}$ for all $t\geq 0$ (see Remark \ref{Remark-properties-S(t)}), and  a simple inductive argument, the right hand side of \eqref{mult-Cher} can be replaced by 
 \begin{multline}\label{mult-Cher-1}
\left(U(t_1)\phi_1\cdots U(t_{k-2}-t_{k-1})\phi_{k-1}U(t_k-t_{k-1})\phi_k\right)(x)\\
=\lim_{n\to\infty}\big(S(t/n)^{\lfloor n t_1/t\rfloor}\phi_1 S(t/n)^{\lfloor n(t_2- t_1)/t\rfloor}\phi_2\cdots  \phi_{k-1}S(t/n)^{\lfloor n(t_k- t_{k-1})/t\rfloor}\phi_k\big)(x).
 \end{multline}
\begin{remark}\label{rem-appr-mult-Cher}
    More generally, the representation formula \eqref{mult-Cher-1} can be replaced by
    \begin{multline*}
\left(U(t_1)\phi_1\cdots U(t_{k-2}-t_{k-1})\phi_{k-1}U(t_k-t_{k-1})\phi_k\right)(x)\\
=\lim_{n\to\infty}\big(S(t/n)^{k_n(t_1)}\phi_1 S(t/n)^{k_n(t_2- t_1)}\phi_2\cdots  \phi_{k-1}S(t/n)^{k_n(t_k- t_{k-1})}\phi_k\big)(x).
 \end{multline*}
 where, for each $s\geq 0$, $\{k_n(s)\}_n$ is an increasing sequence of integers such that
     $$\lim_{n\to \infty }\frac{t}{n}k_n(s)=s\,.$$
\end{remark}
Considering now the Feynman map applied to $f$, by its very definition we find
$$F_{\mathcal{H}_{x,t}(G)}(f)=\lim_{n\to \infty}
    (2\pi i \hbar (t/n)^{-1} )^{-nd/2}\int_{\mathbb{R}^{nd}}^o
    e^{\frac{i}{2\hbar}\frac{t}{n}\sum_{j=1}^n\|v_j\|^2}\prod_{j=1}^k\phi_j (\gamma_{x,\boldsymbol{v}}(t_j))
    \mathrm{d}v_1\cdots \mathrm{d}v_n\,. $$

Fixed $n\geq 1$, let us set for any $j=1,\ldots k$ the integers $m_{j,n}:=\lfloor nt_j/t\rfloor$ in such a way that the time $t_j$ belongs to the partition subinterval $[m_{j,n}t/n, (m_{j,n}+1)t/n)$. By Equation \eqref{calcolo} we have:
$$\phi_j (\gamma_{x,\boldsymbol{v}}(t_j))=\prod_{l=1}^{m_{j,n}}e^{-i\frac{t}{n}v_l\cdot X^R}
     e^{-i\left(t_j-m_{j,n}\frac{t}{n}\right)v_{m_{j,n}+1}\cdot X^R}\phi_j(x)\,,$$
hence:
\begin{multline*}
    F_{\mathcal{H}_{x,t}(G)}(f)=\lim_{n\to \infty}
    (2\pi i \hbar (t/n)^{-1} )^{-nd/2}\int_{\mathbb{R}^{nd}}^o
    e^{\frac{i}{2\hbar}\frac{t}{n}\sum_{j=1}^n\|v_j\|^2}\\ \prod_{j=1}^k\prod_{l=1}^{m_{j,n}}e^{-i\frac{t}{n}v_l\cdot X^R}
     e^{-i\left(t_j-m_{j,n}\frac{t}{n}\right)v_{m_{j,n}+1}\cdot X^R}\phi_j(x)
    \mathrm{d}v_1\cdots \mathrm{d}v_n\,.
    \end{multline*}
    By theorem \ref{teo-osc-gau}, the  oscillatory integral above is equal to the Gaussian integral
   \begin{multline*}\lim_{n\to \infty}
    (2\pi   )^{-nd/2}\int_{\mathbb{R}^{nd}}
    e^{-\frac{1}{2}\sum_{j=1}^n\|x_j\|^2} \\ \prod_{j=1}^k\prod_{l=1}^{m_{j,n}}e^{-ie^{i\pi/4}\sqrt{\hbar\frac{t}{n}}x_l\cdot X^R}
     e^{-ie^{i\pi/4}\left(t_j-m_{j,n}\frac{t}{n}\right)\sqrt{\hbar \frac{n}{t}}x_{m_{j,n}+1}\cdot X^R}\phi_j(x)
    \mathrm{d}x_1\cdots \mathrm{d}x_n\,\end{multline*}
which, by the representation \eqref{rapp-gaus-int} for the 1-parameter family of operators $(S(t))_{t\geq 0}$, is equal to
\begin{multline}\label{FM-appr1}\lim_{n\to \infty} S(t/n)^{m_{1,n}}\int_{\mathbb{R}^{d}}\frac{e^{-\frac{1}{2}\|x_{m_{1,n}+1}\|^2}}{ (2\pi   )^{d/2}} e^{-ie^{i\pi/4}\left(t_1-m_{1,n}\frac{t}{n}\right)\sqrt{\hbar \frac{n}{t}}x_{m_{1,n}+1}\cdot X^R}\phi_1 \\ e^{-ie^{i\pi/4}\left((m_{j,n}+1)\frac{t}{n}-t_1\right)\sqrt{\hbar \frac{n}{t}}x_{m_{1,n}+1}\cdot X^R} S(t/n)^{m_{2,n}-m_{1,n}-1}\int_{\mathbb{R}^{d}}\frac{e^{-\frac{1}{2}\|x_{m_{2,n}+1}\|^2}}{ (2\pi   )^{d/2}}  \\ e^{-ie^{i\pi/4}\left(t_2-m_{2,n}\frac{t}{n}\right)\sqrt{\hbar \frac{n}{t}}x_{m_{2,n}+1}\cdot X^R}\phi_2 e^{-ie^{i\pi/4}\left((m_{2,n}+1)\frac{t}{n}-t_2\right)\sqrt{\hbar \frac{n}{t}}x_{m_{2,n}+1}\cdot X^R} S(t/n)^{m_{3,n}-m_{2,n}-1} \cdots \\
\cdots S(t/n)^{m_{k,n}-m_{k-1,n}-1}\int_{\mathbb{R}^{d}}\frac{e^{-\frac{1}{2}\|x_{m_{k,n}+1}\|^2}}{ (2\pi   )^{d/2}} e^{-ie^{i\pi/4}\left(t_k-m_{k,n}\frac{t}{n}\right)\sqrt{\hbar \frac{n}{t}}x_{m_{k,n}+1}\cdot X^R}\\ \phi_k(x) \mathrm{d}x_{m_{1,n}+1}\dots \mathrm{d}x_{m_{k,n}+1}\,.
 \end{multline}
By introducing the sequence of operators $(T_n(s))_n$, $s\geq 0$, on $F_G\times F_G$ defined by
\begin{align}\label{def-Tn}
    T_n(s)(\phi,\psi):=\int_{\mathbb{R}^{d}}\frac{e^{-\frac{1}{2}\|u\|^2}}{ (2\pi   )^{d/2}} e^{-ie^{i\pi/4}\left(s-\lfloor ns/t \rfloor\frac{t}{n}\right)\sqrt{\hbar \frac{n}{t}}u\cdot X^R}\phi e^{-ie^{i\pi/4}\left((\lfloor ns/t \rfloor+1)\frac{t}{n}-s\right)\sqrt{\hbar \frac{n}{t}}u\cdot X^R} \psi du\, ,
\end{align}
representation \eqref{FM-appr1} can be equivalently written as
\begin{multline}\label{FM-appr1-bis}\lim_{n\to \infty} S(t/n)^{m_{1,n}}T_n(t_1)\phi_1 S(t/n)^{m_{2,n}-m_{1,n}-1}T_n(t_2)\phi_2
\\
\cdots S(t/n)^{m_{k,n}-m_{k-1,n}-1}S(t_k-m_kt/n)\phi_k(x)\,,
\end{multline}
where we adopted the convention $T_n(s)(\phi,\psi)\equiv T_n(s)\phi\psi$.

Coming back to the right hand side of \eqref{int-cyl1},  by Remark  \ref{rem-appr-mult-Cher}, we can write the following representation formula 
    \begin{multline}\label{FM-appr2}
\left(U(t_1)\phi_1\cdots U(t_{k-2}-t_{k-1})\phi_{k-1}U(t_k-t_{k-1})\phi_k\right)(x)\\
=\lim_{n\to\infty}\big(S(t/n)^{m_{1,n}}\phi_1 S(t/n)^{m_{2,n}-m_{1,n}-1}\phi_2\cdots  \phi_{k-1}S(t/n)^{m_{k,n}-m_{k-1,n}-1}\phi_k\big)(x).
 \end{multline}
 On the other hand, since for any pair $\phi,\psi$ of finite energy functions and for any  $s\geq 0$ the following holds
 \begin{equation}\label{lim-Tn}
     \lim_{n\to \infty} T_n(s)(\phi,\psi)(x)=\phi \psi (x)\,,
 \end{equation}
 it is easy to prove by an inductive argument that the limits \eqref{FM-appr1} and \eqref{FM-appr2} coincide, namely: 
  \begin{multline}\label{limit-equivalence}
  \lim_{n\to\infty}\Big(S(t/n)^{m_{1,n}}\phi_1 S(t/n)^{m_{2,n}-m_{1,n}-1}\phi_2\cdots  \phi_{k-1}S(t/n)^{m_{k,n}-m_{k-1,n}-1}\phi_k \\ -S(t/n)^{m_{1,n}}T_n(t_1)\phi_1 S(t/n)^{m_{2,n}-m_{1,n}-1}T_n(t_2)\phi_2 \cdots S(t/n)^{m_{k,n}-m_{k-1,n}-1}S(t_k-m_kt/n)\phi_k
  \Big)=0\,.
  \end{multline}
We have thus proved the following result.
\begin{theorem}\label{teo-cyl-funct}
    Let $G$ be a compact Lie group endowed with a bi-invariant metric ${\boldsymbol{g}}$ and let $\Delta_{\boldsymbol{g}}$ denote
the   Laplace-Beltrami operator and $(U(t))_{t\in \bR}$ the group of unitary operators given by $U(t)= e^{\frac{i\hbar t}{2}\overline{\Delta_{\boldsymbol{g}}}  }$. Let $f: \mathcal{H}_{x,t}(G)\to \mathbb{C}$ be a cylinder function of the form $$f(\gamma)=\prod_{j=1}^k\phi_j(\gamma(t_j))\,\qquad \gamma \in \mathcal{H}_{x,t}(G)\,,$$
for some $k\geq 1$, $t_1,\ldots , t_k\in[0,t]$, $\phi_j\in F_G$ for $j=1,\dots, k$. Then the Feynman map $F_{\mathcal{H}_{x,t}(G)}$  applied to the function $f$ is equal to
$$F_{\mathcal{H}_{x,t}(G)}(f)=\left(U(t_1)\phi_1\cdots U(t_{k-2}-t_{k-1})\phi_{k-1}U(t_k-t_{k-1})\phi_k\right)(x)\,.$$
\end{theorem}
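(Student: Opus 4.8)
The plan is to chain together the identities established above. First I would invoke the definition of the Feynman map: by Definition~\ref{def-Fey-map-M} together with the Lie-algebra parametrization \eqref{Eq: Feynman integral parametrized by Lie algebra velocities}, $F_{\mathcal{H}_{x,t}(G)}(f)$ is the $n\to\infty$ limit of normalized finite-dimensional oscillatory integrals over $\mathbb{R}^{nd}$ with integrand $e^{\frac{i}{2\hbar}\frac{t}{n}\sum_j\|v_j\|^2}\prod_{j=1}^k\phi_j(\gamma_{x,\boldsymbol v}(t_j))$. Since each $\phi_j\in F_G$ lies in a finite orthogonal sum of eigenspaces of $\overline{\Delta_{\boldsymbol g}}$, Proposition~\ref{PROPRAPR} lets me apply \eqref{calcolo} inside one fixed finite-dimensional $\pi_R$-invariant subspace, writing $\phi_j(\gamma_{x,\boldsymbol v}(t_j))=\big(\prod_{\ell=1}^{m_{j,n}}e^{-i\frac{t}{n}v_\ell\cdot X^R}\big)e^{-i(t_j-m_{j,n}\frac{t}{n})v_{m_{j,n}+1}\cdot X^R}\phi_j(x)$ with $m_{j,n}=\lfloor nt_j/t\rfloor$, so that the junction time $t_j$ lands in the partition subinterval indexed by $m_{j,n}$.

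Next I would apply Theorem~\ref{teo-osc-gau} together with the Fubini-type identity \eqref{Fubini-type} to replace this oscillatory integral by an absolutely convergent Gaussian one, and identify the result with a product of the Chernoff blocks $S(t/n)$ of Theorem~\ref{teoCherAppr}, interspersed with the ``junction'' operators $T_n(t_j)$ of \eqref{def-Tn} and one trailing block $S(t_k-m_{k,n}t/n)$: this is precisely the representation \eqref{FM-appr1-bis}. On the other side of the claimed equality I would represent the right-hand side $\big(U(t_1)\phi_1\cdots U(t_k-t_{k-1})\phi_k\big)(x)$ of \eqref{int-cyl2} via Theorem~\ref{teoCherAppr} and the generalized Chernoff Theorem~\ref{Thm: generalized Chernoff theorem}: because $F_G$ is a $\pi_R$-invariant unital algebra closed under every $U(s)$ (Proposition~\ref{PROPALG}), each intermediate product stays inside one finite-energy subspace, so $S(t/n)^{m_{1,n}}\to U(t_1)$ and $S(t/n)^{m_{j,n}-m_{j-1,n}-1}\to U(t_j-t_{j-1})$ strongly there, and an inductive argument using the uniform bound $\|S(\tau)\|\le e^{\omega\tau}$ yields \eqref{FM-appr2}.

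It then remains to show that these two limits coincide, that is, \eqref{limit-equivalence}. The key lemma is the pointwise limit \eqref{lim-Tn}, $T_n(s)(\phi,\psi)(x)\to\phi\psi(x)$ for $\phi,\psi\in F_G$, which I would obtain by Taylor-expanding the two unitaries in \eqref{def-Tn}: the junction lengths $s-\lfloor ns/t\rfloor\frac{t}{n}$ and $(\lfloor ns/t\rfloor+1)\frac{t}{n}-s$ lie in $[0,t/n]$, so the corresponding exponents scale like $\sqrt{\hbar t/n}$; after the Gaussian moment identity $\int_{\mathbb{R}^d}u_j\,e^{-\|u\|^2/2}(2\pi)^{-d/2}\,\mathrm{d}u=0$ the leading term is $\big(\int e^{-\|u\|^2/2}(2\pi)^{-d/2}\,\mathrm{d}u\big)\phi\psi=\phi\psi$, while every remaining term carries a positive power of $\sqrt{t/n}$ against an integral bounded by $\int\|u\|^p e^{c\sqrt{\hbar t}\,l\|u\|}e^{-\|u\|^2/2}\,\mathrm{d}u$ (finite and uniform in $n\ge1$), hence vanishes -- the same estimate used to verify condition~(4) in the proof of Theorem~\ref{teoCherAppr}. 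Finally, a telescoping argument swapping $T_n(t_j)\phi_j$ for $\phi_j$, and $S(t_k-m_{k,n}t/n)\phi_k$ for $\phi_k$, one block at a time, reduces \eqref{limit-equivalence} to finitely many terms, each a product of operators whose norms are uniformly bounded in $n$ -- since $\|S(t/n)^m\|\le e^{|\omega|t}$ because $mt/n\le t$, and $\sup_n\|T_n(s)\|<\infty$ by the same Gaussian bound -- multiplied by a factor tending to $0$ by \eqref{lim-Tn} (for the $T_n$ insertions) and by strong continuity of $S$ at $0$ (for the trailing block). Hence $F_{\mathcal{H}_{x,t}(G)}(f)$ equals the product in \eqref{int-cyl2}, which is the asserted formula.

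The step I expect to be the main obstacle is this last telescoping estimate: one must control a product of $k$ operator blocks with bounds uniform in $n$, and make sure that every intermediate vector genuinely lives in a finite-dimensional $\overline{\Delta_{\boldsymbol g}}$-invariant (finite-energy) subspace, so that ``strong convergence'' and ``operator-norm bound'' are unambiguous. The algebra and invariance properties of $F_G$ recorded in Proposition~\ref{PROPALG} are exactly what guarantee this, and the quantitative $O(\sqrt{t/n})$ control of the junction operators $T_n(s)$ is what makes the accumulated error collapse in the limit.
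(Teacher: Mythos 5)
Your proposal is correct and follows essentially the same route as the paper: expand the cylinder function via \eqref{calcolo}, convert to a Gaussian integral by Theorem \ref{teo-osc-gau}, recognize the product of Chernoff blocks $S(t/n)$ with junction operators $T_n(t_j)$ as in \eqref{FM-appr1-bis}, represent the right-hand side by the generalized Chernoff theorem as in \eqref{FM-appr2}, and close the gap via \eqref{lim-Tn} and a telescoping/inductive estimate. The only difference is that you spell out the Taylor-expansion proof of \eqref{lim-Tn} and the uniform-norm bookkeeping that the paper leaves implicit, which is a useful elaboration rather than a different argument.
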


  \section{A perturbative solution of the Schr\"odinger equation}\label{sez-Schr-V}
  Let us consider now the Schr\"odinger equation \eqref{SchroedingerM} with a general potential $V\neq 0$ and study the construction of a representation for its strong solution in $L^2(G,\mu_G)$, i.e. a representation formula of the form:
  \begin{align}\label{EF-potentialV}
       \psi (t, x)
       =F_{\mathcal{H}_{x,t}(M)}(f)
       =\widetilde{\int}_{\mathcal{H}_{x,t}(G)} e^{\frac{i}{2\hbar}\int_0^t\boldsymbol{g}(\dot{\gamma}(s),\dot{\gamma}(s))\mathrm{d}s} f(\gamma)\mathrm{d}\gamma\,,
  \end{align}
  with $f:\mathcal{H}_{x,t}(G)\to {\mathbb C}$ being the map defined as
  \begin{equation}\label{map-f}
  f(\gamma)=\psi_0(\gamma (t))e^{-\frac{i}{\hbar}\int_0^tV(\gamma(s))\mathrm{d}s}\,, \qquad \gamma \in\mathcal{H}_{x,t}(G)\,.
  \end{equation} 
  In order to define the relevant Feynman maps, in the following we shall assume that both the initial datum $\psi_0$ and the potential $V$ belong to the algebra $F_G$ of finite energy functions. In particular, the map $V:G\to\mathbb{R}$ is continuous and bounded, hence the operator sum $ { -}\frac{\hbar^2}{2}\overline{\Delta_{\boldsymbol{g}}} { +}V$ is selfadjoint on $D(\overline{\Delta_{\boldsymbol{g}}})\subset L^2(G,\mu_G) $ and generates a strongly continuous unitary group $e^{-\frac{it}{\hbar}(\overline{{ -}\frac{\hbar^2}{2}\Delta_{\boldsymbol{g}} { +}V})}$.  With this notation, formula  \eqref{EF-potentialV} can be rephrased as $F_{\mathcal{H}_{x,t}(M)}(f) = \left(e^{-\frac{it}{\hbar}(\overline{{ -}\frac{\hbar^2}{2}\Delta_{\boldsymbol{g}} { +}V})}\psi_0\right)(x)$, where the equality has to be understood in the $L^2(G,\mu_G)$-sense.
  \\
As we have seen in the previous section, the techniques developed so far can be directly applied only to the case where $V=0$ and the initial datum $ \psi_0:G\to {\mathbb C} $ belongs to the algebra $F_G$ of finite energy functions. More generally, Theorem \ref{teo-cyl-funct} provides the existence and an explicit representation formula for the Feynman map $F_{\mathcal{H}_{x,t}(M)}(f)$ only in the case of particular cylinder functions $f: \mathcal{H}_{x,t}(G)\to\mathbb{C}$  constructed out of finite energy functions, while  the mapping \eqref{map-f} does not fall within those cases. In addition, it is important to point out that 
even if we assume that the potential  $V:G\to {\mathbb R}$ belongs to the set $F_G$, in general its exponential $e^{i V}$ will not have this property. In order to circumvent these issues, we will consider on the one hand the power series expansion of the exponential $e^{-\frac{i}{\hbar}\int_0^tV(\gamma(s))\mathrm{d}s}$ in \eqref{map-f} and the corresponding   expansion of the Feynman map $F_{\mathcal{H}_{x,t}(M)}(f)$:
  \begin{equation}\label{form-pert}
      \sum_{m=0}^\infty \frac{(-i/\hbar)^m}{m!} \widetilde{\int}_{\mathcal{H}_{x,t}(G)} e^{\frac{i}{2\hbar}\|\gamma\|^2_{H_t(\mathbb{R}^{d})}} \left(\int_0^tV(\gamma(s))\mathrm{d}s\right)^m\psi_0(\gamma (t))\mathrm{d}\gamma\, ,
  \end{equation}
and, on the other hand, the perturbative Dyson expansion of the solution $ \psi (t)\equiv e^{-\frac{it}{\hbar}(\overline{{ -}\frac{\hbar^2}{2}\Delta_{\boldsymbol{g}} { +}V})}\psi_0$ in $ L^2(G,\mu_G)$:
\begin{equation}\label{Dyson-series}
      \psi(t,x)= \sum_{m=0}^\infty (-i/\hbar)^m \psi_m(t,x)\,,
  \end{equation}
  with 
  \begin{equation}\label{Dyson-term-m}
      \psi_m(t,x)=\int_{\Delta_m}U(s_1)VU(s_2-s_1 )V\cdots VU(t-s_m)\psi_0(x)\mathrm{d}s_1\cdots \mathrm{d}s_m\,.
  \end{equation}
  In the formula above  $\Delta_m$ denotes the $m-$dimensional simplex $\Delta_m=\{(s_1,\dots, s_m)\in { \mathbb R}^m\, :\, 0\leq s_1\leq \dots \leq s_m\leq t\}$,  $U(t)= e^{\frac{i\hbar t}{2}\overline{\Delta_{\boldsymbol{g}}}  }$ and, with a slight abuse of notation, $V$ denotes the multiplication operator on $L^2(G\mu_G)$ associated to the bounded potential $V$.
 \begin{remark}\label{remark-Dyson-convergence}
  The perturbative expansion \eqref{Dyson-series} is a well-known result of the theory of operator semigroups and their perturbations. Under the assumption of the boundedness of the map $V$, it is is convergent in $L^2(G,\mu_G)$, \textit{cf.} \cite[Thm. X69]{Reed_Simon_1975}, \cite[Ch. IX, Th 2.1]{Kato}. In addition, since $F_G$ is an algebra of smooth functions and the unitary operators $(U(t))_{t\in \bR}$ map $F_G$ onto itself, the function $$x\mapsto U(s_1)VU(s_2-s_1 )V\cdots VU(t-s_m)\psi_0(x)$$ is still an element of $F_G$ and the expansion \eqref{Dyson-series} converges pointwise and provides a classical solution  of Equation \eqref{SchroedingerM}, \textit{cf.} \cite[Thm. 1, Rmk. 6]{Tha1} . \end{remark}
  
In the following we are going to show that every term in the - at this level still formal - series expansion \eqref{form-pert} above is a well defined Feynman map and its value provides the corresponding term in the Dyson perturbative expansion for the solution of the Schr\"odinger equation \eqref{SchroedingerM}. In particular we shall prove the following Ansatz:
\begin{equation}\label{ident-Dyson-1}
    \psi_m(t,x)= \frac{1}{m!}\widetilde{\int}_{\mathcal{H}_{x,t}(G)} e^{\frac{i}{2\hbar}\|\gamma\|_{\mathcal{H}_{x,t}(G)}^2} \left(\int_0^tV(\gamma(s))\mathrm{d}s\right)^m\psi_0(\gamma (t))\mathrm{d}\gamma
\end{equation}
where $\psi_m(t,x)$ is given by \eqref{Dyson-term-m}.

 \begin{remark}  \label{rem-fubini}
  By the results of the previous section, in particular 
  Equations \eqref{int-cyl1} and \eqref{int-cyl2}, the integrand on the right hand side of \eqref{Dyson-term-m} is equal to the Feynman map of a cylinder function, namely:
  \begin{multline*}
      U(s_1)VU(s_2-s_1 )V\cdots VU(t-s_m)\psi_0(x)
      \\
      =\widetilde{\int}_{\mathcal{H}_{x,t}(G)} e^{\frac{i}{2\hbar}\|\gamma\|_{\mathcal{H}_{x,t}(G)}^2} V(\gamma(s_1))\cdots V(\gamma(s_m))\psi_0(\gamma (t))\mathrm{d}\gamma\, .
  \end{multline*}
Hence, the proof of \eqref{ident-Dyson-1} reduces to proving the following Fubini-type theorem
  \begin{multline*}
     \widetilde{\int}_{\mathcal{H}_{x,t}(G)} e^{\frac{i}{2\hbar}\|\gamma\|_{\mathcal{H}_{x,t}(G)}^2} \left(\int_0^tV(\gamma(s))\mathrm{d}s\right)^m\psi_0(\gamma (t))\mathrm{d}\gamma\\
     =\int_0^t\dots\int_0^t \widetilde{\int}_{\mathcal{H}_{x,t}(G)} e^{\frac{i}{2\hbar}\|\gamma\|_{\mathcal{H}_{x,t}(G)}^2} V(\gamma(s_1))\cdots  V(\gamma(s_m)) \psi_0(\gamma (t))\mathrm{d}\gamma \mathrm{d}s_1\dots \mathrm{d}s_m\, ,
  \end{multline*}
  since, by the symmetry of the integrand, the following identity holds:
  \begin{multline*}
      \frac{1}{m!}\int_0^t\dots\int_0^t \widetilde{\int}_{\mathcal{H}_{x,t}(G)} e^{\frac{i}{2\hbar}\|\gamma\|_{\mathcal{H}_{x,t}(G)}^2} V(\gamma(s_1))\cdots  V(\gamma(s_m)) \psi_0(\gamma (t))\mathrm{d}\gamma \mathrm{d}s_1\dots \mathrm{d}s_m\\
      =\int_{\Delta_m} \widetilde{\int}_{\mathcal{H}_{x,t}(G)} e^{\frac{i}{2\hbar}\|\gamma\|_{\mathcal{H}_{x,t}(G)}^2} V(\gamma(s_1))\cdots  V(\gamma(s_m) )\psi_0(\gamma (t))\mathrm{d}\gamma \mathrm{d}s_1\dots \mathrm{d}s_m\,.
  \end{multline*}
 \end{remark} 

\begin{remark}\label{Rmk: comparison with literature}
     We point out that in \cite{Tha1} a similar result is obtained by means of a different technique. As in our case, the author requires that  both the initial datum $\psi_0$ and the potential $V$ belong to the algebra $F_G$ (the set of {\it trigonometric polynomials} according to the terminology he adopted, see also the proof of Proposition \ref{PROPALG} in Appendix \ref{APPENDIXPROOFS}) and provides a mathematical definition for the terms in the formal perturbative expansion
     \begin{align*}
        \sum_{n=0}^\infty \frac{(-i/\hbar)^m}{m!}\int_0^t\dots\int_0^t
        \int_{\gamma(0)=x}
        e^{\frac{i}{2\hbar}\|\gamma\|_{\mathcal{H}_{x,t}(G)}^2} V(\gamma(s_1))\cdots  V(\gamma(s_m))\psi_0(\gamma (t))\mathrm{d}\gamma \mathrm{d}s_1\dots \mathrm{d}s_m\,.     
     \end{align*}
    In our case, this would follow directly by Theorem \ref{teo-cyl-funct}, as explained in Remark \ref{rem-fubini} above. In addition, in the theorem \ref{teo-Fey-Dyson-G} we prove that for any $m\in \bN$ the (non-cylinder) function  $f:{\mathcal{H}_{x,t}(G)}\to {\mathbb C}$ defined as:
\begin{equation}\label{map-f-Vn}
    f(\gamma):=\left(\int_0^tV(\gamma(s))\mathrm{d}s\right)^m\psi_0(\gamma (t))\,.
\end{equation}
is integrable according to Definition \ref{def-Fey-map-M}  and the Ansatz \eqref{ident-Dyson-1} holds true. 
\end{remark}   
The main technical result of this section is the following theorem.
\begin{theorem}\label{teo-Fey-Dyson-G}Let $G$ be a compact Lie group endowed with a bi-invariant metric ${\boldsymbol{g}}$ and let $\Delta_{\boldsymbol{g}}$ denote
the   Laplace-Beltrami operator and $\{U(t)\}_{t\in \bR}$ the group of unitary operators given by $U(t)= e^{\frac{i\hbar t}{2}\overline{\Delta_{\boldsymbol{g}}}  }$. Let $f: \mathcal{H}_{x,t}(G)\to \mathbb{C}$ be a function of the form \eqref{map-f-Vn}, for some finite energy functions $\psi_0,V\in F_G$ on $G$. Then the Feynman map $F_{\mathcal{H}_{x,t}(G)}$  applied to the function $f$ is equal to
    $$ m!\int_{\Delta_m}U(s_1)VU(s_2-s_1 )V\cdots VU(t-s_m)\psi_0(x)\mathrm{d}s_1\cdots \mathrm{d}s_m\, .$$
\end{theorem}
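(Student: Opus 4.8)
The plan is to reduce the statement, via the elementary expansion of the $m$-th power of an integral, to Theorem~\ref{teo-cyl-funct} plus a Fubini-type interchange, essentially as sketched in Remark~\ref{rem-fubini}. For $h\colon\mathcal{H}_{x,t}(G)\to\mathbb{C}$ write $I_n(h)$ for the $n$-th Feynman approximant appearing in \eqref{Eq: Feynman integral parametrized by Lie algebra velocities}, so that $F_{\mathcal{H}_{x,t}(G)}(h)=\lim_{n\to\infty}I_n(h)$ whenever this limit exists. First I would record the pointwise identity, valid for every $\gamma\in\mathcal{H}_{x,t}(G)$ since $s\mapsto V(\gamma(s))$ is continuous on $[0,t]$,
\begin{equation*}
  f(\gamma)=\Big(\int_0^tV(\gamma(s))\,\mathrm{d}s\Big)^m\psi_0(\gamma(t))
  =m!\int_{\Delta_m}g_{\boldsymbol{s}}(\gamma)\,\mathrm{d}s_1\cdots\mathrm{d}s_m\,,\qquad
  g_{\boldsymbol{s}}(\gamma):=V(\gamma(s_1))\cdots V(\gamma(s_m))\,\psi_0(\gamma(t))\,.
\end{equation*}
For each fixed $\boldsymbol{s}=(s_1,\ldots,s_m)\in\Delta_m$ the function $g_{\boldsymbol{s}}$ is a cylinder function of the form covered by Theorem~\ref{teo-cyl-funct}, whence $F_{\mathcal{H}_{x,t}(G)}(g_{\boldsymbol{s}})=\big(U(s_1)VU(s_2-s_1)V\cdots VU(t-s_m)\psi_0\big)(x)$; all computations take place, as in that theorem, in a finite-dimensional subspace $\bigoplus_{\lambda\in\Lambda}H_\lambda$ with $\Lambda$ finite and large enough (possible since $F_G$ is an algebra stable under the $U(s)$, Proposition~\ref{PROPALG}). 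The theorem thus amounts to the interchange $F_{\mathcal{H}_{x,t}(G)}(f)=m!\int_{\Delta_m}F_{\mathcal{H}_{x,t}(G)}(g_{\boldsymbol{s}})\,\mathrm{d}\boldsymbol{s}$.

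I would establish this at the level of the finite-dimensional approximants. Fix $n$ and a Schwartz cutoff $\varphi$ with $\varphi(0)=1$, and let $R_\epsilon(\boldsymbol{s})$ be the regularization $(2\pi i\hbar(t/n)^{-1})^{-nd/2}\int_{\mathbb{R}^{nd}}e^{\frac{i}{2\hbar}\frac{t}{n}\sum_j\|v_j\|^2}g_{\boldsymbol{s}}(\gamma_{x,\boldsymbol{v}})\varphi(\epsilon\boldsymbol{v})\,\mathrm{d}\boldsymbol{v}$. Because $|g_{\boldsymbol{s}}(\gamma_{x,\boldsymbol{v}})|\le\|V\|_\infty^m\|\psi_0\|_\infty$ and $\varphi(\epsilon\,\cdot)$ is Schwartz, ordinary Fubini gives, for each $\epsilon>0$, that the analogous regularization built from $f$ equals $m!\int_{\Delta_m}R_\epsilon(\boldsymbol{s})\,\mathrm{d}\boldsymbol{s}$. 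Letting $\epsilon\downarrow0$ and running the argument of Theorem~\ref{teo-osc-gau} --- in the mild extension in which the product of unitary matrices $e^{-ic_\ell v_\ell\cdot X^R}$ carries interleaved bounded multiplications by $V$ and a final $\psi_0$: the integrand stays entire in $\boldsymbol{v}$, the contour rotation to the ray $e^{i\pi/4}\mathbb{R}_+$ and the arc estimates are unaffected, the extra factors contributing only the constant $\|V\|_\infty^m\|\psi_0\|_\infty$ --- I obtain, for each $\boldsymbol{s}$, that $R_\epsilon(\boldsymbol{s})\to I_n(g_{\boldsymbol{s}})$, that $I_n(g_{\boldsymbol{s}})$ is the absolutely convergent rotated Gaussian integral, and that $|R_\epsilon(\boldsymbol{s})|$ is dominated, uniformly in $\epsilon$ and $\boldsymbol{s}$, by a constant (depending on $n$) times $\|\widehat{\varphi}\|_{L^1}$. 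Dominated convergence over $\Delta_m$ then shows that $I_n(f)$ exists as an oscillatory integral, is independent of $\varphi$, and equals $m!\int_{\Delta_m}I_n(g_{\boldsymbol{s}})\,\mathrm{d}\boldsymbol{s}$.

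It remains to pass $n\to\infty$ inside $\int_{\Delta_m}\mathrm{d}\boldsymbol{s}$. For every $\boldsymbol{s}\in\Delta_m$ the proof of Theorem~\ref{teo-cyl-funct} (formulae \eqref{FM-appr1}--\eqref{limit-equivalence}) gives $I_n(g_{\boldsymbol{s}})\to\big(U(s_1)VU(s_2-s_1)V\cdots VU(t-s_m)\psi_0\big)(x)$ as $n\to\infty$. To apply dominated convergence I need $|I_n(g_{\boldsymbol{s}})|$ bounded uniformly in $n$ and $\boldsymbol{s}$; I would get this by writing $I_n(g_{\boldsymbol{s}})$ in the grouped form \eqref{FM-appr1-bis}, an ordered product of blocks $S(t/n)^k$ with $0\le k\le n$, of fractional-step operators $T_n(s_j)$ of \eqref{def-Tn}, and of multiplications by $V$, applied to $\psi_0$ and evaluated at $x$. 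Then $\|S(t/n)^k\|\le e^{\omega k t/n}\le e^{|\omega|t}$ by the bound $\|S(\tau)\|\le e^{\omega\tau}$ of Theorem~\ref{teoCherAppr}, and $\|T_n(s_j)\|\le 2$ for large $n$ uniformly in $s_j$ (same kind of estimate as in Theorem~\ref{teoCherAppr}, the leading $O(\sqrt{t/n})$ term cancelling by oddness of the Gaussian moments), so $|I_n(g_{\boldsymbol{s}})|\le e^{|\omega|t}\,2^m\,\|V\|_\infty^m\|\psi_0\|_\infty$ for all large $n$ and all $\boldsymbol{s}$. Dominated convergence over $\Delta_m$ --- whose boundary portion where some $s_j=s_{j+1}$ or $s_1=0$ or $s_m=t$ has measure zero and is harmless --- then gives
\begin{equation*}
  F_{\mathcal{H}_{x,t}(G)}(f)=\lim_{n\to\infty}I_n(f)=m!\int_{\Delta_m}\big(U(s_1)VU(s_2-s_1)V\cdots VU(t-s_m)\psi_0\big)(x)\,\mathrm{d}s_1\cdots\mathrm{d}s_m\,.
\end{equation*}

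The hard part is this last uniform-in-$(n,\boldsymbol{s})$ bound: the Gaussianized integrand defining $I_n(g_{\boldsymbol{s}})$ is a product of about $n$ matrix factors each of operator norm only $1+O(\sqrt{t/n})$, so a factorwise modulus estimate would blow up like $e^{c\sqrt{tn}}$; one must instead regroup into the semigroup blocks $S(t/n)^k$ and exploit the genuine bound $\|S(\tau)\|\le e^{\omega\tau}$ --- equivalently, the cancellation making $S(\tau)=I+\frac{i\hbar\tau}{2}\Delta_{\boldsymbol{g}}+O(\tau^2)$ rather than $I+O(\sqrt\tau)$ --- together with the uniform control of the fractional-step operators $T_n(s_j)$. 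Once those two estimates are in place, the two Fubini/dominated-convergence interchanges are routine.
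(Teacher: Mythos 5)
Your proposal is correct and follows essentially the same route as the paper's proof: decompose the $m$-th power into an integral over the simplex, Gaussianize via the contour rotation of Theorem \ref{teo-osc-gau}, apply finite-dimensional Fubini, regroup into the blocks $S(t/n)^k$ and fractional-step operators $T_n(s_j)$ of \eqref{funz-g-1}, and conclude by dominated convergence using the limits \eqref{FM-appr2} and \eqref{limit-equivalence}. The only noteworthy difference is that you make explicit the uniform-in-$(n,\boldsymbol{s})$ dominating bound $\|S(t/n)^k\|\le e^{|\omega|t}$ together with the uniform control of $T_n(s_j)$, which the paper's final appeal to dominated convergence leaves implicit.
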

\begin{proof}
    By definition, the Feynman map of the function \eqref{map-f-Vn} is given by
\begin{align}\label{int-V-n-1}
    F_{\mathcal{H}_{x,t}(G)}(f)=\lim_{n\to \infty}
    (2\pi i \hbar (t/n)^{-1} )^{-nd/2}\int_{\mathbb{R}^{nd}}^o
    e^{\frac{i}{2\hbar}\frac{t}{n}\sum_{j=1}^n\|v_j\|^2}f (\gamma_{x,\boldsymbol{v}})
    \mathrm{d}v_1\cdots \mathrm{d}v_n\,,
\end{align}
provided that the limit exists, where $\gamma_{x,\boldsymbol{v}}$ is the piecewise-geodesic path on $G$ starting at $x\in G$ and associated parameters $\boldsymbol{v}=(v_1,\ldots,v_n)$ (see Equation \eqref{gamma-x-v}). 
By setting $\delta \equiv t/n$, the finite dimensional oscillatory integrals appering on the right hand side of \eqref{int-V-n-1} can be written in the following form
\begin{multline}\label{int-V-n-2}
  \int_{\mathbb{R}^{nd}}^o
    \frac{e^{\frac{i}{2\hbar}\delta\sum_{j=1}^n\|v_j\|^2}}{(2\pi i \hbar \delta ^{-1} )^{nd/2}}\left(\sum_{k=0}^{n-1}\int_{k\delta}^{(k+1)\delta}V(\gamma_{x,\boldsymbol{v}}(s))\mathrm{d}s\right)^m\psi_0(\gamma_{x,\boldsymbol{v}}(t))
    \mathrm{d}v_1\cdots \mathrm{d}v_n  \\
    =\sum_{k_1,\dots k_m=0}^{n-1}\int_{\mathbb{R}^{nd}}^o
    \frac{e^{\frac{i}{2\hbar}\delta\sum_{j=1}^n\|v_j\|^2}}{(2\pi i \hbar \delta ^{-1} )^{nd/2}}\psi_0(\gamma_{x,\boldsymbol{v}}(t))\prod_{l=1}^m\left(\int_{k_l\delta}^{(k_{l}+1)\delta}V(\gamma_{x,\boldsymbol{v}}(s))\mathrm{d}s\right)\mathrm{d}v_1\cdots \mathrm{d}v_n\,.
\end{multline}
By Equation \eqref{calcolo},
each oscillatory integral appearing in the sum on the right hand side of the identity above can be equivalently written as:
\begin{multline*}
\int_{\mathbb{R}^{nd}}^o
    \frac{e^{\frac{i}{2\hbar}\delta\sum_{j=1}^n\|v_j\|^2}}{(2\pi i \hbar \delta ^{-1} )^{nd/2}}\prod_{j=1}^{n}e^{-i\delta v_j\cdot X^R}\psi_0(x)\\ \prod_{l=1}^m\left(\int_{k_l\delta}^{(k_{l}+1)\delta}\prod_{j_l=1}^{k_l}e^{-i\delta v_{j_l}\cdot X^R} e^{-i\left(s_l-k_l\delta\right)v_{k_{l}+1}\cdot X^R}V(x)\mathrm{d}s_l\right) \mathrm{d}v_1\cdots \mathrm{d}v_n\, ,
    \end{multline*}
    which can be turned into an absolutely convergent Gaussian integral by exploiting the same argument used in the proof of Theorem \ref{teo-osc-gau}, thus obtaining:
\begin{multline*}
\int_{\mathbb{R}^{nd}}
    \frac{e^{-\frac{1}{2\hbar}\delta\sum_{j=1}^n\|v_j\|^2}}{(2\pi i \hbar \delta ^{-1} )^{nd/2}}\prod_{j=1}^{n}e^{-ie^{i\pi/4}\delta v_j\cdot X^R}\psi_0(x)\\ \prod_{l=1}^m\left(\int_{k_l\delta}^{(k_{l}+1)\delta}\prod_{j_l=1}^{k_l}e^{-ie^{i\pi/4}\delta v_{j_l}\cdot X^R} e^{-ie^{i\pi/4}\left(s_l-k_l\delta\right)v_{k_{l}+1}\cdot X^R}V(x)\mathrm{d}s_l\right) \mathrm{d}v_1\cdots \mathrm{d}v_n\, .
    \end{multline*}
    By Fubini theorem the latter is equal to:    
    \begin{multline*}
\int_{k_1\delta}^{(k_{1}+1)\delta}\dots\int_{k_m\delta}^{(k_{m}+1)\delta}\int_{\mathbb{R}^{nd}}
    \frac{e^{-\frac{1}{2\hbar}\delta\sum_{j=1}^n\|v_j\|^2}}{(2\pi  \hbar \delta ^{-1} )^{nd/2}}\prod_{j=1}^{n}e^{-ie^{i\pi/4}\delta v_j\cdot X^R}\psi_0(x)\\ \prod_{l=1}^m\left(\prod_{j_l=1}^{k_l}e^{-ie^{i\pi/4}\delta v_{j_l}\cdot X^R} e^{-ie^{i\pi/4}\left(s_l-k_l\delta\right) v_{k_{l}+1}\cdot X^R}V(x)\right) \mathrm{d}v_1\cdots \mathrm{d}v_n\,\mathrm{d}s_1\dots \mathrm{d}s_m \,.
    \end{multline*}
    By summing over all possible values of $k_1,\dots k_m$ and exploiting the symmetry of the integrand over permutation of the variables $s_1,\ldots, s_m$,one can easily see that the oscillatory integral \eqref{int-V-n-2} turns out to be equal to
    \begin{equation}\label{int-V-n-3}
         m!\int_{\Delta_m} g_n(s_1,\dots,s_m) \mathrm{d}s_1\dots \mathrm{d}s_m
    \end{equation}
    where 
     \begin{multline}\label{funz-g}
       g_n(s_1,\dots,s_m):= \int_{\mathbb{R}^{nd}}
    \frac{e^{-\frac{1}{2\hbar}\delta\sum_{j=1}^n\|v_j\|^2}}{(2\pi  \hbar \delta ^{-1} )^{nd/2}} \prod_{j=1}^{n}e^{-ie^{i\pi/4}\delta v_j\cdot X^R}\psi_0(x)\\ \prod_{l=1}^m\left(\prod_{j_l=1}^{\lfloor s_l/\delta\rfloor}e^{-ie^{i\pi/4}\delta v_{j_l}\cdot X^R} e^{-ie^{i\pi/4}\left(s_l-\lfloor s_l/\delta\rfloor\delta \right) v_{\lfloor s_l/\delta\rfloor+1}\cdot X^R}V(x)\right)
    \mathrm{d}v_1\cdots \mathrm{d}v_n \,.
    \end{multline}
 By  reorganizing the factors appearing in the integral above, taking into account that $s_1<s_2<\dots <s_m<t$ and  that for $n$ sufficiently large the following holds: 
 \begin{equation}\label{cond-times}
     \lfloor s_j/\delta\rfloor\neq \lfloor s_{j+1}/\delta\rfloor\qquad \forall j=1,\ldots, m
 \end{equation}
 we obtain:
 \begin{multline*}
     \prod_{j=1}^{n}e^{-ie^{i\pi/4}\delta v_j\cdot X^R}\psi_0(x)\prod_{l=1}^m\left(\prod_{j_l=1}^{\lfloor s_l/\delta\rfloor}e^{-ie^{i\pi/4}\delta v_{j_l}\cdot X^R} e^{-ie^{i\pi/4}\left(s_l-\lfloor s_l/\delta\rfloor\delta \right) v_{\lfloor s_l/\delta\rfloor+1}\cdot X^R}V(x)\right)\\
     =\prod_{j=1}^{\lfloor s_1/\delta\rfloor}e^{-ie^{i\pi/4}\delta v_j\cdot X^R}
     e^{-ie^{i\pi/4}\left(s_1-\lfloor s_1/\delta\rfloor\delta \right) v_{\lfloor s_1/\delta\rfloor+1}\cdot X^R}V e^{-ie^{i\pi/4}\left(\delta(\lfloor s_1/\delta\rfloor+1)-s_1 \right) v_{\lfloor s_1/\delta\rfloor+1}\cdot X^R}\\ 
     \prod_{j=\lfloor s_1/\delta\rfloor+2}^{\lfloor s_2/\delta\rfloor}e^{-ie^{i\pi/4}\delta v_j\cdot X^R}
     e^{-ie^{i\pi/4}\left(s_2-\lfloor s_2/\delta\rfloor\delta \right) v_{\lfloor s_2/\delta\rfloor+1}\cdot X^R}V \cdots\\
     \cdots \prod_{j=\lfloor s_{m-1}/\delta\rfloor+2}^{\lfloor s_m/\delta\rfloor}e^{-ie^{i\pi/4}\delta v_j\cdot X^R}
     e^{-ie^{i\pi/4}\left(s_m-\lfloor s_m/\delta\rfloor\delta \right) v_{\lfloor s_m/\delta\rfloor+1}\cdot X^R}V\\ e^{-ie^{i\pi/4}\left(\delta (\lfloor s_m/\delta\rfloor+1)-s_m \right) v_{\lfloor s_m/\delta\rfloor+1}\cdot X^R} \prod_{j=\lfloor s_{m}/\delta\rfloor+2}^{n}e^{-ie^{i\pi/4}\delta v_j\cdot X^R}  \psi_0(x)
 \end{multline*}
 Hence, in this case the function \eqref{funz-g}  can be written as:
  \begin{multline}\label{funz-g-1}
        g_n(s_1,\dots, s_m)= S(t/n)^{\lfloor n s_1/t\rfloor}T_n(s_1)V S(t/n)^{\lfloor n s_2/t\rfloor-\lfloor n s_1/t\rfloor-1}T_n(s_2)V\cdots\\ \cdots S(t/n)^{\lfloor n s_m/t\rfloor-\lfloor n s_{m-1}/t\rfloor-2}T_n(s_m)VS(t/n)^{n-\lfloor s_{m}/\delta\rfloor-2}\psi_0(x)\,, 
    \end{multline}
    where $T_n$ was defined by \eqref{def-Tn}.
Hence, by applying Equation \eqref{FM-appr2} and Equation \eqref{limit-equivalence}, we obtain the following convergence result
\begin{equation*}
    \lim_{n\to \infty}g_n(s_1,\dots, s_m)=U(s_1)VU(s_2-s_1 )V\cdots VU(t-s_m)\psi_0(x)\,.
\end{equation*}
 By dominated convergence theorem, 
  we eventually get Equation \eqref{ident-Dyson-1}.

 \end{proof}
 We finally have the main result, whose proof is a direct consequence of Remark \ref{remark-Dyson-convergence} and Theorem \ref{teo-Fey-Dyson-G}.
 \begin{theorem}\label{Thm: Feynman map with potential}
     Under the assumptions of Theorem \ref{teo-Fey-Dyson-G}, the solution of the Schr\"odinger equation \eqref{SchroedingerM} is given by the power series \eqref{form-pert}. The series converges in $L^2(G,\mu_G)$ and pointwise, thus providing both a solution in $L^2(G,\mu_G)$  and a classical solution of \eqref{SchroedingerM}.
 \end{theorem}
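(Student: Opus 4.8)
\noindent
The plan is to recognise the power series \eqref{form-pert} as the Dyson perturbative series for the propagator generated by the self-adjoint operator $H:=-\tfrac{\hbar^2}{2}\overline{\Delta_{\boldsymbol{g}}}+V$, and then to verify its convergence both in $L^2(G,\mu_G)$ and pointwise. The first move is purely algebraic: by Theorem \ref{teo-Fey-Dyson-G}, for each $m\in\mathbb{N}$ the $m$-th summand of \eqref{form-pert}, i.e. $\tfrac{(-i/\hbar)^m}{m!}\,F_{\mathcal{H}_{x,t}(G)}(f_m)$ with $f_m$ the function \eqref{map-f-Vn}, equals $(-i/\hbar)^m\int_{\Delta_m}U(s_1)VU(s_2-s_1)V\cdots VU(t-s_m)\psi_0(x)\,\mathrm{d}s_1\cdots\mathrm{d}s_m$ --- precisely the $m$-th summand of the Dyson series \eqref{Dyson-series}, the factorials $m!$ cancelling. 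Hence \eqref{form-pert} and \eqref{Dyson-series} coincide term by term, and everything reduces to statements about \eqref{Dyson-series}; in particular this already establishes the Ansatz \eqref{ident-Dyson-1}.

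Next I would treat the $L^2$ convergence and the identification with the propagator. Since $G$ is compact and $V\in F_G\subset C(G)$ is bounded, multiplication by $V$ is a bounded self-adjoint operator on $L^2(G,\mu_G)$; by the Kato--Rellich theorem $H$ is then self-adjoint on $D(\overline{\Delta_{\boldsymbol{g}}})$ (using Propositions \ref{PROPESA} and \ref{PROPSPECD}), and by Stone's theorem $(e^{-\frac{it}{\hbar}H})_{t\in\mathbb{R}}$ is a strongly continuous unitary group. The standard Dyson--Phillips (interaction-picture) expansion for a bounded perturbation of a $C_0$-group, see e.g. \cite{EN}, reproduces exactly the series \eqref{Dyson-series}; each of its terms has operator norm at most $\dfrac{(\|V\|_\infty\,t/\hbar)^m}{m!}$ because every $U(\cdot)$ is unitary and $\mathrm{vol}(\Delta_m)=t^m/m!$, so the series converges absolutely in $\mathfrak{B}(L^2(G,\mu_G))$. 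Evaluating on $\psi_0$ shows that \eqref{form-pert} converges in $L^2(G,\mu_G)$ to $(e^{-\frac{it}{\hbar}H}\psi_0)(x)$, which by definition is the $L^2$-solution of \eqref{SchroedingerM} (strong since $\psi_0\in D(H)$); equivalently, $F_{\mathcal{H}_{x,t}(G)}$ realises the propagator as in \eqref{CONJ}/\eqref{con-Fey-M}, now with a potential.

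For the pointwise statement and the classical solution I would upgrade the topology. Each summand $x\mapsto U(s_1)V\cdots VU(t-s_m)\psi_0(x)$ lies in $F_G$, because $F_G$ is a unital subalgebra of $C(G)$ (Proposition \ref{PROPALG}) invariant under every $U(s)$ (Corollary \ref{COROLLARIO12}); in particular it is smooth. To bound it uniformly I would pass to the Hilbertian scale $\|f\|_k^2:=\sum_n(1+\lambda_n)^k\|P_nf\|_{L^2}^2$ built from the spectral decomposition of $-\overline{\Delta_{\boldsymbol{g}}}$ (Proposition \ref{PROPSPECD}): every $U(s)$ is a $\|\cdot\|_k$-isometry by functional calculus, multiplication by the smooth function $V$ is $\|\cdot\|_k$-bounded by standard elliptic estimates on the compact manifold $G$, and $\|f\|_\infty\le C_k\|f\|_k$ for $k>d/2$ by Sobolev embedding. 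Hence the $m$-th summand of \eqref{form-pert} is bounded in sup norm by $C_k(C_k'\|V\|_{C^k}/\hbar)^m\,t^m/m!\,\|\psi_0\|_k$, so \eqref{form-pert} converges absolutely in $\|\cdot\|_k$ for every $k$, hence uniformly together with all derivatives; its limit therefore belongs to the domain of every power of $\overline{\Delta_{\boldsymbol{g}}}$, i.e. to $C^\infty(G)$. The series may then be differentiated term by term, and since its sum is the $L^2$-solution $e^{-\frac{it}{\hbar}H}\psi_0$ found above, it is a classical solution of \eqref{SchroedingerM}.

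The genuinely delicate point is the last one: $U(s)$ is unitary on $L^2(G,\mu_G)$ but \emph{not} a contraction for $\|\cdot\|_\infty$, so a naive sup-norm estimate of the iterated expressions $U(s_1)V\cdots VU(t-s_m)\psi_0$ need not be summable; the Sobolev-scale detour (equivalently, exploiting the polynomial Weyl-type growth of $\sum_{\lambda_n\le N}d_{\lambda_n}$, which controls the embedding constant on truncated spectral subspaces) is what makes the pointwise convergence go through. Everything else is routine bookkeeping with the Dyson expansion and the unitarity of $U(\cdot)$.
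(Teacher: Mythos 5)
Your proposal is correct and its skeleton is the same as the paper's: Theorem \ref{teo-Fey-Dyson-G} identifies the $m$-th summand of \eqref{form-pert} with the $m$-th summand of the Dyson series \eqref{Dyson-series} (the $m!$ cancelling exactly as you say), and the problem reduces to the convergence of \eqref{Dyson-series}. For the $L^2$ part both you and the paper use the boundedness of $V$ on the compact group together with unitarity of $U(\cdot)$ and $\mathrm{vol}(\Delta_m)=t^m/m!$. Where you genuinely diverge is the pointwise statement: the paper disposes of it in one sentence, arguing that since $F_G$ is an algebra invariant under every $U(s)$, each summand $U(s_1)V\cdots VU(t-s_m)\psi_0$ lies in $F_G$, ``and the expansion converges pointwise and provides a classical solution.'' As you correctly sense, membership of each term in $F_G$ is not by itself enough, because repeated multiplication by $V$ enlarges the spectral support, so the terms do not sit in a single finite-dimensional $\bigoplus_{\lambda\in\Lambda}H_\lambda$ on which all norms would be equivalent; some uniform control is needed. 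Your detour through the Hilbert scale $\|f\|_k^2=\sum_n(1+\lambda_n)^k\|P_nf\|^2$ --- $U(s)$ a $\|\cdot\|_k$-isometry, multiplication by the smooth $V$ bounded on each $H^k$, Sobolev embedding for $k>d/2$ --- supplies exactly the missing uniform bound and yields absolute convergence in every $C^k$, hence smoothness of the sum and legitimacy of term-by-term differentiation. So your argument is, if anything, more complete than the paper's on this point; the price is the (standard but nontrivial) input of elliptic Sobolev theory on the compact manifold, which the paper avoids invoking explicitly.
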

 
 \section{Conclusions}  

The present paper develops the construction of infinite-dimensional oscillatory integrals on the space of paths of a compact Lie group with a bi-invariant metric, following the theory established in \cite{AlHK77,AlHKMa,ELT,ELT81}. This functional integral provides a rigorous mathematical formulation of the Feynman path integral representation \eqref{feynmanM} for the perturbative solution of the Schr\"odinger equation \eqref{SchroedingerM}, in the case where both the initial datum and the potential belong to the class of {\em finite energy functions}. The restriction to Lie groups with bi-invariant metrics and to finite energy functions enables the derivation of explicit and tractable formulae for the action of the Cartan maps on the path space, thereby furnishing a tool for studying infinite-dimensional limits. Our results can be the starting point for the application of the semiclassical asymptotics of the solution of Eq. \eqref{SchroedingerM} in the limit $\hbar\downarrow 0$ via the general theory of stationary phase method for infinite dimensional oscillatory integrals developed in \cite{AlHK77,AlBr}. Moreover, the results of Section 5 could be extended to the case of time dependent potentials, by generalizing the techniques applied in \cite{AlMa04}

Our techniques, however, cannot be directly extended to general Riemannian manifolds, where explicit and manageable expressions for the action of the Cartan map on smooth paths are unavailable. Nevertheless, analogous results can be obtained for manifolds $M$ that admit an explicit characterization of geodesics together with suitable symmetry properties. A first example of such an extension is given in \cite{DrMaPi}, where Feynman and Wiener integrals for the solution of the Schr\"odinger and the  heat equations on the Weyl–Heisenberg group are investigated.

 \section*{Acknowledgments}  This work 
 has been written within the activities of INdAM-GNAMPA and INdAM-GNFM.
 
\appendix 
\section{Proof of some propositions}\label{APPENDIXPROOFS}

In this section we briefly discuss the proofs of some well-known results on Lie groups which were recalled and used in the paper.

\begin{proposition}\label{PROPGEOD}  If $G$ is a   Lie group with a bi-invariant Riemannian metric $\boldsymbol{g}$, then
\begin{align}
    \boldsymbol{g}([\widetilde{X},\widetilde{Y}], \widetilde{Z}) + \boldsymbol{g}(\widetilde{Y}, [\widetilde{X},\widetilde{Z}]) =0
    \qquad \forall \widetilde{X},\widetilde{Y},\widetilde{Z} \in \mathfrak{g}^L\,.
    \label{USEFUL}
\end{align}
\end{proposition}

\begin{proof}
    The proof is based on \cite[\S 11, Lemma 3]{ON}.
    To begin with we observe that, on account of the bi-invariance of $\boldsymbol{g}$ and the left-invariance of $\widetilde{X},\widetilde{Y},\widetilde{Z}$, equation \eqref{USEFUL} is equivalent to
    \begin{align}
        \label{Eq: infinitesimal bi-invariance at the identity}
        \boldsymbol{g}_e([X,Y],Z) + \boldsymbol{g}_e(Y,[X,Z]) =0\,,
    \end{align}
    where $X,Y,Z\in\mathfrak{g}$ are the elements of the Lie algebra associated with $\widetilde{X},\widetilde{Y},\widetilde{Z}$.
    To prove equation \eqref{Eq: infinitesimal bi-invariance at the identity} we set, for $x\in G$, $C_x\colon G\to G$ defined by $C_x(y):=xyx^{-1}=(R_{x^{-1}}\circ L_x)(y)$.
    Notice that, for all $y\in G$, the differential $(\mathrm{d}C_x)_y\colon T_yG\to T_{xyx^{-1}}G$ is given by $(\mathrm{d}C_x)_y=(\mathrm{d}R_{x^{-1}})_x\circ(\mathrm{d}L_x)_y$.
    We set $\mathrm{Ad}_x:=(\mathrm{d}C_x)_e\colon\mathfrak{g}\to\mathfrak{g}$.
    In particular we have $\mathrm{Ad}_x=(\mathrm{d}R_{x^{-1}})_x\circ(\mathrm{d}L_x)_e$.

    Since $\boldsymbol{g}$ is bi-invariant we have
    \begin{align*}
        \boldsymbol{g}_e(\operatorname{Ad}_xY,\operatorname{Ad}_xZ)
        &=\boldsymbol{g}_e((\mathrm{d}R_{x^{-1}})_x\circ(\mathrm{d}L_x)_eY,(\mathrm{d}R_{x^{-1}})_x\circ(\mathrm{d}L_x)_eZ)
        \\
        &=\boldsymbol{g}_x((\mathrm{d}L_x)_eY,(\mathrm{d}L_x)_eZ)
        =\boldsymbol{g}_e(Y,Z)\,.
    \end{align*}
    Moreover, we recall that the flow of $\widetilde{X}\in\mathfrak{g}^L$ is given by $R_{x(t)}$, where $x(t):=\exp(tX)$.
    It follows that
    \begin{align*}
        [X,Y]=[\widetilde{X},\widetilde{Y}]_e
        =\lim_{t\to 0}\frac{(\mathrm{d}R_{x(-t)})_{x(t)}\widetilde{Y}_{x(t)}-Y}{t}
        =\lim_{t\to 0}\frac{\mathrm{Ad}_{x(t)}Y-Y}{t}
        =\frac{\mathrm{d}}{\mathrm{d}t}\operatorname{Ad}_{x(t)}Y\Big|_{t=0}\,,
    \end{align*}
    where $\widetilde{Y}_{x(t)}=(\mathrm{d}L_{x(t)})_eY$ by left-invariance.
    Thus, we find
    \begin{align*}
        \boldsymbol{g}_e([X,Y],Z) + \boldsymbol{g}_e(Y,[X,Z])
        &=\boldsymbol{g}_e(\frac{\mathrm{d}}{\mathrm{d}t}\operatorname{Ad}_{x(t)}Y\Big|_{t=0},Z)
        +\boldsymbol{g}_e(Y,\frac{\mathrm{d}}{\mathrm{d}t}\operatorname{Ad}_{x(t)}Z\Big|_{t=0})
        \\
        &=\frac{\mathrm{d}}{\mathrm{d}t}\boldsymbol{g}_e(\mathrm{Ad}_{x(t)}Y,\operatorname{Ad}_{x(t)}Z)\Big|_{t=0}
        =0\,,
    \end{align*}
    where in the last line we used the $\operatorname{Ad}_x$-invariance of $\boldsymbol{g}_e$.
\end{proof}

\begin{corollary}\label{COROLLARIO4}
    Let $G$ be a Lie group with left-invariant Riemannian metric $\boldsymbol{g}$ and let $\nabla$ be the associated Levi-Civita connection.
    Then
    \begin{align}\label{Eq: identity for left-invariant Levi-Civita connection}
        \boldsymbol{g}(Z,\nabla_XY)
        =\frac{1}{2}\boldsymbol{g}(Z,[X,Y])
        +\frac{1}{2}\mathcal{L}_Z(\boldsymbol{g})(X,Y)
        \qquad
        \forall X,Y,Z\in\mathfrak{g}^L\,,
    \end{align}
    where $\mathcal{L}_Z$ denotes the Lie derivative along $Z$.
    Moreover, the following identity
    \begin{align}
        \nabla_X Y = \frac{1}{2}[X,Y]
        \qquad
        \forall X,Y\in\mathfrak{g}^L
        \label{centralID}\:,
    \end{align}
    holds if and only if $\boldsymbol{g}$ is bi-invariant.
\end{corollary}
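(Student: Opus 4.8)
The plan is to read off both identities from the Koszul formula for the Levi--Civita connection, using that a left-invariant metric is \emph{constant} on left-invariant vector fields. Recall that for arbitrary smooth vector fields
\begin{multline*}
2\boldsymbol{g}(\nabla_XY,Z)=X\boldsymbol{g}(Y,Z)+Y\boldsymbol{g}(X,Z)-Z\boldsymbol{g}(X,Y)\\
+\boldsymbol{g}(\{X,Y\},Z)-\boldsymbol{g}(\{X,Z\},Y)-\boldsymbol{g}(\{Y,Z\},X)\,.
\end{multline*}
If $X,Y,Z\in\mathfrak{g}^L$ and $\boldsymbol{g}$ is left-invariant, then each of $\boldsymbol{g}(X,Y)$, $\boldsymbol{g}(Y,Z)$, $\boldsymbol{g}(X,Z)$ is the constant $\boldsymbol{g}_e(X(e),Y(e))$, etc.\ (apply the left-invariance identity with $y=e$ and a left-invariant field), so the first line of the Koszul formula vanishes. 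For the two remaining bracket terms I would invoke the Cartan formula $(\mathcal{L}_Z\boldsymbol{g})(X,Y)=Z\boldsymbol{g}(X,Y)-\boldsymbol{g}(\{Z,X\},Y)-\boldsymbol{g}(X,\{Z,Y\})$, which on left-invariant fields again loses its first term; then, using the antisymmetry $\{Z,X\}=-\{X,Z\}$ and the symmetry of $\boldsymbol{g}$, the combination $-\boldsymbol{g}(\{X,Z\},Y)-\boldsymbol{g}(\{Y,Z\},X)$ is seen to equal $(\mathcal{L}_Z\boldsymbol{g})(X,Y)$ (up to the sign convention adopted for $\mathcal{L}_Z$ and the Koszul formula). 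This is exactly \eqref{Eq: identity for left-invariant Levi-Civita connection}.

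For the equivalence \eqref{centralID}, consider first the ``if'' direction. If $\boldsymbol{g}$ is bi-invariant, Proposition~\ref{PROPGEOD} says the structure tensor is $\boldsymbol{g}$-skew, i.e.\ $\boldsymbol{g}(\{X,Y\},Z)=-\boldsymbol{g}(Y,\{X,Z\})$ for $X,Y,Z\in\mathfrak{g}^L$; plugging this and its relabellings into the reduced Koszul identity collapses its right-hand side to $\boldsymbol{g}(\{X,Y\},Z)$, so $\boldsymbol{g}(Z,\nabla_XY-\tfrac12\{X,Y\})=0$ for every $Z\in\mathfrak{g}^L$. Since the left-invariant fields form a basis of every tangent space, $\nabla_XY=\tfrac12\{X,Y\}$. (Equivalently, the Lie-derivative term in \eqref{Eq: identity for left-invariant Levi-Civita connection} vanishes because the flows of left-invariant fields are right-translations, which are isometries once $\boldsymbol{g}$ is right-invariant.)

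For the ``only if'' direction, suppose $\nabla_XY=\tfrac12\{X,Y\}$ for all $X,Y\in\mathfrak{g}^L$. Then \eqref{Eq: identity for left-invariant Levi-Civita connection} forces $(\mathcal{L}_Z\boldsymbol{g})(X,Y)=0$ for all $X,Y,Z\in\mathfrak{g}^L$, and since $\mathfrak{g}^L$ spans the tangent spaces this gives $\mathcal{L}_{\widetilde{V}}\boldsymbol{g}=0$ for every $V\in\mathfrak{g}$. As the flow of $\widetilde{V}$ is $t\mapsto R_{\exp(tV)}$ by Proposition~\ref{PROPO6}, the right-translations $R_{\exp(tV)}$ are isometries; such elements generate the identity component, and together with left-invariance this yields right-invariance (for disconnected $G$ one instead uses $\nabla_XX=\tfrac12\{X,X\}=0$, i.e.\ every integral curve of a left-invariant field is a $\boldsymbol{g}$-geodesic, and appeals to the last statement of Proposition~\ref{PROPO6}). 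Hence $\boldsymbol{g}$ is bi-invariant.

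I do not expect a genuine obstacle here: the whole argument is the Koszul identity plus the observation that left-invariance kills the derivative terms of the metric. The only thing requiring care is bookkeeping --- matching the sign of the $(\mathcal{L}_Z\boldsymbol{g})$ term against the conventions chosen for the Koszul formula and for the Lie derivative, and the clean handling of a possibly disconnected $G$ in the last step of the ``only if'' part, which is why I would route that step through Proposition~\ref{PROPO6}.
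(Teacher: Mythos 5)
Your proof is correct and follows essentially the same route as the paper's: the Koszul/metric-compatibility identity, the observation that left-invariance makes $\boldsymbol{g}(X,Y)$ constant along left-invariant fields so all derivative terms drop, the identification of the residual bracket terms with $\mathcal{L}_Z(\boldsymbol{g})(X,Y)$ combined with torsion-freeness, and the equivalence of $\mathcal{L}_Z\boldsymbol{g}=0$ for all $Z\in\mathfrak{g}^L$ with right-invariance via the flows $R_{\exp(tZ)}$. One caution: your parenthetical fallback for disconnected $G$ via the last statement of Proposition \ref{PROPO6} would be circular, since in the appendix that statement is itself deduced from the present corollary; the paper's own proof simply does not address disconnectedness at all.
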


\begin{proof}    For all $X,Y,Z\in\Gamma(G)$ we have
    \begin{multline*}
        X\boldsymbol{g}(Y,Z)
        +Y\boldsymbol{g}(X,Z)
        -Z\boldsymbol{g}(Y,Z)
        \\
        =\boldsymbol{g}(\nabla_XY+\nabla_YX,Z)
        +\boldsymbol{g}(Y,[X,Z])
        +\boldsymbol{g}(X,[Y,Z])\,.
    \end{multline*}
    Moreover, if $X,Y\in\mathfrak{g}^L$ then $\boldsymbol{g}_p(X_p,Y_p)=\boldsymbol{g}_e(X_e,Y_e)$, therefore $Z\boldsymbol{g}(X,Y)=0$ for all $Z\in\Gamma(TG)$.
    Thus, if $X,Y,Z\in\mathfrak{g}^L$, the previous equality reduces to
    \begin{align*}
        \boldsymbol{g}(\nabla_XY+\nabla_YX,Z)
        =-\boldsymbol{g}(Y,[X,Z])
        -\boldsymbol{g}(X,[Y,Z])
        =\mathcal{L}_Z(\boldsymbol{g})(X,Y)\,.
    \end{align*}
    Since $\nabla_XY-\nabla_YX=[X,Y]$, the latter equality proves \eqref{Eq: identity for left-invariant Levi-Civita connection}.
    Moreover, Equation \eqref{centralID} holds true if and only if $\mathcal{L}_Z(\boldsymbol{g})(X,Y)=0$ for all $X,Y,Z\in\mathfrak{g}^L$.
    Since $\mathfrak{g}^L\simeq\mathfrak{g}\simeq T_pG$, $p\in G$, the latter condition is equivalent to $\mathcal{L}_Z(\boldsymbol{g})=0$ for all $Z\in\mathfrak{g}^L$.
    This implies that $\boldsymbol{g}$ is invariant under the flow $\Phi^Z$ induced by any left-invariant vector field $Z\in\mathfrak{g}^L$.
    Since $\Phi^Z_t(x)=x\exp(tZ)=R_{\exp(tZ)}x$, the condition $(\Phi^Z_t)^*\boldsymbol{g}=\boldsymbol{g}$ for all $Z\in\mathfrak{g}^L$ is equivalent to right-invariance of $\boldsymbol{g}$.
    Thus, Equation \eqref{centralID} holds if and only if $\boldsymbol{g}$ is bi-invariant.
    \end{proof}
\begin{remark}\label{Rmk: Ricci curvature}
    Out of equation \eqref{centralID} one obtains neat expressions for the Riemann curvature tensor and the Ricci tensor.
    In particular the curvature tensor of the connection is given by
    \begin{align*}
	R(X,Y)Z=
	\Big[\nabla_{[X,Y]}
	-(\nabla_X\nabla_Y
        -\nabla_Y\nabla_X)\Big]Z
	=\frac{1}{4}[[X,Y],Z]\,,
\end{align*}
where $X,Y,Z\in\mathfrak{g}^L$.
The Ricci tensor is given by
\begin{align*}
	\operatorname{Ric}(X,Y)
	=\sum_{i=1}^d\boldsymbol{g}(R(X,e_i)Y,e_i)
	=\frac{1}{4}\sum_{i=1}^d\boldsymbol{g}([X,e_i],[Y,e_i])\,,
\end{align*}
where $\{e_i\}_{i=1}^d$ is an orthonormal basis of $\mathfrak{g}$ while in the last equality we used the bi-invariance of $\boldsymbol{g}$.
In particular, the scalar curvature is constant and given by
\begin{align*}
	R=\frac{1}{4}\sum_{i,j=1}^d\boldsymbol{g}([e_j,e_i],[e_j,e_i])\,.
\end{align*}
\end{remark}

\noindent {\bf Proof of Proposition \ref{PROPO6}}.
(a) 
Take $x\in G$ and $X\in \mathfrak{g}$. We consider $\gamma(t):=L_x\exp(tX)$ where $t\in \mathbb{R}$.
By construction,  $\Dot{\gamma}(t)=(\mathrm{d}L_x)_{\exp(tX)}\widetilde{X}(\exp(tX))=\widetilde{X}(\gamma(t))$ and therefore
\begin{align*}
    \nabla_{\Dot{\gamma}(t)}\dot{\gamma}(t)
    =\nabla_{\widetilde{X}}\widetilde{X}|_{\gamma(t)}
    =\frac{1}{2}[\widetilde{X},\widetilde{X}]|_{\gamma(t)}
    =0\,,
\end{align*} where we exploited (\ref{centralID}).
We have established that every curve $\mathbb{R}\ni t \mapsto L_x \exp(tX)$ is a  maximal geodesic with domain given by the whole real line exiting $x$ at $t=0$ with initial tangent vector $\widetilde{X}(x)$. Notice that, varying $X\in T_eG$, the vectors $\widetilde{X}(x)$
vary in the whole tangent space $T_xG$ because $(dL_x)_e$ is an isomorphism of vector spaces.
In view of the (existence and) uniqueness theorem of the Cauchy problem for the geodesic equation in $TG$, this result also entails that all $\boldsymbol{g}$-geodesics have the form above.  

(b)  Since $\Dot{\gamma}(t)=\widetilde{X}(\gamma(t))$ for every geodesic $\gamma$,
the proof of (a) also demonstrated that geodesics of a bi-invariant metric  are integral curves of left invariant vector fields.
In other words, if $\Phi^Z$ denotes the flow of a vector field $Z$ (thus it is complete if $Z$ is left-invariant), 
a $\boldsymbol{g}$-geodesic starting at $x\in G$ for $t=0$ can be written as
$\mathbb{R} \ni t\mapsto\Phi^{\widetilde{X}}_t(x)$.  Conversely, a complete integral curve $\mathbb{R} \ni t\mapsto\Phi^{\widetilde{X}}_t(x)$ of a left-invariant vector field $\widetilde{X}$ passing through $x\in G$ at $t=0$ can be re-written in the form $L_x\exp(tX)$ (in view of the uniqueness property of the Cauchy problem for the integral curves of $\widetilde{X}$)
and thus it is a $\boldsymbol{g}$-geodesic with maximal domain for (a). The last identity arises from
    $\Phi^{\widetilde{X}}_t(x)
    =x\exp(tX)=R_{\exp(tX)}x$.

Let us prove the last statement of Proposition \ref{PROPO6}.
Suppose that every complete integral curve $\gamma(t):=L_x\exp(tX)$ of every left-invariant field $\widetilde{X}$ is a ${\boldsymbol{g}}$-geodesic. Therefore, taking (\ref{Eq: identity for left-invariant Levi-Civita connection}) into account, for every $\widetilde{Z}\in \mathfrak{g}^L$,
\begin{align*}
  0=  {\boldsymbol{g}}(\widetilde{Z},\nabla_{\Dot{\gamma}(t)}\dot{\gamma}(t))
    ={\boldsymbol{g}} (\widetilde{Z},\nabla_{\widetilde{X}}\widetilde{X})
    =\frac{1}{2}{\boldsymbol{g}}(\widetilde{Z},[\widetilde{X},\widetilde{X}]) + \mathcal{L}_{\widetilde{Z}}(\boldsymbol{g})(\widetilde{X},\widetilde{X})\:.
\end{align*}
Therefore
$(\mathcal{L}_{\widetilde{Z}})_x(\boldsymbol{g})(\widetilde{X},\widetilde{X})=0$. This identity is valid at each point and for every left-invariant field $\widetilde{X}$.  By polarisation of ${\boldsymbol{g}}$, we immediately achieve
$$\mathcal{L}_{\widetilde{Z}}(\boldsymbol{g})(\widetilde{X},\widetilde{Y})=0\:, \quad \forall \widetilde{X},\widetilde{Y},\widetilde{Z} \in \mathfrak{g}^L\:.$$
As already observed in proof  of Corollary \ref{COROLLARIO4} this requirement is equivalent to the right-invariance of ${\boldsymbol{g}}$. \hfill $\Box$\\

\noindent {\bf Proof of Proposition \ref{PROP13}}. The first statement was established in
\cite[Lem. 6.3]{Milnor-76}. Regarding the second statement, observe that the condition  (\ref{USEFUL}) which holds when there is a bi-invariant metric $\boldsymbol{g}$ can be written $c_{sjk}+c_{skj}=0$ so that we have 
$g^{sk}c_{sjk}+g^{sk}c_{skj}=0$, which boils down to 
$2 {c_{kj}}^k=0$. This is Milnor's condition since due to the antisymmetry in the lower indices. \hfill $\Box$\\

\noindent {\bf Proof of Corollary \ref{remHAAR}}.
${c_{ik}}^k=0$ is satisfied as a consequence of (\ref{USEFUL}) written in components and $\mu_{\boldsymbol{g}}$ is bi-invariant by construction.
\hfill $\Box$\\

\noindent {\bf Proof of Proposition \ref{PROPCASIMIR}}.
The thesis is evidently equivalent to 
$[X^2_{\boldsymbol{g}}, \widetilde{X}_s] =0\quad s=1,\ldots, d$
where $[\cdot,\cdot]$ indicates the commutator of differential  operators on $C^\infty(G;\mathbb{C})$.
The written identity can be expanded to
$g^{ij}_e {c_{is}}^k\widetilde{X}_k\widetilde{X}_j + g^{ij}_e {c_{sj}}^k\widetilde{X}_i \widetilde{X}_k=0$ that is implied by  
$c_{jsk}+c_{skj}=0$.
In turn, in view of the antisymmetry in the first two indices, the identity above is the same as
$c_{sjk}+c_{skj}=0$.
This latter identity is nothing but (\ref{USEFUL}) in components, which is true because $\boldsymbol{g}$ is bi-invariant. \hfill $\Box$\\

\noindent {\bf Proof of Proposition \ref{PROPX2DELTA}}.
Consider a basis $X_1,\ldots,X_{d}$ of $\mathfrak{g}$.
Since the vector fields $\widetilde{X}_k$ are left invariant and the metric is bi-invariant, then for every $x\in G$ it holds $\boldsymbol{g}_x(\widetilde{X}_r(x),\widetilde{X}_s(x))=\boldsymbol{g}_e(X_r,X_s)$, that is, $(g_x)_{ab} X^a_r(x)X^b_s(x) = (g_e)_{rs}$ where $X_r^a(h)$ are the components of $\widetilde{X}_r(x)$ in a local chart around $x$. As a consequence
$g^{hk}_e  X^a_h(x)  X^b_k(x) = g^{ab}_x$ is also valid.
For $f\in C^\infty(G; \mathbb{C})$, in local coordinates 
$$X^2_{\boldsymbol{g}} f= g^{hk}_e \widetilde{X}_h \widetilde{X}_k f =
g^{hk}_e \nabla_{\widetilde{X}_h } \nabla_{\widetilde{X}_k } f= g^{hk}_e  X^a_h \nabla_a X^b_k \nabla_b f$$
$$= g^{hk}_e  X^a_h (\nabla_a X^b_k) \nabla_b f + g^{hk}_e  X^a_h  X^b_k \nabla_a (d f)_b \:.$$
Since $g^{hk}_e  X^a_h(x)  X^b_k(x) = g^{ab}_x$, the found result can be rearranged to
$$X^2_{\boldsymbol{g}} f = g^{hk}_e  X^a_h (\nabla_a X^b_k) \nabla_b f + \Delta_{\boldsymbol{g}}f = g^{hk} \nabla_{\nabla_{\widetilde{X}_h}\widetilde{X}_k} f + \Delta_{\boldsymbol{g}}f = \left(g_e^{hk}\nabla_{\widetilde{X}_h}\widetilde{X}_k\right) f +  \Delta_{\boldsymbol{g}}f\:.$$
Taking advantage of (\ref{centralID}), which holds because $\boldsymbol{g}$ is bi-invariant,  we have
that $$ \left(g_e^{hk}\nabla_{\widetilde{X}_h}\widetilde{X}_k\right)= \frac{1}{2} g_e^{hk} {c_{hk}}^l\widetilde{X}_l =0$$
since ${c_{hk}}^l=-{c_{kh}}^l$ whereas $g^{hk}=g^{kh}$. In summary
$X_{\boldsymbol{g}}^2f = \Delta_{\boldsymbol{g}}f$ concluding the proof. \hfill $\Box$\\

\noindent {\bf Proof of Corollary \ref{COROLLARIO12}}.
$\Delta_{\boldsymbol{g}}$ commutes with the left-invariant vector fields because this is a property of $X^2_{\boldsymbol{g}}$.
Regarding right-invariant vector fields, we can construct an analogous Casimir operator $X^2_R$ using right-invariant vector fields and referring to the Lie algebra $(\mathfrak{g}, [\cdot,\cdot]')$ where $[X,Y]'=-[X,Y]$.
This latter Lie algebra is isomorphic to $(\mathfrak{g},[\cdot,\cdot])$ by $X\mapsto -X$.
The statement of Proposition \ref{PROPCASIMIR} can also be proved for $X^2_R$ and the Lie structure arising from $[\cdot,\cdot]'$.
Indeed, for a given basis $X_1,\ldots, X_{d}$ of $\mathfrak{g}$ the structure constants of $[\cdot,\cdot]'$ are ${c'_{ij}}^k= -{c_{ij}}^k$, thus, $c'_{sjk}+c'_{skj}=0$ holds and according to the very proof of Proposition \ref{PROPCASIMIR}, this condition suffices to prove that $X^2_R$ commutes with all right-invariant vector fields.
Finally, referring to the \textit{same} basis of $\mathfrak{g}$, $X_1,\ldots, X_{d}$, used in the proof of Proposition \ref{PROPX2DELTA}, the same statement  can be proved to be true also for $X^2_R$ and the Lie structure $[\cdot,\cdot]'$ induced by the right invariant fields with the same proof,
 obtaining that $\Delta_{\boldsymbol{g}}= X^2_R$. Since $\Delta_{\boldsymbol{g}}= X^2_{\boldsymbol{g}}$  the proof ends.  \hfill $\Box$\\

\noindent {\bf Proof of Proposition \ref{PROPESA}}.
$(G,\boldsymbol{g})$ is complete because geodesics are integral curves of left invariant fields (Proposition \ref{PROPO6}), i.e. one parameter-subgroups which are complete by definition.   The volume form $\mu_{\boldsymbol{g}}$ induced by ${\boldsymbol{g}}$ coincides with the Haar measure $\mu_G$ (Corollary \ref{remHAAR}) which is bi-invariant because the group admits a bi-invariant metric.
The operator $\Delta_{\boldsymbol{g}}$ on complete Riemannian manifolds defined on smooth compactly supported functions is essentially self-adjoint on the $L^2$ space constructed out of the volume form generated by ${\boldsymbol{g}}$ as  can be proved is several different ways, for instance using the general approach by Chernoff \cite{ChernoffSA}.  \hfill $\Box$\\

\noindent {\bf Proof of Proposition \ref{PROPSPECD}}.
\ Properties (1)-(3)  are classic results for the Laplace-Beltrami operator with suitable domain
$\Delta_{\boldsymbol{g}}: C^\infty_c(M; \mathbb{C})\to L^2(M, \mu_{\boldsymbol{g}})$ in a smooth compact Riemannian manifold   \cite{Chavel}, so that they immediately generalises to our specific case. The  decompositions in (5) arises immediately from the spectral theorem for selfadjoint operators with pure-point spectrum (see, e.g. \cite{Moretti-17}). Property (5) easily  follows from the spectral decomposition of $\overline{\Delta_{\boldsymbol{g}}}$, taking (1), (2), and (4)  into account.
\hfill $\Box$\\

\noindent {\bf Proof of Proposition \ref{PROPRAPR}}.
(1) If $\Delta_{\boldsymbol{g}}$ is invariant under the isometries of the bi-invariant metric ${\boldsymbol{g}}$, in particular the right translations. If $\psi \in H_\lambda$, then it is smooth for Proposition \ref{PROPESA} and $$\overline{\Delta_{\boldsymbol{g}}} \pi_R(x) \psi = \Delta_{\boldsymbol{g}} \pi_R(x) \psi =\pi_R(x) \Delta_{\boldsymbol{g}} \psi = \lambda \pi_R(x)\psi\:,$$ so that $\pi_R(x) (H_\lambda) \subset H_\lambda$. As  $\pi_R$ is continuous and the Hilbert decomposition (\ref{DECH}) holds,  this leads to the decomposition of $\pi_R$ as in the thesis. Notice that, since  $H_\lambda$ is finite dimensional, it is irreducible under the unitary representation  $\pi_R$ or it is a direct (orthoghonal) sum of such finite dimensional representations.\\
(2) Let $\psi\in H_\lambda$. By hypothesis, $(\pi_R(\exp(tX)) \psi)(x) = \psi(x \exp(tX))$.
As a consequence
$$\lim_{t\to 0} \frac{(\pi_R(\exp(tX)) \psi)(x) - \psi(x)}{t} = (\widetilde{X}\psi)(x)\:.$$
Notice that, since all vectors $(\pi_R(\exp(tX)) \psi$ belongs to the closed subspace $H_\lambda$ which is finite dimensional, the limit is also valid in $L^2$ sense. The Stone theorem \cite{Moretti-17} implies that $\psi \in D(X^R)$ and $-iX^R \psi= X(\psi)$. Notice that, since $H_\lambda$ is closed because finite dimensional and  $\frac{(\pi_R(\exp(tX)) \psi) - \psi}{t}\in H_\lambda$ due to (1), also the limit of it belongs to $H_\lambda$, that is $X^R\psi \in H_\lambda$. We have proved (a) and (b).
Identity (c) immediately arises from (b), (\ref{OPLUSP}) and the fact that in finite dimensional vector spaces $e^{itA}x= \sum_{n=0}^{+\infty} \frac{(it)^nA^n}{n!}x$.
The last statement is an obvious consequence of (c).
This concludes the proof of (2).\\
(3) From (\ref{DECH2}) and standard constructions of spectral theory (see, e.g. \cite{Moretti-17}), we have that, if $Re(z)\geq 0$,
$$e^{-z \overline{\Delta_{\boldsymbol{g}}}} \psi = \sum_{n=0}^{+\infty} e^{-z\lambda_n} P_{n}\psi \:,\quad \forall \psi \in L^2(G, \mu_G)\:,$$
where the series converges in the topology of the Hilbert space. This decomposition immediately produces the thesis. 
\hfill $\Box$\\

\noindent {\bf Proof of proposition \ref{PROPALG}}.
Evidently, the constantly unit function $1$ belongs to the set $Span\{H_\lambda\:|\: \lambda \in \sigma(-\overline{\Delta_{\boldsymbol{g}}}) \}$ because $\Delta_{\boldsymbol{g}}1 =0$ so that $1\in H_0$. Furthermore, the considered linear span is closed with respect to the complex conjugation since $\Delta_{\boldsymbol{g}}$ is real.
To conclude it is sufficient to prove that $Span\{H_\lambda\:|\: \lambda \in \sigma(-\overline{\Delta_{\boldsymbol{g}}}) \}$ is closed with respect to the product and that it is dense in $C(G)$ in the norm $\|\cdot\|_\infty$. This is consequence of  the fact that $Span\{H_\lambda\:|\: \lambda \in \sigma(-\overline{\Delta_{\boldsymbol{g}}}) \}$ coincides with  the space ${\cal B}$ of \textit{trigonometric polynomials} on $G$ (Definition 7.6.1 in \cite{RT}) which is more generally given for topological compact groups.
 ${\cal B}$ is per definition made of the  finite  linear combination of \textit{continuous} functions $\phi_{ij}: G \to \mathbb{C}$ where $\phi$ is a representative for each equivalence class of topologically irreducible (finite dimensional according to the Peter-Weyl  theorem) unitary, strongly continuous, representations of $G$.
For every given $i,j \in \{1,\ldots, \dim(\phi)\}$, the functions $\{\phi_{ij}\}_j$, span a subspace $H_{\phi, i}\subset L^2(G;\mu_G)$
which is  invariant and irreducible under  $\pi_R$.
As a matter of fact, $\phi = \pi_R|_{H_{\phi, i}}$  up to unitary equivalence. 
 If $G$ is Lie and equipped with a bi-invariant Riemannian metric, for $\phi$ and $i$ fixed, all the functions $\phi_{ij}$ must belong to some space $H_\lambda$ and, as a consequence,  they are smooth for Proposition \ref{PROPSPECD}. Indeed,  
 $H_{\phi,i} \cap H_\lambda \neq \emptyset$ for some $\lambda$, since the direct orthogonal (Hilbert) sum of the $H_\lambda$ exhausts the whole Hilbert space in view of Proposition \ref{PROPSPECD}.
 On the other hand $H_{\phi,i} \cap H_\lambda$ must be invariant under $\pi_R$ by construction and thus it is an invariant (closed because we are working in finite dimensions)  subspace of $H_{\phi,i}$. Since this space is (topologically) irreducible, the only possibility is that $H_{\phi,i} \subset H_\lambda$ (the two spaces might coincide). This inclusion immediately implies that ${\cal B} \subset Span\{H_\lambda\:|\: \lambda \in \sigma(-\overline{\Delta_{\boldsymbol{g}}}) \}$.
 However, also the converse inclusion holds true.
 In fact, also the direct orthogonal (Hilbert) sum of the finite dimensional subspaces $H_{\phi,i}$ exhaust the whole Hilbert space for the Peter-Weyl theorem.
 Therefore  each $H_\lambda$ must intersect some $H_{\phi,i}$ and thus it includes $H_{\phi,i}$.
The space $H_{\phi,i}^{\perp_{H_\lambda}} \subset H_\lambda$ is still $\pi_R$ invariant (the proof is elementary since $\pi_R$ is unitary)  and it must intersect, and thus include, another space $H_{\phi',i'}$, and so on.
The process must end after a finite number of steps in view of the finite dimensionality of $H_\lambda$.
At the end of the game $H_\lambda$ turns out to be a finite orthogonal sum of suitable spaces $H_{\phi,i}$.
This fact implies the wanted remaining inclusion  ${\cal B} \supset Span\{H_\lambda\:|\: \lambda \in \sigma(-\overline{\Delta_{\boldsymbol{g}}}) \}$. As ${\cal B}$ is a dense unital subalgebra of $C(G)$ (Theorem 7.6.2 of \cite{RT} which is more generally valid for  compact topological groups),  the proof is over because the remaining statements have been already established or are obvious.  $\hfill \Box$\\

\end{document}